\title{Cooperation and Control in Delegation Games}
\author{
    Paper ID: 2127
    \affiliations
    \emails
}
\author{
Oliver Sourbut\thanks{Equal contribution. Correspondence to \href{mailto:oly@robots.ox.ac.uk}{oly@robots.ox.ac.uk} and \href{mailto:lewis.hammond@cs.ox.ac.uk}{lewis.hammond@cs.ox.ac.uk}. Published at IJCAI 2024, DOI: \url{https://doi.org/10.24963/ijcai.2024/26}.}
\and
Lewis Hammond\footnotemark[1]\And
Harriet Wood
\affiliations
University of Oxford
}
\begin{document}

\maketitle

\begin{abstract}
    Many settings of interest involving humans and machines -- from virtual personal assistants to autonomous vehicles -- can naturally be modelled as principals (humans) delegating to agents (machines), which then interact with each other on their principals' behalf.
    We refer to these multi-principal, multi-agent scenarios as \emph{delegation games}. 
    In such games, there are two important failure modes: problems of control (where an agent fails to act in line their principal's preferences) and problems of cooperation (where the agents fail to work well together).
    In this paper we formalise and analyse these problems, further breaking them down into issues of \emph{alignment} (do the players have similar preferences?) and \emph{capabilities} (how competent are the players at satisfying those preferences?).
    We show -- theoretically and empirically -- how these measures determine the principals' welfare, how they can be estimated using limited observations, and thus how they might be used to help us design more aligned and cooperative AI systems.
\end{abstract}

\section{Introduction}
\label{sec:intro}

With the continuing development of powerful and increasing general AI systems, we are likely to see many more tasks delegated to autonomous machines, from writing emails to driving us from place to place. Moreover, these machines are increasingly likely to come into contact with each other when acting on behalf of their human principals, whether they are virtual personal assistants attempting to schedule a meeting or autonomous vehicles (AVs) using the same road network. We refer to these multi-principal, multi-agent scenarios as \emph{delegation games}. The following example is shown in Figure \ref{fig:driving}.

\begin{example}
    \label{ex:driving}
    Two AVs can choose between routes on behalf of their passengers: $A$(utobahn) or $B$(eachfront). Their objectives are determined by the journey's speed and comfort (the passengers' objectives may differ, as shown). Each AV receives utility $6$ or $2$ for routes $A$ or $B$, respectively, with an additional penalty of $-3$ or $-2$ if both AVs choose the same route (due to congestion).
\end{example}

In delegation games there are two primary ways in which things can go wrong. First, a (machine) agent might not act according to the (human) principal's objective, such as when an AV takes an undesirable route -- a \emph{control} problem \cite{Russell2019}. Second, agents may fail to reach a cooperative solution, even if they are acting in line with their principals' objectives, 
such as when multiple AVs take the same route and end up causing congestion -- a \emph{cooperation} problem \cite{Doran1997,Dafoe2020}.

\begin{figure}[h]
    \centering
    \begin{subfigure}[b]{0.36\columnwidth}
        \centering
        \renewcommand\arraystretch{1.7}
        \begin{tabular}{c|c|c|c}
            \multicolumn{1}{c}{} & \multicolumn{1}{c}{$A$} & \multicolumn{1}{c}{$B$} &\\\cline{2-3}
            $A$ & \textcolor{orange}{3}, \textcolor{blue}{3} & \textcolor{orange}{6}, \textcolor{blue}{2} &\phantom{$A$}\\\cline{2-3}
            $B$ & \textcolor{orange}{2}, \textcolor{blue}{6} & \textcolor{orange}{0}, \textcolor{blue}{0} &\phantom{$B$}\\\cline{2-3}
        \end{tabular}
        \caption{}
    \label{fig:driving:a}
    \end{subfigure}    
    \begin{subfigure}[b]{0.36\columnwidth}
        \centering
        \renewcommand\arraystretch{1.7}
        \begin{tabular}{c|c|c|c}
            \multicolumn{1}{c}{} & \multicolumn{1}{c}{$A$} & \multicolumn{1}{c}{$B$} &\\\cline{2-3}
            $A$ & \textcolor{red}{2}, \textcolor{teal}{3} & \textcolor{red}{3}, \textcolor{teal}{3} &\phantom{$A$}\\\cline{2-3}
            $B$ & \textcolor{red}{4}, \textcolor{teal}{6} & \textcolor{red}{3}, \textcolor{teal}{0} &\phantom{$B$}\\\cline{2-3}
        \end{tabular}
        \caption{}
    \label{fig:driving:b}
    \end{subfigure}
    \begin{subfigure}[b]{0.25\columnwidth}
        \centering
        \includegraphics[width=0.8\textwidth]{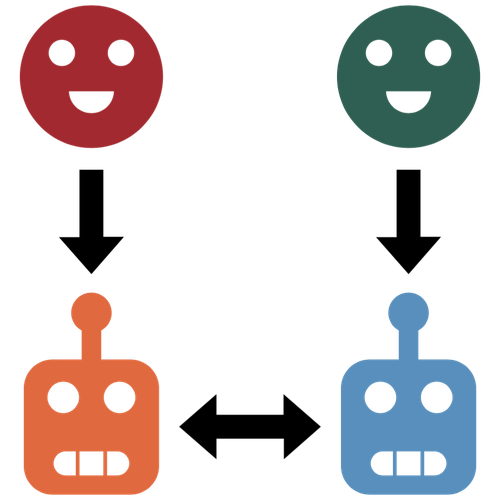}
        \caption{}
    \label{fig:driving:c}
    \end{subfigure}
    \caption{(a) The payoffs of the agents in Example \ref{ex:driving}; (b) the payoffs of the principals; and (c) a graphical representation, with vertical and horizontal arrows indicating control and cooperation, respectively.}
    \label{fig:driving}
\end{figure}

Control and cooperation can in turn be broken down into problems of \emph{alignment} and of \emph{capabilities} \cite{Hubinger2020a,Christiano2018c,Bostrom2014}.
For example, in the control failure above, the first AV might drive undesirably by taking route $A$ even though their passenger prefers the scenic beachfront (an alignment problem), or the second AV might undesirably take route $B$ because it is incapable of calculating the best route accurately (a capabilities problem).
Similarly, in the cooperation failure, the AVs might cause congestion because they cannot plan and communicate effectively enough (a capabilities problem), or because their objectives are fundamentally at odds with one another, e.g. they cannot both drive alone on route $A$ (an alignment problem).

As one might expect, ensuring good outcomes for the principals requires overcoming \emph{all} of these problems. This is made more challenging because most research considers each in isolation -- such as cooperation between agents with the same objective \cite{Torreno2017,Rizk2019,Du2023}, or alignment between a single principal and agent \cite{Kenton2021,Taylor2020,Russell2019} --
despite this being an increasingly unrealistic assumption for AI deployment. To ensure positive outcomes, we cannot rely on solutions to only \emph{some} of these problems.

\subsection{Contributions}

In this work we provide the first systematic investigation of these four failure modes and their interplay. More concretely, we make the following three core contributions: \emph{measures} for assessing each failure mode that satisfy a number of key desiderata (in Section \ref{sec:defs}); \emph{theoretical results} describing the relationships between these measures and principal welfare (in Section \ref{sec:theory}); and \emph{experiments} that validate these results and explore how the measures can be inferred from limited observations (in Section \ref{sec:exps}).
In doing so, we formalise and substantiate the intuition that solving all four of these problems is, in general, both necessary \emph{and} sufficient for good outcomes in multi-agent settings, which in turn has important implications for the project of building safe and beneficial AI systems.

\subsection{Related Work}

Given the foundational nature of the problems we study in this work, there is a vast amount of relevant prior research; due to space constraints we mention only a few exemplars on each topic.
The \emph{principal-agent literature} typically considers settings with a single principal and agent \cite{Laffont2002}. While there exist multi-agent variants such as competing mechanism games \cite{Yamashita2010,Peters2014}, to the best of our knowledge no previous work investigates the general requirements for high principal welfare.
Our setting is also similar to that of \emph{strategic delegation} \cite{Vickers1985,Sengul2011}, though we do not focus on principals' responses to each other's choice of agents.

The degree of alignment between two or more agents can be viewed as a measure of \emph{similarity between preferences}. Such measures have been introduced in areas such as mathematical economics \cite{Back1986}, computational social choice \cite{AlcaldeUnzu2015}, and reinforcement learning \cite{Gleave2021,Skalse2023}, though these works focus on either cooperation \emph{or} alignment.
In game theory, there are several classical values that measure the \emph{degree of (and costs from) competition}, such as the price of anarchy \cite{Koutsoupias1999} or coco value \cite{Kalai2013}.
Other works consider the robustness of these values under approximate equilibria \cite{Awasthi2010,Roughgarden2015}. We take inspiration from these ideas, extending them to settings in which the game we study is a \emph{proxy} for the game whose value we truly care about.

There have been several proposals for how to formally measure the capabilities of an agent. These include formal definitions of, e.g. \emph{general} intelligence \cite{Legg2007}, \emph{social} intelligence \cite{InsaCabrera2012}, and \emph{collective} intelligence \cite{Woolley2010}. In this work, we focus on how capabilities at the individual and collective level can be distinguished and how they combine with alignment to impact welfare.
Similarly, definitions of \emph{cooperation} also abound (see \cite{Tuomela2000} for an overview) though these are often informal and/or inconsistent \cite{West2007}, whereas we require a mathematical formalisation appropriate for applications to AI systems.

Finally, when it comes to estimating such properties from data, there is a large literature on the problem of \emph{preference elicitation/learning} \cite{Fuernkranz2011,Fischhoff2000}, including in general-sum games \cite{Conen2001,Gal2004,Yu2019}. 
The latter setting is also studied in \emph{empirical game theory} \cite{Wellman2006,Walsh2002,Waugh2011,Kuleshov2015}, including in recent work on inferring properties such as the price of anarchy \cite{Cousins2023}.
Other works attempt to quantify the capabilities of (reinforcement learning) agents by assessing their generalisation to different environments \cite{Cobbe2019} or co-players \cite{Leibo2021}.
While such ideas are not our focus, we make use of them in our final set of experiments.

\section{Background}
\label{sec:background}

In general, we use uppercase letters to denote sets, and lowercase letters to denote elements of sets or functions. We use boldface to denote tuples or sets thereof, typically associating one element to each of an ordered collection of players.
Unless otherwise indicated, we use superscripts to indicate an agent $1 \leq i \leq n$ and subscripts $j$ to index the elements of a set; for example, player $i$'s $j^{\th}$ (pure) strategy is denoted by $s^i_j$.
We also use the $\hat{}$ symbol to represent principals; for example, the $i^{\th}$ principal's utility function is written $\uHat^i$.
A notation table can be found for reference in Appendix \ref{app:notation}.

\begin{definition}
    A (strategic-form) \textbf{game} between $n$ players is a tuple $G = (\bm{S}, \bm{u})$ where $\bm{S} = \bigtimes_i S^i$ is the product space of (pure) strategy sets $S^i$ and $\bm{u}$ contains a utility function $u^i : \bm{S} \to \mathbb{R}$, for each agent $1 \leq i \leq n$.
    We write $s^i \in S^i$ and $\bm{s} \in \bm{S}$ to denote \textbf{pure strategies} and \textbf{pure strategy profiles}, respectively.
    A \textbf{mixed strategy} for player $i$ is a distribution $\sigma^i \in \Sigma^i$ over $S^i$, and a \textbf{mixed strategy profile} is a tuple $\bm{\sigma} = (\sigma^1, \ldots, \sigma^n) \in \bm{\Sigma} \coloneqq \bigtimes_i \Sigma^i$. 
    We will sometimes refer to pure strategy profiles in strategic-form games as \textbf{outcomes}.
    We write $\bm{s}^{-i} \in \bm{S}^{-i} \coloneqq \bigtimes_{j \neq i} S^j$ and therefore $\bm{s} = (\bm{s}^{-i}, s^i)$, with analogous notation for mixed strategies. We also abuse notation by sometimes writing $u^i(\bm{\sigma}) \coloneqq \expect_{\bm{\sigma}} \big[ u^i(\bm{s}) \big]$ and $\bm{u}(\bm{\sigma}) \coloneqq \big( u^1(\bm{\sigma}), \ldots, u^n(\bm{\sigma}) \big) \in \mathbb{R}^n$.
\end{definition}

Formally, a \emph{solution concept} maps from games $G$ to subsets of the mixed strategy profiles $\bm{\Sigma}$ in $G$. These concepts pick out certain strategy profiles based on assumptions about the (bounded) rationality of the individual players. The canonical solution concept is the Nash equilibrium.

\begin{definition}
    Given some $\bm{\sigma}^{-i}$ in a game $G$, $\sigma^i$ is a \textbf{best response} ($\BR$) for player $i$ if $u^i(\bm{\sigma}) \geq \max_{\tilde{\sigma}^i} u^i(\bm{\sigma}^{-i}, \tilde{\sigma}^i)$.
    We write the set of best responses for player $i$ as $\BR(\bm{\sigma}^{-i}; G)$. 
    A \textbf{Nash equilibrium} ($\NE$) in a game $G$ is a mixed strategy profile $\bm{\sigma}$ such that $\sigma^i \in \BR(\bm{\sigma}^{-i}; G)$ for every player $i$.
    We denote the set of $\NE$s in $G$ by $\NE(G)$.
\end{definition}

A \emph{social welfare function} $w : \mathbb{R}^n \to \mathbb{R}$ in an $n$-player game maps from payoff profiles $\bm{u}(\bm{s})$ to a single real number, aggregating players' payoffs into a measure of collective utility. We again abuse notation by writing $w(\bm{s}) \coloneqq w \big( \bm{u}(\bm{s}) \big)$ and $w(\bm{\sigma}) \coloneqq \expect_{\bm{\sigma}} \big[ w(\bm{s}) \big]$. In the remainder of this paper, we assume use of the following social welfare function, though the concepts we introduce do not heavily depend on this choice.

\begin{definition}
    Given a strategy profile $\bm{s}$, the \textbf{average utilitarian social welfare} is given by $w(\bm{s}) = \frac{1}{n} \sum_i u^i(\bm{s})$.
\end{definition}

\section{Delegation Games}
\label{sec:delegation-games}

In this work, we make the simplifying assumption that there is a one-to-one correspondence between principals and agents, and that each principal delegates fully to their corresponding agent (i.e. only agents can take actions). Our basic setting of interest can thus be characterised as follows.

\begin{definition}
    A (strategic-form) \textbf{delegation game} with $n$ principals and $n$ agents is a tuple $D = (\bm{S}, \bm{u}, \bm{\uHat})$, where $G \coloneqq (\bm{S}, \bm{u})$ is the game played by the agents, and $\GHat \coloneqq (\bm{S}, \bm{\uHat})$ is the game representing the principals' payoffs as a function of the agents' pure strategies.
\end{definition}

We refer to $G$ as the \emph{agent game} and $\GHat$ as the \emph{principal game}. For instance, $G$ and $\GHat$ from Example \ref{ex:driving} are shown in Figures \ref{fig:driving:a} and \ref{fig:driving:b} respectively. When not referring to principals or agents specifically, we refer to the \emph{players} of a delegation game. We denote the welfare in a delegation game for the principals and agents as $\wHat(\bm{s})$ and $w(\bm{s})$, respectively.

\begin{definition}
    Given a game $G$ and a social welfare function $w$, we define the \textbf{maximal (expected) welfare} achievable under $w$ as $w_\star(G) \coloneqq \max_{\bm{\sigma}} w(\bm{\sigma})$. We define the \textbf{ideal welfare} under $w$ as $w_+(G) \coloneqq w(\bm{u}_+)$ where $\bm{u}_+[i] = \max_{\bm{s}} u^i(\bm{s})$. 
    Similarly, we denote $w_\bullet(G) \coloneqq \min_{\bm{\sigma}} w(\bm{\sigma})$ and $w_-(G) \coloneqq w(\bm{u}_-)$ where $\bm{u}_-[i] = \min_{\bm{s}} u^i(\bm{s}).$
    We extend these definitions to \emph{delegation} games $D$ by defining $w_\dagger(D) \coloneqq w_\dagger(G)$ and $\wHat_\dagger(D) \coloneqq w_\dagger(\GHat)$, for $\dagger \in \{\star, +, \bullet, -\}$. When unambiguous, we omit the reference to $D$.
\end{definition}

Note that the maximal and ideal welfare may not be equivalent. For instance, in Example \ref{ex:driving} we have $w_\star = 4$ but $w_+ = 6$. The former is the maximum achievable among the available outcomes, while the latter is what would be achievable if all principals were somehow able to receive their maximal payoff simultaneously. We return to this distinction, represented graphically in Figure \ref{fig:welfare-line}, later. In general, our dependent variables of interest will be the principals' welfare regret $\wHat_\star - \wHat(\bm{\sigma})$ and the difference $\wHat_+ - \wHat_\star$.

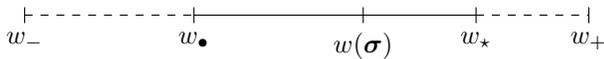
\begin{figure}[h]
    \centering
    \def\len{0.75}
    \begin{tikzpicture}
        \draw[dashed] (0,0) -- (3*\len,0);
        \draw (3*\len,0) -- (8*\len,0);
        \draw[dashed] (8*\len,0) -- (10*\len,0);
        \draw (0,3pt) -- (0,-3pt) node[below] {$w_-$};
        \draw (3*\len,3pt) -- (3*\len,-3pt) node[below] {$w_\bullet$};
        \draw (6*\len,3pt) -- (6*\len,-3pt) node[below] {$w(\bm{\sigma})$};
        \draw (8*\len,3pt) -- (8*\len,-3pt) node[below] {$w_\star$};
        \draw (10*\len,3pt) -- (10*\len,-3pt) node[below] {$w_+$};
    \end{tikzpicture}
    \caption{The range of social welfares in a game $G$.
    }
    \label{fig:welfare-line}
\end{figure}

\section{Control and Cooperation}
\label{sec:defs}

In Section \ref{sec:intro}, we distinguished between \emph{alignment} and \emph{capabilities} as contributors to the level of both control and cooperation. Our goal is to investigate how variations in the alignment and capabilities of agents impact the welfare of the principals. We therefore require ways to measure these concepts. Before doing so, we put forth a set of natural desiderata that we argue \emph{any} any such measures should satisfy.\footnote{Formalisations of these desiderata are provided in Section \ref{sec:desiderata}.}

\begin{enumerate}
    \item[(D1)] Alignment and capabilities -- both individual and collective -- are all `orthogonal' to one another in the sense that they can be instantiated in arbitrary combinations.\label{D1}
    
    \item[(D2)] Two players are perfectly individually aligned (misaligned) if and only if they have identical (opposite) preferences. Two or more players are perfectly \emph{collectively} aligned  if and only if they have identical preferences.\label{D2}

    \item[(D3)] If a set of agents are maximally capable, they achieve maximal \emph{agent} welfare. If they are also maximally individually aligned, then maximal \emph{principal} welfare is also achieved.
    \label{D3}

    \item[(D4)] If a set of players are perfectly collectively aligned, then their maximal welfare is their ideal welfare.\label{D4}

    \item[(D5)] Individual measures are independent of any transformations to the game that preserve \emph{individual} preferences, and collective measures are independent of any transformations that preserve \emph{collective} preferences (as captured by some measure of social welfare).\label{D5} 
    
\end{enumerate}

\subsection{Control}

We begin by considering the control of a single agent by a single principal. In essence, we wish to capture the degree to which an agent is acting in line with its principal's preferences. As we -- and others \cite{Bostrom2014,Christiano2018c,Hubinger2020a,Armstrong2018} -- have noted, this can be decomposed into a question of: a) how similar the agent's preferences are to the principal's; and b) how capable the agent is of pursuing its preferences.

\subsubsection{Alignment}
\label{sec:individual-alignment}

How can we tell if principal $i$'s and agent $i$'s preferences are similar? First, we must be more precise about what we mean by preferences.
Following our assumption that agents may play stochastically, we view preferences as orderings over \emph{distributions} of outcomes $\bm{\sigma} \preceq \bm{\sigma}' \Leftrightarrow u(\bm{\sigma}) \leq u(\bm{\sigma}')$.\footnote{Note that in strategic-form games, (mixed) strategy profiles are equivalent to distributions over the domain of players' utility functions, but in general this need not be the case.} 
To compare the preferences of principal $i$ ($\hat{\preceq}^i$) and agent $i$ ($\preceq^i$) we can therefore compare $\uHat^i$ and $u^i$. It is well-known, however, that the same preferences can be represented by different utility functions. In particular, $u$ and $u'$ represent the same preferences if and only if one is a positive affine transformation of the other \cite{MasColell1995}.

In order to meaningfully measure the difference between two utility functions, therefore, we must map each to a canonical element of the equivalence classes induced by the preferences they represent. We define such a map using a \emph{normalisation function} $\nu : U \to U$, where $U \coloneqq \R^{\abs{\bm{S}}}$ represents the space of utility functions in a game $G$. There are many possible choices of normalisation function, but in essence they must consist of a constant shift $c$ and a multiplicative factor $m$ \cite{Tewolde2021}, which together define the affine relationship. To satisfy our desiderata, we place additional requirements on $m$ and $c$, as follows.

\begin{definition}
    \label{def:normalisation}
    For each player $i$ in $G$, we define their \textbf{normalised utility function} $\nu(u^i) = u^i_\nu$ as:
    $$u^i_\nu \coloneqq \begin{cases}
        0 & \text{ if } m \big( u^i - c(u^i) \bm{1} \big) = 0\\
        \frac{u^i - c(u^i) \bm{1}}{m ( u^i - c(u^i) \bm{1} )} & \text{ otherwise,} 
    \end{cases}$$
    where $c : U \to \R$ is an affine-equivariant function and $m : U \to \R$ is any strictly convex norm.
\end{definition}

For notational convenience, we sometimes write $c^i \coloneqq c(u^i) \bm{1}$ and $m^i \coloneqq m(u^i - c^i)$, with equivalent notation $\hat{c}^i$ and $\hat{m}^i$ when applied to $\uHat^i$, resulting in $\uHat^i_\nu$. Then $u^i = m^i u_\nu^i + c^i$.

\begin{lemma}
    \label{lem:nu-equivalence}
    For any $u,u' \in U$, $u_\nu = u'_\nu$ if and only if $\preceq = \preceq'$.
\end{lemma}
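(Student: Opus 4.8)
The plan is to reduce the statement to the classical characterisation quoted above: $\preceq\, =\, \preceq'$ precisely when $u' = a u + b \bm{1}$ for some $a > 0$ and $b \in \R$. Given this, it suffices to show that $u_\nu = u'_\nu$ holds if and only if $u$ and $u'$ are related by such a positive affine transformation. The two defining properties I will lean on are the affine-equivariance of $c$, which I take to mean $c(a u + b \bm{1}) = a\, c(u) + b$ for $a > 0$, and the fact that $m$ is a norm, giving absolute homogeneity $m(\lambda v) = \abs{\lambda}\, m(v)$ and positive-definiteness $m(v) = 0 \Leftrightarrow v = 0$. (Strict convexity of $m$ is not needed for this particular lemma.)

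For the forward direction I would substitute $u' = a u + b \bm{1}$ and compute the centred utility $u' - c(u')\bm{1}$. Affine-equivariance gives $c(u') = a\, c(u) + b$, so the additive shift cancels and $u' - c(u')\bm{1} = a\,(u - c(u)\bm{1})$. Homogeneity of $m$ with $a > 0$ then yields $m(u' - c(u')\bm{1}) = a\, m(u - c(u)\bm{1})$. Splitting on whether this quantity is zero: if it vanishes then both normalised utilities equal $0$ by the first case of the definition; otherwise the common factor $a$ cancels between numerator and denominator, giving $u'_\nu = u_\nu$.

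For the converse I would assume $u_\nu = u'_\nu$ and recover the affine relationship, treating the degenerate case separately. If the common value is $0$, then positive-definiteness of $m$ forces $u - c(u)\bm{1} = 0$ and $u' - c(u')\bm{1} = 0$, so both $u$ and $u'$ are constant; these represent the same (totally indifferent) preference, witnessed by $a = 1$ and $b = c(u') - c(u)$. If the common value is nonzero, then both denominators are strictly positive, and equating the two fractions and clearing denominators expresses $u' - c(u')\bm{1}$ as a strictly positive multiple of $u - c(u)\bm{1}$; rearranging then gives $u' = a u + b \bm{1}$ with $a > 0$, as required.

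I expect the only real obstacle to be the bookkeeping around the degenerate, constant-utility case -- in particular confirming that the piecewise value $0$ is assigned to exactly the constant utilities (via positive-definiteness of $m$) and that any two such constant utilities genuinely induce the same preference ordering -- together with pinning down the precise form of affine-equivariance so that the additive shift cancels cleanly. Everything else reduces to a direct computation using homogeneity of the norm.
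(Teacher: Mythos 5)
Your proposal is correct and follows essentially the same route as the paper's own proof: reduce to the classical characterisation that $\preceq = \preceq'$ iff $u = a u' + b$ with $a > 0$ (the paper's Lemma \ref{lem:affine_transformation}), then use affine-equivariance of $c$ to cancel the shift and absolute homogeneity of $m$ to cancel the scale, and invert the normalisation for the converse. If anything, your treatment is slightly more careful than the paper's, since you explicitly handle the degenerate constant-utility case where $m\big(u - c(u)\bm{1}\big) = 0$, which the paper's converse direction silently assumes away when dividing by $m$ and $m'$.
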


Given $\uHat^i_\nu$ and $u^i_\nu$, a natural way to measure the degree of alignment between the $i^{\th}$ principal and agent is to compute some kind \emph{distance} between $\uHat^i_\nu$ and $u^i_\nu$. To do so, we use a norm of the difference between $\uHat^i_\nu$ and $u^i_\nu$, which gives rise to the following pseudometric over $U$. In order to contrast this principal-agent alignment measure with our later alignment measure over $n$ players, we sometimes refer to this as \emph{individual} (as opposed to \emph{collective}) alignment.

\begin{definition}
    \label{def:IA}
    Given a delegation game $D$, the \textbf{(individual) alignment} between agent $i$ and their principal is given by $\IA^i(D) = 1 - \frac{1}{2} m(\uHat^i_\nu - u^i_\nu)$, where $m$ is the same (strictly convex) norm used for normalisation.
\end{definition}

The choice of $m$ and $c$ determine which differences between payoffs are emphasised by the measure. One way to make this choice is by writing $m = \norm{\cdot}_{d}$, where $d$ is a distribution over $\bm{S}$. But how should one choose $d$ and $\norm{\cdot}$?

Beginning with $d$, one's first intuition might be to consider a distribution over the \emph{equilibria} of the game. Assuming agents act self-interestedly, there are certain outcomes that are game-theoretically untenable; divergences between preferences over these outcomes could reasonably be ignored. This intuition, however, conflicts with one of our primary desiderata (D1), which is to tease apart the difference between alignment and capabilities -- in the next subsection, we show that agents' individual capabilities determine the equilibria of the game.
Instead, we argue that the outcome of a game does \emph{not} change the extent to which preferences (dis)agree, and so in general assume that $d$ has full support.

Our primary requirements on $\norm{\cdot}$ are that $m$ is \emph{strictly convex} and is the \emph{same} in Definitions \ref{def:normalisation} and \ref{def:IA}. These restrictions are required in order to satisfy all of our desiderata, but relaxations are possible if fewer requirements are needed (see Appendix \ref{appendix:relaxations} for more discussion).

\subsubsection{Capabilities}

One obvious way of creating a formal measure of an agent's capabilities is to consider the number of (distinct) strategies available to them. In the cases of boundedly rational agents or multi-agent settings, however, it can beneficial to restrict one's strategy space, either for computational reasons \cite{Wellman2006}, or by pre-committing to avoid temptation \cite{Gul2001}, or to force one's opponent to back down \cite{Rapoport1966}.
Alternatively, one could invoke a complexity-theoretic measure of capabilities by considering the time and memory available to each agent, though in this work we aim to be agnostic to such constraints.

Instead, inspired by the seminal work of \cite{Legg2007}, we view an individual agent's capabilities as the degree to which it is able to achieve its objectives in a range of situations, regardless of what those objectives are. In game-theoretic parlance, we consider the \emph{rationality} of the agent. We can naturally formalise this idea by defining an agent's capabilities as the degree of optimality of their responses to a given partial strategy profile $\bm{\sigma}^{-i}$.

\begin{definition}
    Given some $\bm{\sigma}^{-i}$ in a game $G$, a mixed strategy $\sigma^i$ is an $\epsilon^i$\textbf{-best response} ($\epsBR$) for player $i$ if:
    \begin{align*}
        u^i(\bm{\sigma}) 
        &\geq \min_{\tilde{\sigma}^i} u^i(\bm{\sigma}^{-i}, \tilde{\sigma}^i)\\
        &+ (1 - \epsilon^i) \big( \max_{\tilde{\sigma}^i} u^i(\bm{\sigma}^{-i}, \tilde{\sigma}^i) - \min_{\tilde{\sigma}^i} u^i(\bm{\sigma}^{-i}, \tilde{\sigma}^i) \big).
    \end{align*}
    We write the set of such best responses for player $i$ as $\epsBR(\bm{\sigma}^{-i}; G)$. An $\eps$\textbf{-Nash equilibrium} ($\epsNE$) in a game $G$ is a mixed strategy profile $\bm{\sigma}$ such that $\sigma^i \in \epsBR(\bm{\sigma}^{-i}; G)$ for every every player $i$, where $\eps = (\epsilon^1,\ldots,\epsilon^n)$.
    We denote the set of $\epsNE$s in $G$ by $\epsNE(G)$.
\end{definition}

In essence, $\epsilon^i$ captures the fraction of their attainable utility that player $i$ manages to achieve. Note that if $\epsilon^i = 0$ then $\epsBR(\bm{\sigma}^{-i}; G) = \BR(\bm{\sigma}^{-i}; G)$ and if $\epsilon^i = 1$ then $\epsBR(\bm{\sigma}^{-i}; G) = \Sigma^i$. Similarly, when $\eps = \bm{0}$ then $\epsNE(G) = \NE(G)$, and when $\eps = \bm{1}$ then $\epsNE(G) = \bm{\Sigma}$.

\begin{definition}
    \label{def:IC}
    Given a delegation game $D$, the \textbf{individual capability} of agent $i$ is $\IC^i(D) \coloneqq 1 - \epsilon^i \in [0,1]$ where $\epsilon^i$ is the smallest value such that agent $i$ plays an $\epsBR$ in $G$.
\end{definition}

Unlike other formulations of bounded rationality, such as a softmax strategy or randomisation with some fixed probability, Definition \ref{def:IC} -- which is analogous to \emph{satisficing} \cite{Simon1956,Taylor2016} -- is agnostic as to the precise \emph{mechanism} via which players are irrational, and thus serves as a general-purpose descriptor of a player's (ir)rationality level.\footnote{Our choice of $\epsilon^i$-best responses could also be weakened to, e.g. $\epsilon^i$-rationalisability \cite{Bernheim1984,Pearce1984}.\label{fn:rationalisability}}

\subsection{Cooperation}

In order to achieve good outcomes for the principals, it is not sufficient for the agents to coordinate with their principals individually, they must also coordinate with one another. Indeed, it is easy to show that the principals' welfare regret can be arbitrarily high in the only NE of a game, despite perfect control of each agent by its principal. 

\begin{lemma}
    \label{lem:PD}
    There exists a (two-player, two-action) delegation game $D$ such that for any $x > 0$, however small, even if $\IA^i(D) = 1$ and $\IC^i(D) = 1$ for each agent, we have only one NE $\bm{\sigma}$, and $\frac{w_\star - w(\bm{\sigma})}{w_+- w_-} = 1-x$.
\end{lemma}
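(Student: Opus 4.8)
The plan is to exhibit, for each $x \in (0,1)$, a two-player, two-action Prisoner's-Dilemma-style delegation game in which the principal game coincides with the agent game, so that control is perfect but cooperation fails catastrophically. Concretely, I would let each player have actions $\{C,D\}$ and assign agent payoffs via the familiar template $u^i(C,C)=R$ and $u^i(D,D)=P$, with the standard sucker/temptation values on the off-diagonal (the cooperating player receiving $S$, the defecting player $T$), subject to $T > R > P > S$ and $2R > T+S$. Setting $\bm{\uHat} = \bm{u}$ makes each principal and agent share identical preferences, so by Lemma~\ref{lem:nu-equivalence} we have $\uHat^i_\nu = u^i_\nu$ and hence $\IA^i(D) = 1 - \tfrac{1}{2} m(\bm{0}) = 1$ for both $i$.

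Next I would establish uniqueness of the Nash equilibrium. Since $T > R$ and $P > S$, defecting strictly dominates cooperating for each player; a strictly dominated pure strategy receives zero weight in any (mixed) equilibrium, leaving $(D,D)$ as the only surviving profile. Imposing $\IC^i(D) = 1$ forces $\epsilon^i = 0$, i.e. each agent plays an exact best response, so the realised profile $\bm{\sigma}$ must be exactly this unique NE.

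I would then compute the four welfare quantities under the average-utilitarian $w$. The equilibrium welfare is $w(\bm{\sigma}) = P$; since $w$ is linear in the mixing weights, its maximum over profiles is attained at a vertex, and $2R > T+S$ makes $(C,C)$ that vertex, giving $w_\star = R$. Each player's individual maximum and minimum payoffs are $T$ and $S$, so $w_+ = T$ and $w_- = S$. Hence
\[
\frac{w_\star - w(\bm{\sigma})}{w_+ - w_-} = \frac{R-P}{T-S},
\]
which, because $T > R$ and $S < P$, always lies strictly in $(0,1)$ — consistent with the target value $1-x$ for $x>0$.

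Finally I would solve for parameters hitting the target exactly: fixing $R=1$, $P=0$ and choosing $T,S$ with $T-S = \tfrac{1}{1-x}$, for instance $T = 1 + \tfrac{x}{2(1-x)}$ and $S = -\tfrac{x}{2(1-x)}$, yields the ratio $1-x$ while preserving $T > R > P > S$ and $T+S = 1 < 2 = 2R$. The main obstacle — and the only delicate point — is confirming that all the constraints remain simultaneously satisfiable in the limit $x \to 0$: the strict PD inequalities, the welfare-dominance condition $2R > T+S$, and the exact equality $\tfrac{R-P}{T-S} = 1-x$ must hold at once, and one must verify that pushing $T \downarrow R$ and $S \uparrow P$ does not collapse the strict dominance that guarantees a unique equilibrium. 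The parametrisation above keeps each inequality strict for every $x<1$, so the PD structure, uniqueness, and the two perfect-control conditions are all maintained throughout.
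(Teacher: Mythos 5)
Your proposal is correct and follows essentially the same route as the paper: both construct a Prisoner's Dilemma with identical principal and agent payoffs (so $\IA = \bm{1}$ trivially), use strict dominance to get the unique NE at mutual defection, and tune the payoffs so that $\frac{w_\star - w(\bm{\sigma})}{w_+ - w_-} = 1-x$. The only difference is the parametrisation — the paper fixes $w_+ - w_- = 1$ with diagonal entries $1-\frac{x}{2}$ and $\frac{x}{2}$, whereas you fix $R - P = 1$ and stretch $T - S = \frac{1}{1-x}$ — which is immaterial.
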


To achieve low principal welfare regret, we need to have a sufficiently high degree of cooperation, both in terms of: a) collective \emph{alignment} (the extent to which agents have similar preferences); and b) collective \emph{capabilities} (the extent to which agents can work together to overcome their differences in preferences).

\subsubsection{Alignment}
\label{sec:collective-alignment}

Intuitively, it should be easier to achieve high welfare in a game where the players have similar preferences than one in which the players have very different preferences. At the extremes of this spectrum we have zero-sum games and common-interest games, respectively. This intuition can be formalised by generalising Definition \ref{def:IA} to measure the degree of alignment between $n$ utility functions, rather than two.

\begin{definition}
    \label{def:CA}
    Given a delegation game $D$, the \textbf{collective alignment} between the agents is given by:
    $$\CA(D) = 1 - \sum_i \frac{m^i}{\sum_j m^j} \cdot m(\mu^w - u^i_\nu),$$
    where $\mu^w \coloneqq \frac{\sum_i u^i - c^i}{\sum_i m^i}$ is a proxy for the agents' (normalised) welfare, recalling that $m^i \coloneqq m(u^i - c^i)$.
\end{definition}

Intuitively, we consider the misalignment of each agent from a hypothetical agent whose objective is precisely to promote overall social welfare. This misalignment is weighted by $m^i$, the idea being that the `stronger' agent $i$'s preferences (and hence the larger $m^i$ is), the more their misalignment with the overall welfare of the collective matters. While it may not be immediately obvious why we use $\mu^w$ instead of $\nu(w)$, the former will allow us to derive tighter bounds and can easily be shown to induce the same ordering over mixed strategy profiles as $w$ (and hence also $\nu(w)$).

\begin{lemma}
    \label{lem:mu-welfare}
    For any $\bm{\sigma}, \bm{\sigma}' \in \bm{\Sigma}$, $\mu^w(\bm{\sigma}) \leq \mu^w(\bm{\sigma}')$ if and only if $w(\bm{\sigma}) \leq w(\bm{\sigma}')$.
\end{lemma}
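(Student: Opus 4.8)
The plan is to show that, viewed as a function on the outcome space $\bm{S}$, the proxy $\mu^w$ is simply a \emph{positive affine transformation} of the welfare $w$; since positive affine maps are order-preserving and commute with expectation, the claimed equivalence then drops out immediately. The reason this is clean is that $\mu^w$ is defined by a single \emph{pooled} normalisation (summing the centred utilities and dividing by the sum of scales) rather than by averaging the individually normalised $u^i_\nu$, so no cancellation subtleties arise.

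First I would unpack the definition $\mu^w = \frac{\sum_i (u^i - c^i)}{\sum_j m^j}$. Writing $M \coloneqq \sum_j m^j$ and $C \coloneqq \sum_i c(u^i)$ (a scalar, since each $c^i = c(u^i)\bm{1}$ is a constant vector, so $\sum_i c^i = C\bm{1}$), and recalling the average-utilitarian definition $w = \frac{1}{n}\sum_i u^i$ (whence $\sum_i u^i = n\,w$), the numerator rearranges and I obtain
$$\mu^w = \frac{n}{M}\, w - \frac{C}{M}\, \bm{1}.$$
Evaluated at any pure outcome $\bm{s}$ this reads $\mu^w(\bm{s}) = \frac{n}{M} w(\bm{s}) - \frac{C}{M}$, i.e. $\mu^w$ is an affine image of $w$ with \emph{strictly positive} multiplier $\frac{n}{M}$.

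Next I would lift this to mixed profiles. Since $\mu^w(\bm{\sigma}) = \expect_{\bm{\sigma}}\big[\mu^w(\bm{s})\big]$ and the affine relation holds pointwise, linearity of expectation gives $\mu^w(\bm{\sigma}) = \frac{n}{M} w(\bm{\sigma}) - \frac{C}{M}$ for every $\bm{\sigma} \in \bm{\Sigma}$. As $\frac{n}{M} > 0$, the map $x \mapsto \frac{n}{M}x - \frac{C}{M}$ is strictly increasing, so $\mu^w(\bm{\sigma}) \leq \mu^w(\bm{\sigma}')$ holds exactly when $w(\bm{\sigma}) \leq w(\bm{\sigma}')$, which is the desired biconditional.

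The one step worth stating carefully — and the only genuine subtlety — is positivity of the denominator $M = \sum_j m^j$. Each $m^j = m(u^j - c^j) \geq 0$ because $m$ is a norm, and $m^j = 0$ precisely when $u^j$ is constant, so $M > 0$ whenever at least one agent has non-trivial preferences; I would flag this as the (non-degenerate) standing assumption under which $\mu^w$ is defined at all, noting that if every agent's utility is constant then $w$ too is constant and the statement is vacuous. Beyond this, neither strict convexity of $m$ nor affine-equivariance of $c$ is needed: the argument rests only on each $c^i$ being a constant vector and $M$ a fixed positive scalar, so that $\mu^w$ and $w$ differ by a single global positive rescaling and shift.
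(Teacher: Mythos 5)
Your proof is correct and takes essentially the same route as the paper's: both rewrite $\mu^w$ as the positive affine transformation $\frac{n}{\sum_i m^i} w - \frac{\sum_i c^i}{\sum_i m^i}$ of the welfare and then invoke linearity of expectation together with order-preservation of positive affine maps (the paper cites its auxiliary affine-transformation lemma where you argue monotonicity directly). Your added remark on the non-degeneracy condition $\sum_j m^j > 0$ is a sensible bit of extra care that the paper leaves implicit, but it does not change the argument.
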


Unfortunately, collective alignment (even when paired with perfect control) is insufficient for high principal welfare. The most trivial examples of this are equilibrium selection problems, but we can easily construct a game with a \emph{unique} NE and arbitrarily high welfare regret, even when we have perfect control and arbitrarily high collective alignment. These issues motivate our fourth and final measure.

\begin{lemma}
    \label{lem:TD}
    There exists a family of (two-player, $k$-action) delegation games $D$ such that even if $\IA^i(D) = 1$, $\IC^i(D) = 1$ for each agent, and there is only one NE $\bm{\sigma}$, we have $\lim_{k \to \infty} \CA(D) = 1$ but $\lim_{k \to \infty} \frac{\wHat_\star - \wHat(\bm{\sigma})}{\wHat_+ - \wHat_-} = 1$.
\end{lemma}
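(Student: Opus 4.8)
The plan is to realise the classical \emph{Traveller's Dilemma} as a delegation game. Fix a reward parameter $R > 1$, let both agents share the action set $S^i = \{1, \dots, k\}$, and define the (symmetric) agent payoffs so that $u^1(a,b)$ equals $a + R$, $a$, or $b - R$ according as $a < b$, $a = b$, or $a > b$, with $u^2(a,b) = u^1(b,a)$. Setting $\uHat^i = u^i$ gives $\IA^i(D) = 1$ immediately (since then $\uHat^i_\nu = u^i_\nu$ by Lemma \ref{lem:nu-equivalence}), and makes the principal game coincide with the agent game, so every principal welfare quantity equals the corresponding agent quantity. For the equilibrium, a best-response computation shows that against any pure claim $b$ the unique optimal reply is $\max(b-1, 1)$ whenever $R > 1$, whose only fixed point is $1$; extending this to mixed profiles by considering the largest claim $H$ in either player's support rules out $H > 1$ (replacing $H$ by $H-1$ strictly improves the payoff when the opponent plays $H-1$ or $H$, and $H$ is beaten by any lower claim when the opponent's support lies below $H$). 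Hence the unique $\NE$ is $\bm{\sigma} = (1,1)$, and stipulating $\IC^i(D) = 1$ fixes the played outcome there.

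Next I would evaluate the welfare quantities. Since average welfare at $(a,b)$ is exactly $\min(a,b)$, we obtain $w_\star = k$ (at $(k,k)$) and $w(\bm{\sigma}) = 1$; computing the individual extrema gives $\bm{u}_+[i] = (k-1)+R$ and $\bm{u}_-[i] = 1 - R$, so $w_+ = k - 1 + R$ and $w_- = 1 - R$. Therefore $\frac{w_\star - w(\bm{\sigma})}{w_+ - w_-} = \frac{k-1}{k - 2 + 2R} \to 1$, and by $\IA^i = 1$ the identical limit holds for the principal regret ratio.

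For collective alignment I would exploit the symmetry $m^1 = m^2 =: m$ and $c^1 = c^2$ (valid for any symmetric full-support $d$) to reduce Definition \ref{def:CA} to $\CA(D) = 1 - \frac{1}{2m}\, m(u^2 - u^1)$. The decisive observation is that the \emph{disagreement} $u^2 - u^1$ takes values only in $\{-2R, 0, 2R\}$, so $m(u^2 - u^1)$ is bounded independently of $k$, whereas the per-player scale $m = m(u^i - c^i\bm{1})$ grows like $\Theta(k)$ because $u^i$ spans a range of width $\Theta(k)$ (concretely, taking $m = \norm{\cdot}_{2,d}$ and $c$ the $d$-mean makes $m$ the standard deviation of $u^i$, which is $\Theta(k)$). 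Hence $\CA(D) = 1 - \Theta(1/k) \to 1$.

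The main obstacle is conceptual rather than computational: the two limits pull in opposite directions, since high collective alignment suggests the game is \emph{nearly} common-interest, yet we need a catastrophic equilibrium. The Traveller's Dilemma resolves this tension precisely because the adversarial component ($\pm R$) becomes vanishingly small relative to the common-interest scale $\Theta(k)$ -- forcing $\CA \to 1$ -- while remaining decisive enough, through iterated best responses, to pin the unique equilibrium at the worst joint outcome. Accordingly, the steps demanding the most care are the rigorous proof of mixed-strategy uniqueness of the $\NE$, and the verification that $m^i$ grows genuinely unboundedly (rather than merely being bounded below), since norm-equivalence constants on $\R^{\abs{\bm{S}}}$ themselves depend on $k$.
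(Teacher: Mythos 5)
Your proposal is correct and follows essentially the same route as the paper, which also proves this lemma via a Traveller's Dilemma: there it is a variant rescaled into $[0,1]$ with increments $x = \frac{1}{k+1}$ (and zero payoffs for claims differing by more than one step), taken as the principal game with $\IA = \bm{1}$, so that the agent game shares its unique NE at the lowest claim, giving a regret ratio of $1-2x \to 1$, while $\CA \to 1$ because the two players' payoffs differ by only $O(x)$. Since both $\CA$ and the regret ratio are invariant to positive affine rescaling, your ``disagreement $O(1)$ at scale $\Theta(k)$'' argument is the same argument as the paper's ``disagreement $O(x)$ at scale $\Theta(1)$''; the remaining differences (classical versus truncated payoffs, fixed bonus $R$ versus vanishing increment $x$) are inessential. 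You are in fact more careful than the paper on exactly the two points you flag: the paper asserts NE uniqueness of its Traveller's Dilemma without proof, and its norm-equivalence step uses a constant that in general grows with $k$, an issue your concrete choice $m = \norm{\cdot}_{2,d}$ (for which that constant can be taken to be $1$, since $d$ is a probability distribution) sidesteps.

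The one flawed step is a sub-claim in your mixed-strategy uniqueness argument: when the opponent's support lies below $H$, it is \emph{not} true that $H$ ``is beaten by any lower claim''. For instance, if the opponent claims $H-1$ with certainty, then claiming $1$ earns $1+R$, which for $H > 2 + 2R$ is strictly \emph{worse} than the $H-1-R$ earned by claiming $H$. What is true, and suffices for the contradiction, is that \emph{some} lower claim beats $H$: deviating from $H$ to $M$, the largest claim in the opponent's support, gains $R$ whenever the opponent claims $M$ (payoff $M$ versus $M - R$), and ties with $H$ (both earn $b - R$) against every opponent claim $b < M$, hence strictly improves in expectation since the opponent plays $M$ with positive probability. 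With that one-line repair, together with the $\Theta(k)$ verification of the standard deviation of $u^i$ under uniform $d$ that you correctly flag as necessary, your proof is complete.
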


\subsubsection{Capabilities}
\label{sec:collective-capabilities}

One way to model collective capabilities is as `internal' to the game $G$. Under this conceptualisation, we assume that the ability of the agents to cooperate is captured entirely by their actions and payoffs in $G$. For example, if the agents were able to coordinate in $G$ using a commonly observed signal $\gamma$, then this would be modelled as them playing a \emph{different} game $G'$, in which each agent's action consists of a choice of action in $G$ given their observation of $\gamma$.\footnote{Thus, in a \emph{true} prisoner's dilemma, the `only thing to do' (and therefore trivially the cooperative action) is to defect. The idea here is that if the agents possessed better cooperative capabilities, they would not be faced with an actual prisoner's dilemma to begin with.}

This approach, however, conflates the agents' collective alignment with their collective capabilities. In order to tease these concepts apart, we require a way to measure the extent to which the agents can avoid welfare loss due to their selfish incentives. Perhaps the best known formalisation of this loss is the \emph{price of anarchy} \cite{Koutsoupias1999}, which captures the difference in welfare between the best possible outcome and the worst possible NE.\footnote{Considering the \emph{worst} case captures the welfare loss from equilibrium selection problems even in common-interest games.} Inspired by this idea, we measure collective capabilities as follows.

\begin{definition}
    \label{def:CC}
    Let $w_{\eps} \coloneqq \min_{\bm{\sigma} \in \epsNE(G)} w(\bm{\sigma})$. 
    Given a delegation game $D$ where the agents have individual capabilities $\eps$, the \textbf{collective capabilities} of the agents are $\CC(D) \coloneqq \delta \in [0,1]$ if and only if the agents achieve welfare at least $w_{\eps} + \delta \cdot (w_\star - w_{\bm{0}})$, where recall that $\bm{0}$-$\NE(G) = \NE(G)$.\footnote{As remarked in Footnote \ref{fn:rationalisability}, the use of $\epsNE$s is not intrinsic to our definition of (collective) capabilities and could be weakened to, e.g. $\eps$-rationalisable outcomes.}
\end{definition}

Note that if $\epsilon^i \geq \tilde{\epsilon}^i$ for every $1 \leq i \leq n$, then we must have $w_{\eps} \leq w_{\tilde{\eps}}$; a special case is $w_{\eps} \leq w_{\bm{0}}$. Thus, the individual irrationality of the agents can only \emph{lower} the (worst-case) welfare loss. On the other hand, we can see that greater collective capabilities can potentially compensate for this loss.

As in the case of individual capabilities, we provide a measure that is agnostic to the precise mechanism via which the agents cooperate, be it through commitments, communications, norms, institutions, or more exotic schemes. Rather, we take as input the fact that agents are able to obtain a certain amount of welfare, and use this quantify how well they are cooperating. At one extreme, they do no better than $w_{\eps}$, at the other they get as close to the maximal welfare $w_\star$ as their individual capabilities will allow. 

\section{Theoretical Results}
\label{sec:theory}

We begin our theoretical results by proving that the measures defined in the preceding section satisfy our desiderata, before using them to bound the principals' welfare regret.

\subsection{Desiderata}
\label{sec:desiderata}

The fact that we define alignment as a feature of the underlying \emph{game}, and capabilities are a feature of how the game is \emph{played} means that capabilities and alignment are naturally orthogonal. The potentially arbitrary difference between the principals' and agents' utility functions is the key to the other parts of the following result.

\begin{proposition}[D1]
    \label{prop:D1}
    Consider a delegation game $D$ with measures $\IA(D)$, $\IC(D)$, $\CA(D)$, and $\CC(D)$. Holding fixed any three of the measures, then for any value $v \in [0,1]$ (or $\bm{v} \in [0,1]^n$ for $\IA$ or $\IC$), there is a game $D'$ such that the fourth measure takes value $v$ ($\bm{v}$) in $D'$ and the other measures retain their previous values.
\end{proposition}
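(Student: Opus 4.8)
The plan is to treat the proposition as four independent claims --- one per measure --- and to exploit the clean dependency structure of the measures on the data of a (played) delegation game, which I write as the agent utilities $\bm{u}$, the principal utilities $\bm{\uHat}$, and a designated strategy profile $\bm{\sigma}$. I would first record which objects each measure sees: the principal utilities $\uHat^i$ enter \emph{only} $\IA^i$; the agent utilities $\bm{u}$ enter $\CA$, $\IA$, and --- through the best-response thresholds and the welfare extrema $w_\star$, $w_{\bm{0}}$, $w_{\eps}$ --- the capability measures $\IC$ and $\CC$; and the designated profile $\bm{\sigma}$ enters only $\IC$ and $\CC$. In each case I hold three measures fixed and perturb the smallest possible part of this data to hit an arbitrary target for the fourth, using continuity together with the intermediate value theorem to realise exact real values.

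The two easy cases isolate a measure against a ``free'' variable. For $\IA$, I keep the entire agent side ($\bm{u}$ and $\bm{\sigma}$) untouched --- so $\CA$, $\IC$, $\CC$ are automatically preserved --- and vary only $\uHat^i$. Since normalised utilities lie on the unit $m$-sphere of the centred hyperplane, $\IA^i = 1 - \frac{1}{2} m(\uHat^i_\nu - u^i_\nu)$ sweeps continuously from $1$ (at $\uHat^i_\nu = u^i_\nu$) to $0$ (at $\uHat^i_\nu = -u^i_\nu$), so by the intermediate value theorem along a path joining these antipodes I can realise any $v^i \in [0,1]$, provided the game has at least three outcomes (which can be arranged when constructing $D'$, matching $\CA$, $\IC$, $\CC$). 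For $\CC$, I keep all utilities fixed (preserving $\IA$ and $\CA$) and keep each agent's optimality level $\epsilon^i$ fixed (preserving $\IC$), re-choosing only the designated profile so that its welfare equals $w_{\eps} + v \cdot (w_\star - w_{\bm{0}})$; here I would use a game in which, at fixed individual-optimality profile $\eps$, the attainable collective welfare varies continuously over the whole interval $[w_{\eps},\, w_{\eps} + (w_\star - w_{\bm{0}})]$, so $\CC = v$ is hit exactly.

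The remaining two cases require compensation. For $\IC$, setting $\epsilon^i = 1 - v^i$ via the designated play moves the reference point $w_{\eps}$ appearing inside the definition of $\CC$; since $w_\star$ and $w_{\bm{0}}$ are utility-only and hence fixed, I can solve for the achieved welfare that keeps $\frac{w(\bm{\sigma}) - w_{\eps}}{w_\star - w_{\bm{0}}}$ at its old value and realise it in the same way, leaving $\IA$ and $\CA$ untouched because no utility changes. For $\CA$, I must change the agent utilities $\bm{u}$ --- reshaping them so their mutual alignment takes the target value --- while simultaneously (a) readjusting each $\uHat^i$ to restore $\IA^i$ (possible since $\uHat^i$ is free and $\IA^i$ depends continuously on both utilities) and (b) arranging the new agent game so that $w_\star$, $w_{\bm{0}}$, the worst $\eps$-NE welfare $w_{\eps}$, the designated profile's welfare, and each agent's optimality level are all preserved.

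I expect step (b) --- the $\CA$ case --- to be the main obstacle, since it demands tuning the \emph{geometry} of the utility vectors (their pairwise agreement, which drives $\CA$) while holding fixed the \emph{game-theoretic} structure (equilibria and welfare values) that drives the capability measures; the coupling between $\IC$ and $\CC$ through the shared reference point $w_{\eps}$ is the secondary difficulty. I would resolve these by exhibiting a sufficiently rich parametrised family of agent games in which a rotation-like deformation of the utility profile toward (or away from) a common welfare direction changes $\CA$ monotonically and continuously while leaving the relevant welfare extrema and best-response levels invariant, so that the desired value is again obtained by the intermediate value theorem.
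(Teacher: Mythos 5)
Your decomposition matches the paper's: $\IA$ is varied by moving only $\uHat^i$ while the whole agent side is untouched (the paper picks, for each target $v^i$, some $u$ with $m(u_\nu - u^i_\nu) = 2 - 2v^i$; your intermediate-value path on the unit $m$-sphere establishes the same existence claim), and $\CA$ is varied by reshaping $\bm{u}$ and then re-choosing each $\uHat^i$ at the prescribed distance so that $\IA$ is restored. The divergence is in how the capability measures are handled, and that is where your proposal has an undischarged obligation. The paper's treatment of $\IC$ and $\CC$ is a single observation: these measures ``define how the agents \emph{play} the game, not any particular features of the game itself,'' so it takes $D' = D$ and simply posits play with the desired values; correspondingly, in the paper's $\CA$ case there is nothing on the capabilities side to preserve --- after $\bm{u}$ is reshaped, the agents' play in the new game is re-posited to yield the old $\IC$ and $\CC$. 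You instead treat $\IC$ and $\CC$ as pinned down by the game's structure (through $w_{\eps}$, $w_{\bm{0}}$, $w_\star$), and so your $\CA$ step (b) requires a deformation of $\bm{u}$ that tunes $\CA$ monotonically while holding $w_\star$, $w_{\bm{0}}$, $w_{\eps}$, the designated profile's welfare, and every best-response gap exactly invariant. You flag this as the main obstacle but only assert that a ``sufficiently rich parametrised family'' exists; no such family is exhibited, and building one with all of those simultaneous invariances is genuinely nontrivial (perturbing the utility vectors generically moves the equilibria, $w_\star$, and $w_{\bm{0}}$ all at once). The same tension sits inside your $\CC$ case: you claim to keep all utilities fixed, yet also to ``use a game in which'' the attainable welfare at fixed $\eps$ sweeps the whole interval --- if $D$ lacks that property you must change the game, and then your own preservation obligations for $\IA$ and $\CA$ reappear.

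To be fair, your caution tracks a real subtlety that the paper's one-line argument elides: under a literal reading of Definitions \ref{def:IC} and \ref{def:CC}, the pair $(\IC, \CC)$ is a function of the game together with the played profile, and it is \emph{not} true that every value in $[0,1]^n \times [0,1]$ is realisable in every fixed game --- in a Prisoner's Dilemma, $\IC = \bm{1}$ forces play at the unique NE, which pins $\CC = 0$. The proposition survives because $D'$ may differ from $D$, but the paper does not engage with this. Still, judged as a proof, your reconstruction is incomplete exactly where you predicted: the paper escapes your step (b) by treating capabilities as exogenous descriptors of play, whereas your more demanding reading leaves the central construction missing.
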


D2 follows chiefly from classic results linking preferences over mixed strategy profiles to sets of utility functions that are positive affine transformations of one another. 

\begin{proposition}[D2]
    \label{prop:D2}
    For any $1 \leq i \leq n$, $\IA^i = 1$ ($\IA^i = 0$) if and only if $\preceq^i = \hat{\preceq}^i$ ($\preceq^i = \hat{\succeq}^i$). Similarly $\CA = 1$ if and only if $\preceq^i = \preceq^j$ for every $1 \leq i, j \leq n$.
\end{proposition}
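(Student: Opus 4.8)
The plan is to reduce every claim to Lemma~\ref{lem:nu-equivalence}, which already identifies equality of normalised utilities with equality of preferences, together with two elementary facts about the normalisation: that $m(u^i_\nu) = 1$ whenever $u^i$ is non-constant, and that $\nu$ is odd, i.e.\ $\nu(-u) = -\nu(u)$. I would establish the latter first: since $c$ is affine-equivariant we have $c(-u) = -c(u)$ (taking the affine map $u \mapsto -u$), and since $m$ is a norm $m(-x) = m(x)$, so substituting into Definition~\ref{def:normalisation} gives $\nu(-u) = -(u - c(u)\bm{1})/m(u - c(u)\bm{1}) = -\nu(u)$. The degenerate constant-utility case, where $u^i_\nu = 0$, I would set aside and assume non-constant preferences for the two extreme characterisations.

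For the identical-preference claims the argument is direct. By Definition~\ref{def:IA}, $\IA^i = 1$ iff $m(\uHat^i_\nu - u^i_\nu) = 0$, iff (as $m$ is a norm) $\uHat^i_\nu = u^i_\nu$, iff $\preceq^i = \hat{\preceq}^i$ by Lemma~\ref{lem:nu-equivalence}. For collective alignment I would note that each weight $m^i/\sum_j m^j$ is positive and each $m(\mu^w - u^i_\nu) \geq 0$, so $\CA = 1$ forces every summand to vanish, giving $u^i_\nu = \mu^w$ for all $i$ and hence $u^i_\nu = u^j_\nu$, i.e.\ $\preceq^i = \preceq^j$ for all $i,j$ via Lemma~\ref{lem:nu-equivalence}. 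Conversely, if all preferences agree then a common normalised utility $v \coloneqq u^i_\nu$ satisfies $u^i - c^i = m^i v$, whence $\mu^w = (\sum_i m^i v)/(\sum_i m^i) = v$; every summand then vanishes and $\CA = 1$.

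The delicate case is $\IA^i = 0$, the anti-alignment characterisation, which I expect to be the main obstacle. Here $\IA^i = 0$ means $m(\uHat^i_\nu - u^i_\nu) = 2 = m(\uHat^i_\nu) + m(-u^i_\nu)$, so equality holds in the triangle inequality for $\uHat^i_\nu$ and $-u^i_\nu$. This is exactly where strict convexity of $m$ is needed: both vectors are nonzero (unit norm), so strict convexity forces $-u^i_\nu = \lambda \uHat^i_\nu$ for some $\lambda > 0$, and comparing norms pins down $\lambda = 1$, giving $\uHat^i_\nu = -u^i_\nu$. Using oddness of $\nu$ this reads $\uHat^i_\nu = \nu(-u^i)$, and Lemma~\ref{lem:nu-equivalence} then identifies $\hat{\preceq}^i$ with the preference represented by $-u^i$, namely $\hat{\succeq}^i$; the converse runs backwards through the same chain. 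I would be careful to record that both normalised vectors are nonzero so the strict-convexity equality condition genuinely applies, and to flag that the degenerate constant-utility cases are excluded from these extreme characterisations.
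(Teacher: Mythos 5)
Your proposal is correct and follows essentially the same route as the paper's proof: Lemma~\ref{lem:nu-equivalence} plus positive definiteness for the $\IA^i = 1$ case, equality in the triangle inequality combined with strict convexity (and norm comparison to pin $\lambda = 1$) for the $\IA^i = 0$ case, and vanishing of the non-negative weighted summands for the $\CA = 1$ case. Your explicit verification that $\nu(-u) = -\nu(u)$ (via affine-equivariance of $c$ and absolute homogeneity of $m$) is a small but worthwhile addition, since the paper invokes this fact only implicitly when passing between $\preceq^i = \hat{\succeq}^i$ and $u^i_\nu = -\uHat^i_\nu$.
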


The first half of D3 is straightforward. The subtlety in the second half is that -- perhaps counterintuitively, at first -- perfect capabilities (both individual and collective) and perfect alignment between the principals and their agents is \emph{not} sufficient for the principals to achieve maximal welfare (unlike the agents). Rather, the resulting solution will be one (merely) on the Pareto frontier for the principals. 

In essence, this is because individual alignment does not preserve \emph{welfare orderings} $\preceq^w$, only individual preference orderings $\preceq^i$. Recalling that $\hat{m}^i$ and $m^i$ quantify the magnitudes of $\uHat^i$ and $u^i$ respectively, we can see that, in general, the aggregation over agents' utilities (used to measure their success at cooperating) may not give the same weight to each party as the aggregation over principals' utilities (used to measure the value we care about).
Alternatively, the variation in magnitudes $m^i$ can be viewed as capturing a notion of fairness (used to select a point on the Pareto frontier), which may not be the same as in $\GHat$ unless $\hat{m}^i = r \cdot m^i$ for some $r > 0$. Further discussion on this point can be found in Appendix \ref{appendix:pareto}.

\begin{proposition}[D3]
    \label{prop:D3}
    If $\IC = \bm{1}$ and $\CC = 1$ then any strategy $\bm{\sigma}$ the agents play is such that $w(\bm{\sigma}) = w_\star$. 
    If $\IA = \bm{1}$ then $\bm{\sigma}$ is Pareto-optimal for the principals.
    If, furthermore, $\hat{m}^i = r \cdot m^i$ for some $r > 0$ for all $i$, then $\hat{w}(\bm{\sigma}) = \hat{w}_\star$.
\end{proposition}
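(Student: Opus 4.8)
The plan is to establish the three conclusions in sequence, each building on the last. The first is essentially a direct unpacking of the capability definitions. Since $\IC = \bm{1}$ means $\epsilon^i = 0$ for every agent, the profile $\bm{\sigma}$ actually played lies in $\NE(G)$, so the relevant baseline in Definition \ref{def:CC} is $w_{\bm{0}} = \min_{\bm{\sigma}' \in \NE(G)} w(\bm{\sigma}')$. Substituting $\eps = \bm{0}$ and $\delta = 1$ into Definition \ref{def:CC}, the agents attain welfare at least $w_{\bm{0}} + 1 \cdot (w_\star - w_{\bm{0}}) = w_\star$; since $w_\star = \max_{\bm{\sigma}'} w(\bm{\sigma}')$ also bounds $w(\bm{\sigma})$ from above, this forces $w(\bm{\sigma}) = w_\star$.

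For the second claim I would first translate $\IA = \bm{1}$ into a statement about preferences. Definition \ref{def:IA} gives $m(\uHat^i_\nu - u^i_\nu) = 0$ for every $i$, and because $m$ is a norm this forces $\uHat^i_\nu = u^i_\nu$; Lemma \ref{lem:nu-equivalence} then yields $\hat{\preceq}^i = \preceq^i$ for each $i$. Next I would observe that a profile of maximal average utilitarian welfare is necessarily Pareto-optimal for the agents: were some $\bm{\sigma}'$ to Pareto-dominate $\bm{\sigma}$ for the agents, then $\sum_i u^i(\bm{\sigma}') > \sum_i u^i(\bm{\sigma})$, whence $w(\bm{\sigma}') > w(\bm{\sigma}) = w_\star$, contradicting maximality. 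Finally, since Pareto-optimality is determined entirely by the players' preference orderings and $\hat{\preceq}^i = \preceq^i$ for all $i$, the same $\bm{\sigma}$ is Pareto-optimal for the principals.

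For the third claim I would make the affine relationship between the two welfares explicit. Combining $u^i = m^i u^i_\nu + c^i$ and $\uHat^i = \hat{m}^i \uHat^i_\nu + \hat{c}^i$ with the established equality $\uHat^i_\nu = u^i_\nu$ and the hypothesis $\hat{m}^i = r m^i$, we get $\uHat^i = r m^i u^i_\nu + \hat{c}^i = r(u^i - c^i) + \hat{c}^i$. Averaging over $i$ shows $\wHat(\bm{s}) = r \, w(\bm{s}) + \kappa$, where $\kappa = \frac{1}{n}\sum_i(\hat{c}^i - r c^i)$ is independent of $\bm{s}$; this is a positive affine transformation since $r > 0$. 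Hence $w$ and $\wHat$ share the same maximisers over $\bm{\Sigma}$, and because $\bm{\sigma}$ maximises $w$ (attaining $w_\star$), it also maximises $\wHat$, giving $\wHat(\bm{\sigma}) = \wHat_\star$.

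I expect the second claim to be the main obstacle, not computationally but conceptually: it requires bridging from the preference-level equality $\hat{\preceq}^i = \preceq^i$ to an outcome-level Pareto statement. The two crucial facts are that (i) maximal utilitarian welfare implies agent Pareto-optimality, and (ii) Pareto-optimality is invariant under any utility relabelling that fixes the individual preference orderings. Crucially, such relabellings need \emph{not} preserve the welfare ordering $\preceq^w$ (as the paper stresses before the statement), which is exactly why Pareto-optimality is the best one can conclude without the extra magnitude condition $\hat{m}^i = r m^i$ that powers the final, stronger equality.
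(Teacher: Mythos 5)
Your proposal is correct and takes essentially the same route as the paper's proof: part one unpacks Definition \ref{def:CC} with $\eps = \bm{0}$ and $\delta = 1$; part two uses the fact that $\IA = \bm{1}$ makes principals' and agents' preference orderings coincide, so that a Pareto improvement for the principals would be a Pareto improvement for the agents and hence contradict $w(\bm{\sigma}) = w_\star$; and part three exhibits the positive affine relation between $\wHat$ and $w$ under $\hat{m}^i = r \cdot m^i$ to conclude they share maximisers. The only cosmetic difference is that you factor part two as ``welfare maximality implies agent Pareto-optimality, which transfers to the principals by order-invariance,'' whereas the paper runs the same content as a direct contradiction via the explicit affine relation $u^i = a^i \uHat^i + b^i$.
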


The proof of D4 follows naturally from Definition \ref{def:CA}. The final desideratum (D5) is simple in the case of individual preferences (due to the form of our normalisation function). In the case of collective preferences (i.e. the ordering $\preceq^w$ over mixed strategies induced by $w$), we make use of the fact that the relative magnitude of the agents' utility functions in both games must be the same (which is closely related to the `fairness' condition $\hat{m}^i = r \cdot m^i$ in Proposition \ref{prop:D3}).

\begin{proposition}[D4]
    \label{prop:D4}
    If $\CA = 1$ then $w_\star = w_+$.
\end{proposition}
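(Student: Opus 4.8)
The plan is to show that the hypothesis $\CA = 1$ forces every agent's normalised utility to coincide with the welfare proxy $\mu^w$, and then to exploit the resulting shared affine structure to equate the two welfare notions. First I would unpack $\CA = 1$. Writing $\lambda^i \coloneqq m^i / \sum_j m^j$, the definition reads $\CA(D) = 1 - \sum_i \lambda^i\, m(\mu^w - u^i_\nu)$, i.e.\ $1$ minus a convex combination (the $\lambda^i$ are nonnegative and sum to one) of the nonnegative quantities $m(\mu^w - u^i_\nu)$. Hence $\CA = 1$ holds if and only if $\lambda^i\, m(\mu^w - u^i_\nu) = 0$ for every $i$, and since $m$ is a norm we have $m(x) = 0 \iff x = 0$, so every agent with $m^i > 0$ satisfies $u^i_\nu = \mu^w$. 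Writing $v \coloneqq \mu^w$ for this common normalised utility, the identity $u^i = m^i u^i_\nu + c^i$ shows that each such $u^i = m^i v + c^i$ is a positive affine transformation of the single function $v$ (any agent with $m^i = 0$ has constant utility and is handled separately).

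The key consequence is that all agents' utilities are maximised \emph{simultaneously}. Because $v$ is linear in the mixed strategy profile, $\max_{\bm\sigma} v(\bm\sigma) = \max_{\bm s} v(\bm s) \eqqcolon v^\ast$ is attained at some pure profile $\bm s^\ast$, and since $m^i > 0$ the same $\bm s^\ast$ maximises every $u^i$. I would then compute both welfares directly: the ideal welfare is $w_+ = \frac1n \sum_i \max_{\bm s} u^i(\bm s) = \frac1n \sum_i (m^i v^\ast + c(u^i))$, while the maximal welfare is $w_\star = \max_{\bm\sigma} \frac1n \sum_i u^i(\bm\sigma) = \frac1n \max_{\bm\sigma}\big( (\textstyle\sum_i m^i)\, v(\bm\sigma) + \sum_i c(u^i)\big)$; as $\sum_i m^i > 0$ this maximum is again attained at $\bm s^\ast$, yielding exactly $\frac1n \sum_i (m^i v^\ast + c(u^i)) = w_+$. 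This collapses the generic inequality $w_\star \le w_+$ into an equality.

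The argument is short, so the care lies in the bookkeeping around degeneracies rather than in any single hard step. The one genuine subtlety is the case $m^i = 0$: such an agent has constant utility, contributing its unique value identically to both $\max_{\bm s} u^i$ and to $u^i(\bm s^\ast)$, so it enters $w_+$ and $w_\star$ with the same summand and does not disturb the equality; and the whole statement is vacuous only when every $m^i = 0$, in which case $w$ is constant and $w_\star = w_+$ trivially. I would also make explicit the standard fact that a linear functional $\bm\sigma \mapsto v(\bm\sigma)$ attains its maximum over the product of simplices at a pure profile, since this is precisely what lets the individually optimal payoffs be realised jointly.
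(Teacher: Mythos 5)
Your proof is correct and takes essentially the same route as the paper's: $\CA = 1$ forces $u^i_\nu = \mu^w$, hence every agent's utility is maximised simultaneously with the welfare at a pure profile, collapsing $w_\star \leq w_+$ into an equality. The differences are presentational only — you inline the content of the paper's Lemmas \ref{lem:mu-welfare} and \ref{lem:pure-welfare} as direct affine algebra rather than invoking preference orderings, and you additionally treat the degenerate $m^i = 0$ case explicitly, which the paper's proof glosses over.
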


\begin{proposition}[D5]
    \label{prop:D5}
    Given a delegation game $D_1$, let $D_2$ be such that $\preceq^i_1 = \preceq^i_2$ and $\hat{\preceq}^i_1 = \hat{\preceq}^i_2$ for each $1 \leq i \leq n$. Then $\IA(D_1) = \IA(D_2)$ and $\IC(D_1) = \IC(D_2)$. Moreover, if $D_2$ is such that $\preceq^w_1 = \preceq^w_2$ and the $u^i$ are affine-independent, then $\CA(D_1) = \CA(D_2)$ and $\CC(D_1) = \CC(D_2)$ as well.
\end{proposition}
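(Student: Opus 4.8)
The plan is to handle the four measures in turn, exploiting throughout the classical fact \cite{MasColell1995} that a transformation preserving a preference ordering over mixed profiles is exactly a positive affine transformation of the representing utility function. I fix the shared strategy space $\bm{S}$ and the (same) observed profile $\bm{\sigma}$, and write $u^i_2 = a_i u^i_1 + b_i \bm{1}$ and $\uHat^i_2 = \hat{a}_i \uHat^i_1 + \hat{b}_i \bm{1}$ with $a_i, \hat{a}_i > 0$, which is precisely what $\preceq^i_1 = \preceq^i_2$ and $\hat{\preceq}^i_1 = \hat{\preceq}^i_2$ assert. The fixed normalisation data $m$ and $c$ are a property of the measures, applied to whichever game.

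The $\IA$ half is immediate from Lemma \ref{lem:nu-equivalence}: the normalised functions $u^i_\nu$ and $\uHat^i_\nu$ depend only on $\preceq^i$ and $\hat{\preceq}^i$, which are unchanged, and $\IA^i$ is a fixed norm of their difference. For $\IC$, I would check that the $\epsilon^i$-best-response condition is invariant under a per-player positive affine transform: substituting $u^i_2 = a_i u^i_1 + b_i \bm{1}$ into the inequality, the additive shift $b_i$ cancels between the two sides and the factor $a_i > 0$ divides out of both the achieved term and the $\max{-}\min$ gap. Hence the set of admissible $\epsilon^i$ is identical in $D_1$ and $D_2$, so the smallest one, and thus $\IC^i = 1 - \epsilon^i$, is unchanged.

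The collective case is where the extra hypotheses do the work, and its crux — the main obstacle — is promoting the family of per-player scalings $\{a_i\}$ to a single common scalar. Unfolding $\preceq^w_1 = \preceq^w_2$ via the same affine fact gives $w_2 = \alpha w_1 + \beta \bm{1}$ for some $\alpha > 0$; substituting $w_t = \frac1n \sum_i u^i_t$ for $t \in \{1,2\}$ and the per-player forms yields $\sum_i (a_i - \alpha) u^i_1 = \gamma \bm{1}$ for a scalar $\gamma$. Affine-independence of the $u^i$ — i.e. linear independence of $\{u^1_1, \dots, u^n_1, \bm{1}\}$ — then forces every $a_i = \alpha$ (and $\gamma = 0$). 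This is exactly the ``relative magnitude'' fact flagged before the statement and the step that would fail without affine-independence: otherwise the $a_i$ could differ while still fixing $\preceq^w$, changing the weights in $\CA$.

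With $u^i_2 = \alpha u^i_1 + b_i \bm{1}$ for a common $\alpha > 0$, the rest is routine. Affine-equivariance of $c$ gives $u^i_2 - c^i_2 = \alpha (u^i_1 - c^i_1)$, and homogeneity of $m$ gives $m^i_2 = \alpha m^i_1$; hence the weights $\nicefrac{m^i}{\sum_j m^j}$, the normalised $u^i_\nu$, and the proxy $\mu^w = \nicefrac{\sum_i (u^i - c^i)}{\sum_i m^i}$ are all invariant, so $\CA(D_2) = \CA(D_1)$ straight from Definition \ref{def:CA}. For $\CC$, the individual capabilities $\eps$ are already invariant and each $\epsBR$ set is preserved (per-player affine-invariance), so $\epsNE(G_2) = \epsNE(G_1)$ for every $\eps$; since $w_2 = \alpha w_1 + \beta \bm{1}$ with $\alpha > 0$ preserves argmax and argmin, each of $w_{\eps}$, $w_\star$, $w_{\bm{0}}$ and the achieved welfare $w(\bm{\sigma})$ transforms by the same $\alpha, \beta$. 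Then $\delta = \nicefrac{(w(\bm{\sigma}) - w_{\eps})}{(w_\star - w_{\bm{0}})}$ is a ratio of welfare differences in which $\beta$ cancels in each difference and $\alpha$ cancels between numerator and denominator, yielding $\CC(D_2) = \CC(D_1)$.
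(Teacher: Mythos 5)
Your proof is correct and takes essentially the same route as the paper's: the $\IA$ case via Lemma \ref{lem:nu-equivalence}, and the $\CA$ case by combining the affine relation between $w_1$ and $w_2$ with affine-independence to force a single common scaling, whence the weights and $\mu^w$ (and so $\CA$) are invariant. The only difference is one of explicitness: where the paper handles $\IC$ and $\CC$ in a single line by noting that capabilities are ``defined extrinsically to the game,'' you substantiate the same claim by checking that the $\epsBR$ conditions and the welfare ratio defining $\CC$ are invariant under (per-player) positive affine transformations -- a sound, slightly more careful rendering of the same argument.
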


\subsection{Bounding Welfare Regret}
\label{sec:bounds}

The primary question we investigate in this work is how the principals fare given the different levels of control and cooperation in the game played by the agents. We begin by characterising the \emph{principals'} welfare in terms of the \emph{agents'} utilities, and the alignment of each agent with its principal.

\begin{proposition}
    \label{prop:IA-bound}
    Given a delegation game $D$, we have:
    $$\wHat_\star - \wHat(\bm{\sigma}) \leq \frac{1}{n} \sum_i r^i \big( u^i(\hat{\bm{s}}_\star) - u^i(\bm{\sigma}) \big) + \frac{4K}{n} \hat{\bm{m}}^\top(\bm{1} - \IA),$$
    where $r^i \coloneqq \frac{\hat{m}^i}{m^i}$, $\hat{\bm{m}}[i] = \hat{m}^i$, $K$ satisfies $\norm{u_\nu - u'_\nu}_\infty \leq K \cdot m(u_\nu - u'_\nu)$ for any $u, u' \in U$, and $\wHat(\hat{\bm{s}}_\star) = \wHat_\star$.
\end{proposition}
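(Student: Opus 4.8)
The plan is to convert the principals' welfare regret into an exact per-player sum and then split each summand into a term that depends only on the agents' raw utilities and a term controlled by individual alignment. Starting from the definition of welfare, write
$$\wHat_\star - \wHat(\bm{\sigma}) = \frac{1}{n} \sum_i \big( \uHat^i(\hat{\bm{s}}_\star) - \uHat^i(\bm{\sigma}) \big),$$
using $\wHat(\hat{\bm{s}}_\star) = \wHat_\star$. The central manipulation is to re-express each $\uHat^i$ in terms of the corresponding agent utility $u^i$. Using the normalisation identities $u^i = m^i u^i_\nu + c^i$ and $\uHat^i = \hat{m}^i \uHat^i_\nu + \hat{c}^i$ and writing $\Delta^i \coloneqq \uHat^i_\nu - u^i_\nu$, I would substitute $\uHat^i_\nu = u^i_\nu + \Delta^i$ and then $u^i_\nu = (u^i - c^i)/m^i$ to obtain $\uHat^i = r^i u^i + \hat{m}^i \Delta^i + (\hat{c}^i - r^i c^i)$, where $r^i = \hat{m}^i / m^i$ (assuming non-degenerate preferences, so $m^i \neq 0$).

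The next step exploits that $c^i$ and $\hat{c}^i$ are constant functions on $\bm{S}$, so the trailing constant cancels in the difference between $\hat{\bm{s}}_\star$ and $\bm{\sigma}$. Extending each $\Delta^i$ to mixed profiles by expectation, this yields the exact decomposition
$$\wHat_\star - \wHat(\bm{\sigma}) = \frac{1}{n} \sum_i r^i \big( u^i(\hat{\bm{s}}_\star) - u^i(\bm{\sigma}) \big) + \frac{1}{n} \sum_i \hat{m}^i \big( \Delta^i(\hat{\bm{s}}_\star) - \Delta^i(\bm{\sigma}) \big).$$
The first sum is precisely the leading term of the claimed bound, so it remains only to bound the second (alignment) sum from above.

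For the alignment term, I would observe that both $\Delta^i(\hat{\bm{s}}_\star)$ and $\Delta^i(\bm{\sigma})$ are expectations, hence convex combinations of the values $\{\Delta^i(\bm{s})\}_{\bm{s}}$, so each lies in $[\min_{\bm{s}} \Delta^i(\bm{s}), \max_{\bm{s}} \Delta^i(\bm{s})]$ and therefore $\abs{\Delta^i(\hat{\bm{s}}_\star) - \Delta^i(\bm{\sigma})} \leq 2 \norm{\Delta^i}_\infty$. Applying the defining property of $K$ gives $\norm{\Delta^i}_\infty = \norm{\uHat^i_\nu - u^i_\nu}_\infty \leq K \cdot m(\uHat^i_\nu - u^i_\nu)$, and Definition \ref{def:IA} rewrites $m(\uHat^i_\nu - u^i_\nu) = 2(1 - \IA^i)$, so that $\abs{\Delta^i(\hat{\bm{s}}_\star) - \Delta^i(\bm{\sigma})} \leq 4K(1 - \IA^i)$. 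Since $\hat{m}^i = m(\uHat^i - \hat{c}^i) \geq 0$, multiplying through and summing bounds the second sum by $\frac{4K}{n} \sum_i \hat{m}^i (1 - \IA^i) = \frac{4K}{n} \hat{\bm{m}}^\top (\bm{1} - \IA)$, which closes the argument.

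The main obstacle is bookkeeping rather than genuine difficulty: carrying out the change of representation from normalised to raw utilities so that the additive constants cancel exactly, and chaining the three inequalities that together produce the factor $4$ (a factor $2$ from passing to $\norm{\cdot}_\infty$, and a factor $2$ hidden in Definition \ref{def:IA}). One should also flag the degenerate case $m^i = 0$, where $u^i$ is constant and $r^i$ is undefined; this is excluded by assuming each agent has non-trivial preferences.
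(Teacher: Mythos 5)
Your proposal is correct and follows essentially the same route as the paper's proof: both decompose the principals' regret player-by-player via the normalisation identities $\uHat^i = \hat{m}^i \uHat^i_\nu + \hat{c}^i$ and $u^i = m^i u^i_\nu + c^i$ (so the shifts cancel and the $r^i$ term emerges), and both obtain the error term by bounding $\norm{\uHat^i_\nu - u^i_\nu}_\infty \leq K \cdot m(\uHat^i_\nu - u^i_\nu) = 2K(1-\IA^i)$ at the two evaluation points, giving the factor $4K$. The only differences are cosmetic -- you perform the exact decomposition first and bound last, and work with mixed profiles throughout, whereas the paper bounds pointwise on pure profiles and then lifts by linearity of expectation -- and your explicit flagging of the degenerate case $m^i = 0$ is a reasonable addition that the paper leaves implicit.
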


Using this result, we can bound the principals' welfare regret in terms of both principal-agent alignment \emph{and} the agents' welfare regret, which is in turn a function of the agents' capabilities. As we saw in Propositions \ref{prop:D3} and \ref{prop:D5}, the `calibration' between the principals and agents -- as captured by the potentially differing ratios $r^i$ -- remains critical for ensuring we reach better outcomes in terms of principal welfare.
\begin{theorem}
    \label{thm:capabilities-bound}
    Given a delegation game $D$, we have that:
    \begin{align*}
        \wHat_\star - \wHat(\bm{\sigma})
        &\leq \frac{4K}{n} \hat{\bm{m}}^\top(\bm{1} - \IA) + r^* \left( (w_{\bm{0}} - w_{\eps}) \right.\\
        &+ \left. (1 - \CC) (w_\star - w_{\bm{0}}) \right) + R(\bm{\sigma}),
    \end{align*}
    where $\IC = \bm{1} - \eps$, 
    $r^* \in [\min_i r^i, \max_i r^i]$,
    $R(\bm{\sigma}) \coloneqq \frac {1}{n} \sum_i (r^i - r^*)m^i \left( u_\nu^i(\bm{\sHat}_\star) - u_\nu^i(\bm{\sigma}) \right)$ is a remainder accounting for collective alignment and calibration,
    and $K$ and $r^i$ are defined as in Proposition \ref{prop:IA-bound}.
    Note that when all $r^i$ are equal \emph{or} $\CA = 1$ then there is $r^*$ with $R(\bm{\sigma}) = 0$.
\end{theorem}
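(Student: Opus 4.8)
The plan is to build directly on Proposition \ref{prop:IA-bound}. Its right-hand side already isolates the individual-alignment contribution $\frac{4K}{n}\hat{\bm{m}}^\top(\bm{1} - \IA)$, so the only remaining work is to re-express the weighted utility-gap term $\frac{1}{n}\sum_i r^i\big(u^i(\hat{\bm{s}}_\star) - u^i(\bm{\sigma})\big)$ in terms of the \emph{agents'} welfare regret, and then to control that regret using collective capabilities. Everything downstream of the proposition is thus an exact rearrangement followed by a single inequality fed by Definition \ref{def:CC}.

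First I would perform an exact algebraic split. For \emph{any} scalar $r^*$, since $w = \frac{1}{n}\sum_i u^i$,
\[
\frac{1}{n}\sum_i r^i\big(u^i(\hat{\bm{s}}_\star) - u^i(\bm{\sigma})\big)
= r^*\big(w(\hat{\bm{s}}_\star) - w(\bm{\sigma})\big) + R(\bm{\sigma}),
\]
where the leftover is exactly $R(\bm{\sigma}) = \frac{1}{n}\sum_i (r^i - r^*)\big(u^i(\hat{\bm{s}}_\star) - u^i(\bm{\sigma})\big)$. Substituting $u^i = m^i u^i_\nu + c^i$ makes the additive constant $c^i$ cancel inside each difference, rewriting this leftover as the stated $R(\bm{\sigma}) = \frac{1}{n}\sum_i (r^i - r^*)m^i\big(u^i_\nu(\bm{\sHat}_\star) - u^i_\nu(\bm{\sigma})\big)$ (with $\bm{\sHat}_\star = \hat{\bm{s}}_\star$ a maximiser of principal welfare).

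Next I would bound the welfare-gap factor. Since $\hat{\bm{s}}_\star$ merely attains maximal \emph{principal} welfare, $w(\hat{\bm{s}}_\star) \leq w_\star$, hence $w(\hat{\bm{s}}_\star) - w(\bm{\sigma}) \leq w_\star - w(\bm{\sigma})$. The substantive estimate is then the agent-side regret: by Definition \ref{def:CC} the agents secure $w(\bm{\sigma}) \geq w_{\eps} + \CC\,(w_\star - w_{\bm{0}})$, and splitting $w_\star - w_{\eps} = (w_\star - w_{\bm{0}}) + (w_{\bm{0}} - w_{\eps})$ yields
\[
w_\star - w(\bm{\sigma}) \leq (w_{\bm{0}} - w_{\eps}) + (1 - \CC)(w_\star - w_{\bm{0}}),
\]
whose two summands are non-negative (using $w_{\eps} \leq w_{\bm{0}}$ from the remark after Definition \ref{def:CC}, together with $w_{\bm{0}} \leq w_\star$ and $\CC \leq 1$). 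Because each $r^i = \hat{m}^i/m^i > 0$, any admissible $r^* \in [\min_i r^i, \max_i r^i]$ is positive, so multiplying by $r^*$ preserves the inequality; combining this with the decomposition and the alignment term of Proposition \ref{prop:IA-bound} delivers the claimed bound.

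Finally, to justify the closing remark I would exhibit, in each special case, an $r^*$ in the required range that annihilates $R(\bm{\sigma})$. Taking $r^* = r^i$ works immediately when all the $r^i$ coincide. When $\CA = 1$, Proposition \ref{prop:D2} together with Lemma \ref{lem:nu-equivalence} forces all $u^i_\nu$ to be equal, so each bracket $u^i_\nu(\bm{\sHat}_\star) - u^i_\nu(\bm{\sigma})$ is a common value $\Delta$ and $R(\bm{\sigma}) = \frac{\Delta}{n}\big(\sum_i r^i m^i - r^*\sum_i m^i\big)$ vanishes for the $m^i$-weighted average $r^* = \frac{\sum_i \hat{m}^i}{\sum_i m^i}$, which lies in $[\min_i r^i, \max_i r^i]$. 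The only genuinely delicate point is the bookkeeping around $r^*$: the decomposition is an identity for every $r^*$, but the inequality step needs $r^* \geq 0$, and one must verify that the range $[\min_i r^i, \max_i r^i]$ — natural precisely because it contains the relevant weighted averages of the $r^i$ — is where the remainder can be forced to zero.
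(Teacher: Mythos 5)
Your proposal is correct and follows essentially the same route as the paper's own proof: the same exact decomposition of $\frac{1}{n}\sum_i r^i\big(u^i(\hat{\bm{s}}_\star) - u^i(\bm{\sigma})\big)$ into $r^*\big(w(\hat{\bm{s}}_\star) - w(\bm{\sigma})\big) + R(\bm{\sigma})$, the same bound $w(\hat{\bm{s}}_\star) \leq w_\star$, the same application of Definition \ref{def:CC} to control the agents' welfare regret, and the same choices of $r^*$ (namely $r^i$ when all ratios coincide, and the $m^i$-weighted average $\frac{\sum_i \hat{m}^i}{\sum_i m^i}$ when $\CA = 1$) to annihilate the remainder. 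Your explicit attention to the positivity of $r^*$ and to the membership of the weighted average in $[\min_i r^i, \max_i r^i]$ matches the paper's treatment (the latter point appearing in its appendix discussion of the fairness remainder).
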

Before continuing, we note that unlike in the single-agent case, even small irrationalities can compound to dramatically lower individual payoffs (and thus welfare) in multi-agent settings, as formalised by the following lemma.
\begin{lemma}
    \label{lem:fragile}
    For any $\eps \succ \bm{0}$, there exists a game $G$ such that $w_{\bm{0}} = w_+$ but for any $x > 0$, however small, $w_{\eps} - w_- < x$.
\end{lemma}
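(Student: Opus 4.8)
The plan is to exhibit, for any given $\eps \succ \bm{0}$ (which fixes the number of players $n$) and any $x > 0$, an explicit game $G$ built from a per-player three-action gadget. Each player $i$ has actions $\{0, 1, 2\}$: a \emph{dominant} action $0$, a \emph{doom} action $1$, and a \emph{sacrificial} action $2$. I would arrange the gadget so that (i) action $0$ is strictly dominant, forcing a unique $\NE$ in which every player simultaneously attains their individual maximum, so that $w_{\bm{0}} = w_+$; while (ii) the all-$1$ profile is an $\epsNE$ whose welfare can be pushed arbitrarily close to $w_-$. Concretely, let $\eta > 0$ be a small parameter (chosen at the end) and couple the players cyclically by $u^i(\bm{s}) = h_i(s^i, s^{i+1})$ (indices mod $n$), where the gadget rows are $h_i(0, \cdot) = (1, \eta, \tfrac{1}{2})$, $h_i(1, \cdot) = (\tfrac{1}{2}, \eta(1 - \tfrac{\epsilon^i}{2}), 0)$, and $h_i(2, \cdot) = (\tfrac{1}{2}, 0, 0)$, the three components corresponding to the successor's action being $0$, $1$, or $2$. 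Thus each player's payoff depends only on their own action and that of their cyclic successor, and the doom row is individualised through $\epsilon^i$.

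First I would verify that action $0$ strictly dominates actions $1$ and $2$ for every player and every choice of the successor's action — a direct comparison of the three rows of $h_i$ (using $\epsilon^i > 0$ and $\eta > 0$) — so each player's unique best response is $0$ regardless of others, and the unique $\NE$ has all players choosing $0$. At that profile each player receives $h_i(0,0) = 1$, which is also their global maximum, giving $w_{\bm{0}} = w(\text{all-}0) = w_+$. Reading off minima, every player's global minimum is $0$, so $w_- = 0$.

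The crux is showing that the all-$1$ profile is an $\epsNE$ with welfare below $\eta$. Fix player $i$ and set the successor's action to $1$; the attainable slice is $\{\eta, \eta(1 - \tfrac{\epsilon^i}{2}), 0\}$, so $\max = \eta$, $\min = 0$, and the $\epsBR$ threshold is $\eta(1 - \epsilon^i)$. The doom action yields $\eta(1 - \tfrac{\epsilon^i}{2}) \geq \eta(1 - \epsilon^i)$, hence is an $\epsilon^i$-best response; as this holds for every $i$, the all-$1$ profile lies in $\epsNE(G)$. Its welfare is $\frac{1}{n} \sum_i \eta(1 - \tfrac{\epsilon^i}{2}) < \eta$, so $w_{\eps} \leq \frac{1}{n} \sum_i \eta(1 - \tfrac{\epsilon^i}{2}) < \eta$. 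Since also $w_{\eps} \geq w_- = 0$, taking $\eta < x$ yields $w_{\eps} - w_- < x$, as required.

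The main obstacle — and the reason the sacrificial action $2$ is needed — is the following tension. By definition an $\epsBR$ must lie within an $\epsilon^i$-fraction of the top of the current slice, so a player sitting at the minimum of its slice can never $\epsilon$-best-respond (for $\epsilon^i < 1$); we therefore cannot simply route every player to its individual minimum. Action $2$ resolves this by inflating the range of the doom slice to the full $\eta$, so that the welfare-poor doom action still sits in the top $\epsilon^i$-fraction even while all payoffs in that slice hug $w_-$. I would finally remark that this collapse is genuinely a multi-agent effect: with a single player, any $\epsBR$ guarantees welfare within an $\epsilon$-fraction of the maximum, so $w_{\eps} - w_-$ is bounded below by $(1 - \epsilon)(w_+ - w_-)$ and no such collapse is possible — consistent with the surrounding discussion.
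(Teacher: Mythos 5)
Your proof is correct, and its core gadget is the same one the paper uses: a dominant action whose profile is the unique NE and attains every player's individual maximum (so $w_{\bm{0}} = w_+$), a ``doom'' profile that is an $\epsNE$ with welfare pinned near $w_-$, and a third action whose only role is to stretch the payoff range of the doom slice so that the doom action clears the $(1-\epsilon^i)$ threshold. The paper instantiates this as a two-player $3 \times 3$ game (Figure \ref{fig:fragile}): $A$ is the good action, $(B,B)$ is the $\epsNE$ with payoffs $\big((1-\epsilon^1)x, (1-\epsilon^2)x\big)$, and $C$ plays exactly the role of your sacrificial action. You genuinely go further in two ways. First, your cyclic $n$-player construction covers $\eps \succ \bm{0}$ of arbitrary dimension, whereas the paper's game has only two players, so your argument is more faithful to the statement as quantified. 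Second, you arrange strict dominance of action $0$ against every successor action, making uniqueness of the NE immediate; in the paper's game $A$ does \emph{not} strictly dominate $B$ (against $C$, playing $B$ is strictly better), so uniqueness of $(A,A)$ requires an iterated-elimination step (first remove $C$, which $B$ strictly dominates, then remove $B$) that the paper's proof elides. Your closing observation -- that a player sitting at the minimum of its slice can never $\epsilon$-best-respond when $\epsilon^i < 1$, which is why a range-inflating action is unavoidable -- is a correct articulation of the obstruction both constructions are built around. (One pedantic point: you also need $\eta < 1$, e.g.\ $\eta < \min(x, \tfrac{1}{2})$, so that $1$ remains each player's global maximum and hence $w_{\bm{0}} = w_+$; this is clearly your intent in calling $\eta$ small.)
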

In many games, however, the players' welfare will be much more robust to small mistakes. For example, suppose that $G$ is $(\eps, \Delta)$-robust, in the sense that all $\epsNE$s are contained within a ball of radius $\Delta$ around a (true) NE \cite{Awasthi2010}. Then it is relatively straightforward to show that: 
$$w_{\bm{0}} - w_{\eps} \leq \frac{2\Delta}{n} \sum_i \max_{\bm{s},\bm{s'}} \abs{u^i(\bm{s}) - u^i(\bm{s}')}.$$
Indeed, in many settings the price of anarchy can be bounded under play that is not perfectly rational \cite{Roughgarden2015}. While our bound above is highly general, assuming further structure in the game may allow us to tighten it further.

Theorem \ref{thm:capabilities-bound} characterises the principals' welfare regret in terms of three of our four measures. Our next result characterises the gap between the ideal and maximal welfare in terms of our fourth measure: collective alignment.
\begin{proposition}
    \label{prop:ideal-welfare-bound}
    Given a game $G$, then $w_+ - w_\star \leq \frac{K \sum_i m^i}{n} (1 - \CA)$, where $K$ is defined as in Proposition \ref{prop:IA-bound}.
\end{proposition}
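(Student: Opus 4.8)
The plan is to rewrite both $w_+$ and $w_\star$ in terms of the normalised utilities $u^i_\nu$ and then exploit the fact that $\mu^w$ is exactly the $m^i$-weighted average of the $u^i_\nu$. Writing $u^i = m^i u^i_\nu + c^i$ and using that $\max_{\bm{s}} u^i(\bm{s}) = m^i \max_{\bm{s}} u^i_\nu(\bm{s}) + c(u^i)$, while $w_\star = w(\bm{\sigma}_\star)$ for some welfare-maximising $\bm{\sigma}_\star$, the additive constants $c(u^i)$ appear identically in both quantities and cancel. This leaves
$$ w_+ - w_\star = \frac{1}{n} \sum_i m^i \big( \max_{\bm{s}} u^i_\nu(\bm{s}) - u^i_\nu(\bm{\sigma}_\star) \big). $$

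The key step is to bound each $\max_{\bm{s}} u^i_\nu(\bm{s})$ against $\mu^w$ without incurring a spurious factor of two. First, by Lemma \ref{lem:mu-welfare}, $\bm{\sigma}_\star$ also maximises $\mu^w$, and since $\mu^w(\bm{\sigma}) = \expect_{\bm{\sigma}}[\mu^w(\bm{s})]$ is affine in the mixed profile, its maximum is attained at a pure profile, so $\mu^w(\bm{\sigma}_\star) = \max_{\bm{s}} \mu^w(\bm{s})$. Then, by subadditivity of the maximum,
$$ \max_{\bm{s}} u^i_\nu(\bm{s}) \leq \max_{\bm{s}} \mu^w(\bm{s}) + \max_{\bm{s}}\big(u^i_\nu(\bm{s}) - \mu^w(\bm{s})\big) \leq \mu^w(\bm{\sigma}_\star) + \norm{u^i_\nu - \mu^w}_\infty. $$

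Substituting this into the displayed identity, multiplying by $m^i$ and summing, the crucial cancellation is that $\sum_i m^i \mu^w(\bm{\sigma}_\star) = (\sum_j m^j)\mu^w(\bm{\sigma}_\star) = \sum_i m^i u^i_\nu(\bm{\sigma}_\star)$ holds exactly by the definition $\mu^w = \frac{\sum_i m^i u^i_\nu}{\sum_j m^j}$. Hence the $\mu^w(\bm{\sigma}_\star)$ contributions vanish and we obtain $w_+ - w_\star \leq \frac{1}{n}\sum_i m^i \norm{u^i_\nu - \mu^w}_\infty$. Applying the norm-equivalence bound with constant $K$ to each term gives $\norm{u^i_\nu - \mu^w}_\infty \leq K \, m(\mu^w - u^i_\nu)$, and substituting the definition $1 - \CA = \sum_i \frac{m^i}{\sum_j m^j} m(\mu^w - u^i_\nu)$ collapses the sum to exactly $\frac{K \sum_i m^i}{n}(1 - \CA)$.

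The main obstacle, and the reason the argument must be arranged carefully, is avoiding the factor of two that a naive triangle-inequality insertion of $\mu^w$ (bounding $\max_{\bm{s}} u^i_\nu(\bm{s}) - u^i_\nu(\bm{\sigma}_\star)$ by comparing both points to $\mu^w$ separately) would produce; it is essential that the single term $\max_{\bm{s}}\mu^w(\bm{s})$ is pulled out once and then cancels identically against the welfare-optimal term via the weighted-average definition of $\mu^w$. A secondary point to check is the application of $K$: the difference $\mu^w - u^i_\nu$ is not literally of the form $u_\nu - u'_\nu$ for two utilities $u, u'$ (since $\mu^w$ need not satisfy $m(\mu^w) = 1$), so we must read $K$ as the norm-equivalence constant satisfying $\norm{x}_\infty \leq K \, m(x)$ for all $x \in U$ -- which exists because $U$ is finite-dimensional and $m$ is a norm -- and apply it at $x = \mu^w - u^i_\nu$.
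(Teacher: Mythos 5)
Your proof is correct and follows essentially the same route as the paper's: both insert $\mu^w$ exactly once, use Lemma \ref{lem:mu-welfare} to identify the welfare maximiser with the $\mu^w$ maximiser, exploit the exact cancellation $\sum_i m^i \mu^w = \sum_i m^i u^i_\nu$ from the weighted-average definition of $\mu^w$, and finish with the norm-equivalence constant $K$ and the definition of $\CA$. The only differences are cosmetic (you cancel the shifts $c^i$ up front and work with a mixed maximiser $\bm{\sigma}_\star$, whereas the paper keeps the $c^i$ terms until the end and uses a pure maximiser via Lemma \ref{lem:pure-welfare}), and your closing remark about reading $K$ as a general norm-equivalence constant matches how the paper itself applies it.
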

This bound can be applied to either the agent or principal game; we denote the collective alignment in the latter as $\hat{\CA}$. While it is possible characterise the difference between $\hat{\CA}$ and $\CA$ in terms of $\IA$, a tighter bound on $\wHat_+ - \wHat(\bm{\sigma})$ can be obtained by considering the collective alignment between the principals and simply summing the right-hand terms of the inequalities in Theorem \ref{thm:capabilities-bound} and Proposition \ref{prop:ideal-welfare-bound}.

\section{Experiments}
\label{sec:exps}

While our primary contributions are theoretical, we support these results by: i) empirically validating the bounds above; and ii) showing how the various measures we introduce can be inferred from data. In our experiments we define $\nu$ using $c(u) = \expect_{\bm{s}}[u(\bm{s})]$ and $m(u) = \norm{u}_2$,
and limit our attention to pure strategies due to the absence of scalable methods for exhaustively finding mixed $\epsNE$s.\footnote{Our code is available at \url{https://github.com/lrhammond/delegation-games}.}

\subsection{Empirical Validation}
\label{sec:validation}

In order to visualise the results in the preceding section, we conduct a series of experiments in which we monitor how the principals' welfare changes based on the degree of control and cooperation in the delegation game. An example is shown in Figure \ref{fig:welfare-bounds}, in which we change one measure, setting all others to 0.9. At each step we generate 25 random delegation games (with approximately ten outcomes), compute the set of strategies $\bm{s}$ such that $w(\bm{s}) \in \big[w_{\eps} + \CC \cdot (w_\star - w_{\bm{0}}),  w_{\eps} + (w_\star - w_{\bm{0}})\big]$ where $\eps = \bm{1} - \IC$, and record the mean principal welfare over these strategies.

\begin{figure}
    \centering
    \begin{subfigure}{0.49\columnwidth}
        \centering
        \includegraphics[width=0.9\textwidth]{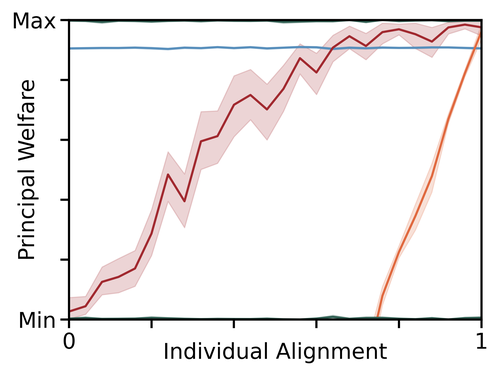}
    \end{subfigure}
    \begin{subfigure}{0.49\columnwidth}
        \centering
        \includegraphics[width=0.9\textwidth]{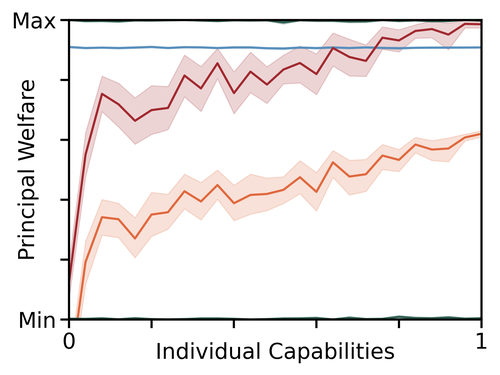}
    \end{subfigure}

    \begin{subfigure}{0.49\columnwidth}
        \centering
        \includegraphics[width=0.9\textwidth]{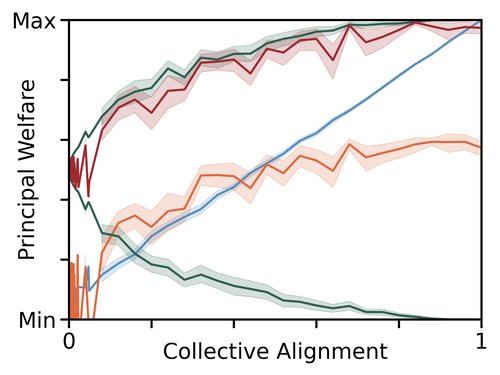}
    \end{subfigure}
    \begin{subfigure}{0.49\columnwidth}
        \centering
        \includegraphics[width=0.9\textwidth]{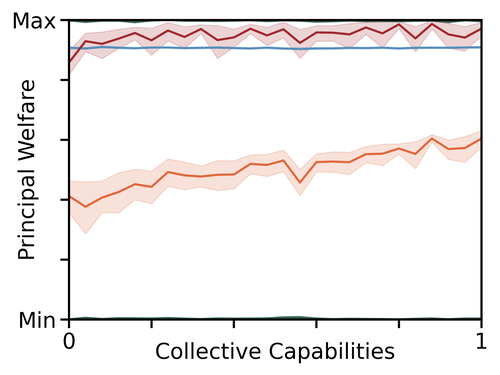}
    \end{subfigure}
    \caption{We report mean principal welfare (in red) normalised to $[\wHat_-,\wHat_+]$, with $\wHat_\bullet$ and $\wHat_\star$ in green. The lower bounds on welfare, given by Theorem \ref{thm:capabilities-bound}, and on $\wHat_\star$ (compared to $\wHat_+$), given by Proposition \ref{prop:ideal-welfare-bound}, are in orange and blue, respectively. Shaded areas show 90\% confidence intervals.}
    \label{fig:welfare-bounds}
\end{figure}
As expected, we see a positive relationship for each measure.
Given the ease of coordinating in relatively small games, alignment is more important in this example than capabilities, as can be seen from the gentler slope of the welfare curve as $\IA$ increases, and in how $\hat{\CA}$ places an upper limit on principal welfare under otherwise ideal conditions.

In Appendix \ref{app:more_games} we include further details and plots for games ranging between $10^1$ and $10^3$ outcomes, with values of the fixed measures ranging between $0$ to $1$. We find that the overall dependence of principal welfare upon each measure is robust, though the tightness of the bounds is reduced in larger games and for lower values of the fixed measures.

\subsection{Inference of Measures}
\label{sec:inference}

Previously, we implicitly assumed full knowledge of the delegation game in defining our measures. In the real world, this assumption will rarely be valid, motivating the question of when and how we might \emph{infer} their values given limited observations. Concretely, we assume access to a dataset $\D$ of (pure) strategy profiles and payoff vectors $(\bm{s}, \bm{u}, \bm{\uHat}) \sim_{\text{i.i.d.}} d$.

Inferring alignment is relatively straightforward, as we can simply approximate each $u^i$ and $\uHat^i$ by using only the strategies $\D(\bm{S}) \subseteq \bm{S}$ contained in $\D$, for which we know their values.
Inferring capabilities is much more challenging, as they determine how agents play the game and therefore limit our observations. Fundamentally, measuring capabilities requires comparing the achieved outcomes against better/worse alternatives, but if agents' capabilities are fixed we might never observe these other outcomes.
Moreover, only observing the agents acting together (alone) leaves us unable to asses how well they would perform alone (together), due to the orthogonality of individual 
and collective capabilities.

While there are many relaxations that might overcome these issues, perhaps the simplest and weakest is to assume that: a) all utilities are \emph{non-negative};
and b) we receive observations of the agents acting both alone \emph{and} together. We can then estimate (upper bounds of) $\IC$ and $\CC$ as follows.

\begin{proposition}
    \label{prop:inference}
    Given a game $G$, if $u^i(\bm{s})$ for every $i$ and $\bm{s} \in S$, and $d$ has maximal support over the outcomes generated when agents act together/alone (respectively), then: 
    \begin{align*}
        \CC &\leq \lim_{\abs{\D} \to \infty} \frac{\min_{\bm{s} \in \D(\bm{S})} w(\bm{s})}{\max_{\bm{s} \in \D(\bm{S})} w(\bm{s})}, \text{ and }\\
        \IC^i &\leq \lim_{\abs{\D} \to \infty} \min_{\bm{s} \in \D(\bm{S})} \frac{u^i(\bm{s})}{\max_{\tilde{s}^i \in \D(S^i)} u^i(\bm{s}^{-i},\tilde{s}^i)}.
    \end{align*}
\end{proposition}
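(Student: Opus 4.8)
The plan is to reduce both inequalities to a single elementary fact about non-negative quantities and then to control the empirical maxima and minima using the maximal-support assumption. The elementary fact is: for any reals $0 \le p \le a \le q$ with $q > 0$, we have $\frac{a-p}{q-p} \le \frac{a}{q}$, with equality when $p = 0$. This follows by cross-multiplying, since $\frac{a-p}{q-p} \le \frac{a}{q} \iff q(a-p) \le a(q-p) \iff p(a-q) \le 0$, which holds because $p \ge 0$ and $a \le q$. The assumed non-negativity of every $u^i(\bm s)$ is precisely what guarantees the crucial hypothesis $p \ge 0$ in each application below.

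For the individual-capabilities bound I would first fix an observed profile $\bm s = (\bm s^{-i}, s^i)$ and write, using the definition of $\IC^i$ via the smallest $\epsilon^i$ with $\sigma^i \in \epsBR(\bm s^{-i}; G)$, the identity $\IC^i = \frac{u^i(\bm s) - p}{q - p}$, where $p = \min_{\tilde s^i} u^i(\bm s^{-i}, \tilde s^i)$ and $q = \max_{\tilde s^i} u^i(\bm s^{-i}, \tilde s^i)$. Since $p \ge 0$ by assumption and $p \le u^i(\bm s) \le q$, the elementary fact gives $\IC^i \le \frac{u^i(\bm s)}{q}$. The only game quantity here that is not directly observed is $q$; replacing it by the empirical best-response value $\hat q(\bm s) \coloneqq \max_{\tilde s^i \in \D(S^i)} u^i(\bm s^{-i}, \tilde s^i) \le q$ (i.e. varying $i$'s own strategy while holding $\bm s^{-i}$ fixed, as revealed by the agents acting alone) can only enlarge the ratio, so $\IC^i \le \frac{u^i(\bm s)}{\hat q(\bm s)}$ for every observed $\bm s$, hence also for the minimum over observed profiles. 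Taking $\abs{\D} \to \infty$, maximal support forces $\hat q(\bm s) \to q$, and minimising over $\bm s$ selects a context with the smallest attainable $p$ (equality when $p = 0$), yielding the stated limit.

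For the collective-capabilities bound I would argue analogously, replacing $u^i$ by welfare $w$ and using the together-observations to estimate the welfare range. Rearranging Definition \ref{def:CC} gives $\CC = \frac{w(\bm\sigma) - w_{\eps}}{w_\star - w_{\bm 0}}$, where the achieved welfare $w(\bm\sigma)$ satisfies $w(\bm\sigma) \le w_\star = \max_{\bm s} w(\bm s)$ (the welfare maximum over mixed profiles is attained at a pure profile, since $w$ is linear in the mixed strategy). Non-negativity of $w$ (inherited from that of the $u^i$) lets me treat the subtracted baseline as a non-negative $p$, and under maximal support over together-outcomes $\max_{\bm s \in \D(\bm S)} w(\bm s) \to w_\star$ while $\min_{\bm s \in \D(\bm S)} w(\bm s)$ converges to the worst welfare the cooperating agents realise. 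Substituting these empirical extrema for the corresponding game quantities and invoking the elementary fact (with $a$ the achieved welfare, $p$ the non-negative welfare baseline, and $q$ related to $w_\star$) gives $\CC \le \frac{\min_{\bm s \in \D(\bm S)} w(\bm s)}{\max_{\bm s \in \D(\bm S)} w(\bm s)}$ in the limit.

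The step I would spend the most care on is the collective bound, specifically reconciling the scaling denominator $w_\star - w_{\bm 0}$ in $\CC$ with the $w_\star - w_{\eps}$ that the elementary fact naturally produces, since $w_{\eps} \le w_{\bm 0}$ in general. I would address this either by working in the regime where individual play is (near-)rational, so that $w_{\bm 0} \approx w_{\eps}$ — consistent with simultaneously estimating $\IC$ — or by making precise which outcomes "maximal support over together-play'' reveals, so that the empirical minimum converges to the achieved welfare and the empirical maximum to $w_\star$. The convergence statements themselves are routine once the support assumption is formalised (an almost-sure consequence of i.i.d. sampling from $d$ together with the finiteness of $\bm S$), so that the genuine content lies entirely in pinning down these limiting identifications; everything else collapses to the one-line non-negativity inequality, and both displayed bounds are upper bounds precisely because limited data can only lead us to under-approximate the agents' true attainable payoffs.
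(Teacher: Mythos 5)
Your proof of the $\IC^i$ bound is correct and is essentially the paper's own argument: for every observed profile, $\IC^i \leq \frac{u^i(\bm{s}) - p}{q - p} \leq \frac{u^i(\bm{s})}{q} \leq \frac{u^i(\bm{s})}{\hat{q}(\bm{s})}$, where the middle step is exactly your elementary fact (valid because non-negativity gives $p \geq 0$), and maximal support makes the empirical best-response value $\hat{q}(\bm{s})$ converge to $q = \max_{\tilde{s}^i} u^i(\bm{s}^{-i},\tilde{s}^i)$ in the limit. No issues there.

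The $\CC$ half, however, has a genuine gap, and it is precisely the step you flagged. The resolution is not your option (a) (assuming $w_{\bm{0}} \approx w_{\eps}$, which proves a different statement), and your default limiting identification --- empirical maximum $\to w_\star$ --- actually makes the claimed bound \emph{false} in general. Concretely: under that identification the limit ratio is $\frac{w_{\eps} + \CC\,(w_\star - w_{\bm{0}})}{w_\star}$, and with $w_{\eps} = 1$, $w_{\bm{0}} = 2$, $w_\star = 3$, $\CC = 1$ this equals $\frac{2}{3} < 1 = \CC$. The paper instead pins down the set of ``outcomes generated when agents act together'' as exactly those $\bm{s}$ with $w_{\eps} + \CC \cdot (w_\star - w_{\bm{0}}) \leq w(\bm{s}) \leq w_{\eps} + (w_\star - w_{\bm{0}})$ (the same modelling premise used in its experiments), so that maximal support forces the empirical \emph{maximum} to converge to $w_{\eps} + (w_\star - w_{\bm{0}})$, which is strictly below $w_\star$ whenever $w_{\eps} < w_{\bm{0}}$: individual irrationality caps the best observable welfare, not just the worst. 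With both endpoints sharing the offset $w_{\eps}$ and the span $w_\star - w_{\bm{0}}$, the bound then follows from your elementary fact in its ``add a non-negative constant to numerator and denominator'' form, $\CC = \frac{\CC\,(w_\star - w_{\bm{0}})}{w_\star - w_{\bm{0}}} \leq \frac{w_{\eps} + \CC\,(w_\star - w_{\bm{0}})}{w_{\eps} + (w_\star - w_{\bm{0}})}$, using $w_{\eps} \geq 0$ (this is where non-negativity of utilities enters, alongside the paper's aside dismissing the degenerate all-zero case). So the missing idea is not analytic but definitional: you must commit to the paper's characterisation of the observable outcome set under capabilities $(\eps, \CC)$ before taking limits; the inequality you were trying to force through with $w_\star$ in the denominator cannot be salvaged otherwise.
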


In Figure \ref{fig:inference} we evaluate the accuracy of these estimates using samples generated from 100 randomly generated delegation games of various sizes. Consistent with our previous arguments, it is far easier to infer alignment than capabilities in the setup we consider, though we leave open the question of whether stronger assumptions and/or different setups might allow us to gain improved estimates of the latter.

\begin{figure}
    \centering
    \begin{subfigure}{0.49\columnwidth}
        \centering
        \includegraphics[width=0.95\textwidth]{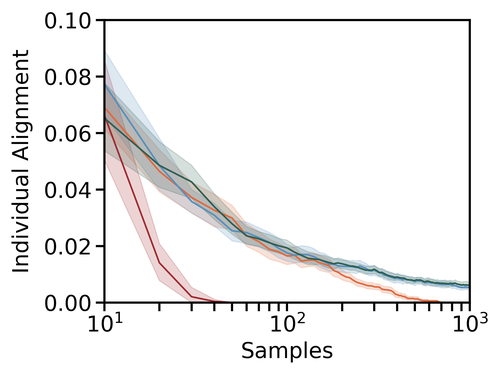}
    \end{subfigure}
    \begin{subfigure}{0.49\columnwidth}
        \centering
        \includegraphics[width=0.95\textwidth]{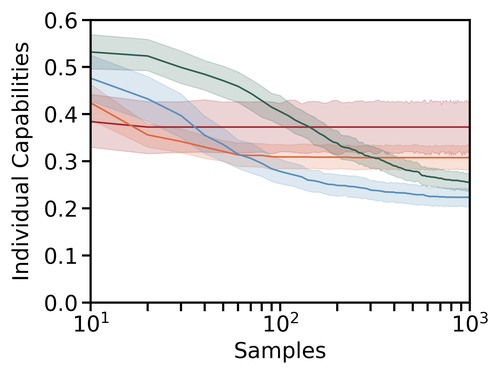}
    \end{subfigure}

    \begin{subfigure}{0.49\columnwidth}
        \centering
        \includegraphics[width=0.95\textwidth]{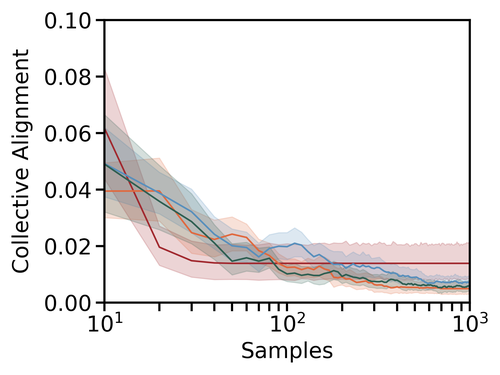}
    \end{subfigure}
    \begin{subfigure}{0.49\columnwidth}
        \centering
        \includegraphics[width=0.95\textwidth]{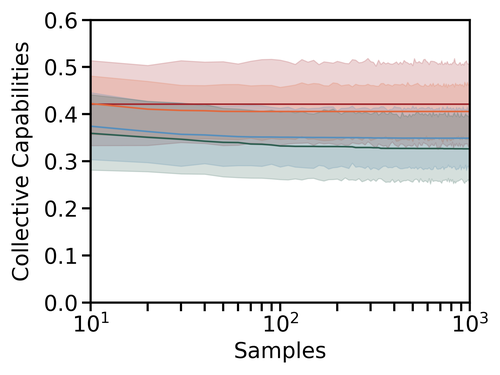}
    \end{subfigure}
    \caption{We report the mean absolute error of estimates of the four measures. The red, orange, blue, and green lines represent games with $10^k$ outcomes for $k \in \{1,2,3,4\}$, respectively. Shaded areas show 90\% confidence intervals.}
    \label{fig:inference}
\end{figure}

\section{Conclusion}

We formalised and studied problems of \emph{cooperation} and \emph{control} in delegation games -- a general model for interactions between multiple AI systems on behalf of multiple humans -- breaking these down into \emph{alignment} and \emph{capabilities}. We showed how these concepts are both necessary \emph{and} sufficient for good outcomes, and how they can be inferred from data.

We focused on strategic-form games to make our theoretical contributions clearer and more general, but future work could develop more specific results in more complex, structured models, or consider extensions where: a) the correspondence between principals and agents is not one-to-one; and b) the principals can also take actions. Finally, for practical applications we must develop better methods for inferring the concepts in this work from data.

\appendix

\section*{Acknowledgments}

The authors wish to thank Bart Jaworksi for contributing to an earlier version of this work, several anonymous reviewers for their comments, and Jesse Clifton, Joar Skalse, Sam Barnett, Vincent Conitzer, Charlie Griffn, David Hyland, Michael Wooldridge, Ted Turocy, and Alessandro Abate for insightful discussions on these topics. Lewis Hammond acknowledges the support of an EPSRC Doctoral Training Partnership studentship (Reference: 2218880). Oliver Sourbut acknowledges the University of Oxford’s Autonomous Intelligent Machines and Systems CDT and the Good Ventures Foundation for their support.

\section*{Contribution Statement}

Lewis Hammond conceived the initial project direction. Oliver Sourbut, Lewis Hammond, and Harriet Wood developed the direction together and contributed to the manuscript. Oliver Sourbut conceived and proved a majority of bounds presented, and Lewis Hammond a majority of inference results. Harriet Wood contributed to early experiments in code, and Oliver Sourbut and Lewis Hammond produced a majority of the final experiments.

\bibliographystyle{named}
\bibliography{refs}

\begin{thebibliography}{}

\bibitem[\protect\citeauthoryear{Alcalde-Unzu and
  Vorsatz}{2015}]{AlcaldeUnzu2015}
Jorge Alcalde-Unzu and Marc Vorsatz.
\newblock Do we agree? measuring the cohesiveness of preferences.
\newblock {\em Theory and Decision}, 80(2):313--339, 2015.

\bibitem[\protect\citeauthoryear{Armstrong and
  Mindermann}{2018}]{Armstrong2018}
Stuart Armstrong and S{\"o}ren Mindermann.
\newblock Occam's razor is insufficient to infer the preferences of irrational
  agents.
\newblock In {\em Proceedings of the 32nd International Conference on Neural
  Information Processing Systems}, pages 5603--5614, 2018.

\bibitem[\protect\citeauthoryear{Arrow \bgroup \em et al.\egroup
  }{1953}]{Arrow1953}
K.~J. Arrow, E.~W. Barankin, and D.~Blackwell.
\newblock 5. admissible points of convex sets.
\newblock In {\em Contributions to the Theory of Games ({AM}-28), Volume {II}},
  pages 87--92. Princeton University Press, 1953.

\bibitem[\protect\citeauthoryear{Awasthi \bgroup \em et al.\egroup
  }{2010}]{Awasthi2010}
Pranjal Awasthi, Maria-Florina Balcan, Avrim Blum, Or~Sheffet, and Santosh
  Vempala.
\newblock On nash-equilibria of approximation-stable games.
\newblock In {\em Algorithmic Game Theory}, pages 78--89. Springer Berlin
  Heidelberg, 2010.

\bibitem[\protect\citeauthoryear{Back}{1986}]{Back1986}
Kerry Back.
\newblock Concepts of similarity for utility functions.
\newblock {\em Journal of Mathematical Economics}, 15(2):129--142, 1986.

\bibitem[\protect\citeauthoryear{Bernheim}{1984}]{Bernheim1984}
B.~Douglas Bernheim.
\newblock Rationalizable strategic behavior.
\newblock {\em Econometrica}, 52(4):1007, 1984.

\bibitem[\protect\citeauthoryear{Bostrom}{2014}]{Bostrom2014}
Nick Bostrom.
\newblock {\em Superintelligence: Paths, Dangers, Strategies}.
\newblock Oxford University Press, 2014.

\bibitem[\protect\citeauthoryear{Che \bgroup \em et al.\egroup
  }{2020}]{Che2020}
Yeon-Koo Che, Jinwoo Kim, Fuhito Kojima, and Christopher~Thomas Ryan.
\newblock "near" weighted utilitarian characterizations of pareto optima.
\newblock {\em arXiv:2008.10819}, 2020.

\bibitem[\protect\citeauthoryear{Christiano}{2018}]{Christiano2018c}
Paul Christiano.
\newblock Clarifying “ai alignment”.
\newblock AI Alignment, 2018.

\bibitem[\protect\citeauthoryear{Cobbe \bgroup \em et al.\egroup
  }{2019}]{Cobbe2019}
Karl Cobbe, Oleg Klimov, Christopher Hesse, Taehoon Kim, and John Schulman.
\newblock Quantifying generalization in reinforcement learning.
\newblock In {\em Proceedings of the 36th International Conference on Machine
  Learning}, volume~97, pages 1282--1289, 2019.

\bibitem[\protect\citeauthoryear{Conen and Sandholm}{2001}]{Conen2001}
Wolfram Conen and Tuomas Sandholm.
\newblock Preference elicitation in combinatorial auctions.
\newblock In {\em Proceedings of the 3rd {ACM} conference on Electronic
  Commerce}. {ACM}, 2001.

\bibitem[\protect\citeauthoryear{Cousins \bgroup \em et al.\egroup
  }{2023}]{Cousins2023}
Cyrus Cousins, Bhaskar Mishra, Enrique Areyan~Viqueira, and Amy Greenwald.
\newblock Learning properties in simulation-based games.
\newblock In {\em Proceedings of the 2023 International Conference on
  Autonomous Agents and Multiagent Systems}, page 272–280, 2023.

\bibitem[\protect\citeauthoryear{Dafoe \bgroup \em et al.\egroup
  }{2020}]{Dafoe2020}
Allan Dafoe, Edward Hughes, Yoram Bachrach, Tantum Collins, Kevin~R. McKee,
  Joel~Z. Leibo, Kate Larson, and Thore Graepel.
\newblock Open problems in cooperative ai.
\newblock {\em arXiv:2012.08630}, 2020.

\bibitem[\protect\citeauthoryear{Daniilidis}{2000}]{Daniilidis2000}
Aris Daniilidis.
\newblock Arrow-barankin-blackwell theorems and related results in cone
  duality: A survey.
\newblock In {\em Lecture Notes in Economics and Mathematical Systems}, pages
  119--131. Springer Berlin Heidelberg, 2000.

\bibitem[\protect\citeauthoryear{Doran \bgroup \em et al.\egroup
  }{1997}]{Doran1997}
J.~E. Doran, S.~Franklin, N.~R. Jennings, and T.~J. Norman.
\newblock On cooperation in multi-agent systems.
\newblock {\em The Knowledge Engineering Review}, 12(3):309--314, 1997.

\bibitem[\protect\citeauthoryear{Du \bgroup \em et al.\egroup }{2023}]{Du2023}
Yali Du, Joel~Z. Leibo, Usman Islam, Richard Willis, and Peter Sunehag.
\newblock A review of cooperation in multi-agent learning.
\newblock {\em arXiv:2312.05162}, 2023.

\bibitem[\protect\citeauthoryear{Fischhoff and Manski}{2000}]{Fischhoff2000}
Baruch Fischhoff and Charles~F. Manski, editors.
\newblock {\em Elicitation of Preferences}.
\newblock Springer Netherlands, 2000.

\bibitem[\protect\citeauthoryear{Fürnkranz and
  Hüllermeier}{2011}]{Fuernkranz2011}
Johannes Fürnkranz and Eyke Hüllermeier, editors.
\newblock {\em Preference Learning}.
\newblock Springer Berlin Heidelberg, 2011.

\bibitem[\protect\citeauthoryear{Gal \bgroup \em et al.\egroup
  }{2004}]{Gal2004}
Ya'akov Gal, Avi Pfeffer, Francesca Marzo, and Barbara~J. Grosz.
\newblock Learning social preferences in games.
\newblock In {\em Proceedings of the 19th National Conference on Artifical
  Intelligence}, AAAI'04, page 226–231. AAAI Press, 2004.

\bibitem[\protect\citeauthoryear{Gleave \bgroup \em et al.\egroup
  }{2021}]{Gleave2021}
Adam Gleave, Michael Dennis, Shane Legg, Stuart Russell, and Jan Leike.
\newblock Quantifying differences in reward functions.
\newblock In {\em 9th International Conference on Learning Representations},
  2021.

\bibitem[\protect\citeauthoryear{Gul and Pesendorfer}{2001}]{Gul2001}
Faruk Gul and Wolfgang Pesendorfer.
\newblock Temptation and self-control.
\newblock {\em Econometrica}, 69(6):1403--1435, 2001.

\bibitem[\protect\citeauthoryear{Hubinger}{2020}]{Hubinger2020a}
Evan Hubinger.
\newblock Clarifying inner alignment terminology.
\newblock Alignment Forum, 2020.

\bibitem[\protect\citeauthoryear{Insa-Cabrera \bgroup \em et al.\egroup
  }{2012}]{InsaCabrera2012}
Javier Insa-Cabrera, Jos{\'{e}}-Luis Benacloch-Ayuso, and Jos{\'{e}}
  Hern{\'{a}}ndez-Orallo.
\newblock On measuring social intelligence: Experiments on competition and
  cooperation.
\newblock In {\em Artificial General Intelligence}, pages 126--135. Springer
  Berlin Heidelberg, 2012.

\bibitem[\protect\citeauthoryear{Kalai and Kalai}{2013}]{Kalai2013}
Adam Kalai and Ehud Kalai.
\newblock Cooperation in strategic games revisited.
\newblock {\em The Quarterly Journal of Economics}, 128(2):917--966, 2013.

\bibitem[\protect\citeauthoryear{Kenton \bgroup \em et al.\egroup
  }{2021}]{Kenton2021}
Zachary Kenton, Tom Everitt, Laura Weidinger, Iason Gabriel, Vladimir Mikulik,
  and Geoffrey Irving.
\newblock Alignment of language agents.
\newblock {\em arXiv:2103.14659}, 2021.

\bibitem[\protect\citeauthoryear{Koutsoupias and
  Papadimitriou}{1999}]{Koutsoupias1999}
Elias Koutsoupias and Christos Papadimitriou.
\newblock Worst-case equilibria.
\newblock In {\em {STACS} 99}, pages 404--413. Springer Berlin Heidelberg,
  1999.

\bibitem[\protect\citeauthoryear{Kuleshov and Schrijvers}{2015}]{Kuleshov2015}
Volodymyr Kuleshov and Okke Schrijvers.
\newblock Inverse game theory: Learning utilities in~succinct games.
\newblock In {\em Web and Internet Economics}, pages 413--427. Springer Berlin
  Heidelberg, 2015.

\bibitem[\protect\citeauthoryear{Laffont and Martimort}{2002}]{Laffont2002}
Jean-Jacques Laffont and David Martimort.
\newblock {\em The Theory of Incentives}.
\newblock Princeton University Press, 2002.

\bibitem[\protect\citeauthoryear{Legg and Hutter}{2007}]{Legg2007}
Shane Legg and Marcus Hutter.
\newblock Universal intelligence: A definition of machine intelligence.
\newblock {\em Minds and Machines}, 17(4):391--444, 2007.

\bibitem[\protect\citeauthoryear{Leibo \bgroup \em et al.\egroup
  }{2021}]{Leibo2021}
Joel~Z. Leibo, Edgar~A. Du{\'{e}}{\~{n}}ez{-}Guzm{\'{a}}n, Alexander
  Vezhnevets, John~P. Agapiou, Peter Sunehag, Raphael Koster, Jayd Matyas,
  Charlie Beattie, Igor Mordatch, and Thore Graepel.
\newblock Scalable evaluation of multi-agent reinforcement learning with
  melting pot.
\newblock In {\em Proceedings of the 38th International Conference on Machine
  Learning}, volume 139, pages 6187--6199, 2021.

\bibitem[\protect\citeauthoryear{Mas-Colell \bgroup \em et al.\egroup
  }{1995}]{MasColell1995}
Andreu Mas-Colell, Michael~D. Whinston, and Jerry~R. Green.
\newblock {\em Microeconomic Theory}.
\newblock Oxford University Press, 1995.

\bibitem[\protect\citeauthoryear{Pearce}{1984}]{Pearce1984}
David~G. Pearce.
\newblock Rationalizable strategic behavior and the problem of perfection.
\newblock {\em Econometrica}, 52(4):1029, 1984.

\bibitem[\protect\citeauthoryear{Peters}{2014}]{Peters2014}
Michael Peters.
\newblock Competing mechanisms.
\newblock {\em Canadian Journal of Economics/Revue canadienne
  d{\textquotesingle}{\'{e}}conomique}, 47(2):373--397, 2014.

\bibitem[\protect\citeauthoryear{Rapoport and Chammah}{1966}]{Rapoport1966}
Anatol Rapoport and Albert~M. Chammah.
\newblock The game of chicken.
\newblock {\em American Behavioral Scientist}, 10(3):10--28, 1966.

\bibitem[\protect\citeauthoryear{Rizk \bgroup \em et al.\egroup
  }{2019}]{Rizk2019}
Yara Rizk, Mariette Awad, and Edward~W. Tunstel.
\newblock Cooperative heterogeneous multi-robot systems: A survey.
\newblock {\em ACM Computing Surveys}, 52(2):1--31, 2019.

\bibitem[\protect\citeauthoryear{Roughgarden}{2015}]{Roughgarden2015}
Tim Roughgarden.
\newblock Intrinsic robustness of the price of anarchy.
\newblock {\em Journal of the {ACM}}, 62(5):1--42, 2015.

\bibitem[\protect\citeauthoryear{Russell}{2019}]{Russell2019}
Stuart Russell.
\newblock {\em Human Compatible}.
\newblock Penguin LCC US, 2019.

\bibitem[\protect\citeauthoryear{Sengul \bgroup \em et al.\egroup
  }{2011}]{Sengul2011}
Metin Sengul, Javier Gimeno, and Jay Dial.
\newblock Strategic delegation: A review, theoretical integration, and research
  agenda.
\newblock {\em Journal of Management}, 38(1):375--414, 2011.

\bibitem[\protect\citeauthoryear{Simon}{1956}]{Simon1956}
H.~A. Simon.
\newblock Rational choice and the structure of the environment.
\newblock {\em Psychological Review}, 63(2):129--138, 1956.

\bibitem[\protect\citeauthoryear{Skalse \bgroup \em et al.\egroup
  }{2023}]{Skalse2023}
Joar Skalse, Lucy Farnik, Sumeet~Ramesh Motwani, Erik Jenner, Adam Gleave, and
  Alessandro Abate.
\newblock Starc: A general framework for quantifying differences between reward
  functions.
\newblock {\em arXiv:2309.15257}, 2023.

\bibitem[\protect\citeauthoryear{Taylor \bgroup \em et al.\egroup
  }{2020}]{Taylor2020}
Jessica Taylor, Eliezer Yudkowsky, Patrick LaVictoire, and Andrew Critch.
\newblock {\em Alignment for Advanced Machine Learning Systems}, pages
  342--382.
\newblock Oxford University Press, 2020.

\bibitem[\protect\citeauthoryear{Taylor}{2016}]{Taylor2016}
Jessica Taylor.
\newblock Quantilizers: A safer alternative to maximizers for limited
  optimization.
\newblock In {\em Proceedings of the 2016 AAAI/ACM Conference on AI, Ethics,
  and Society}, 2016.

\bibitem[\protect\citeauthoryear{Tewolde}{2021}]{Tewolde2021}
Emanuel Tewolde.
\newblock Game transformations that preserve nash equilibrium sets and/or best
  response sets.
\newblock {\em arXiv:2111.00076}, 2021.

\bibitem[\protect\citeauthoryear{Torreño \bgroup \em et al.\egroup
  }{2017}]{Torreno2017}
Alejandro Torreño, Eva Onaindia, Antonín Komenda, and Michal Štolba.
\newblock Cooperative multi-agent planning: A survey.
\newblock {\em ACM Computing Surveys}, 50(6):1--32, 2017.

\bibitem[\protect\citeauthoryear{Tuomela}{2000}]{Tuomela2000}
Raimo Tuomela.
\newblock {\em Cooperation}.
\newblock Springer Netherlands, 2000.

\bibitem[\protect\citeauthoryear{Vickers}{1985}]{Vickers1985}
John Vickers.
\newblock Delegation and the theory of the firm.
\newblock {\em The Economic Journal}, 95:138, 1985.

\bibitem[\protect\citeauthoryear{{von Neumann} and
  Morgenstern}{1944}]{vonNeumann1944}
John {von Neumann} and Oskar Morgenstern.
\newblock {\em Theory of Games and Economic Behavior}.
\newblock Princeton University Press, 1944.

\bibitem[\protect\citeauthoryear{Walsh \bgroup \em et al.\egroup
  }{2002}]{Walsh2002}
William~E. Walsh, Rajarshi Das, Gerald Tesauro, and Jeffrey~O. Kephar.
\newblock Analyzing complex strategic interactions in multi-agent systems.
\newblock In {\em Game Theoretic and Decision Theoretic Agents Workshop at
  AAAI}, 2002.

\bibitem[\protect\citeauthoryear{Waugh \bgroup \em et al.\egroup
  }{2011}]{Waugh2011}
Kevin Waugh, Brian~D. Ziebart, and J.~Andrew Bagnell.
\newblock Computational rationalization: The inverse equilibrium problem.
\newblock In {\em Proceedings of the 28th International Conference on
  International Conference on Machine Learning}, page 1169–1176, 2011.

\bibitem[\protect\citeauthoryear{Wellman}{2006}]{Wellman2006}
Michael~P. Wellman.
\newblock Methods for empirical game-theoretic analysis.
\newblock In {\em Proceedings of the Twenty-First {AAAI} Conference on
  Artificial Intelligence}, AAAI'06, page 1552–1555. AAAI Press, 2006.

\bibitem[\protect\citeauthoryear{West \bgroup \em et al.\egroup
  }{2007}]{West2007}
S.~A. West, A.~S. Griffin, and A.~Gardner.
\newblock Social semantics: altruism, cooperation, mutualism, strong
  reciprocity and group selection.
\newblock {\em Journal of Evolutionary Biology}, 20(2):415--432, 2007.

\bibitem[\protect\citeauthoryear{Woolley \bgroup \em et al.\egroup
  }{2010}]{Woolley2010}
Anita~Williams Woolley, Christopher~F. Chabris, Alex Pentland, Nada Hashmi, and
  Thomas~W. Malone.
\newblock Evidence for a collective intelligence factor in the performance of
  human groups.
\newblock {\em Science}, 330(6004):686--688, 2010.

\bibitem[\protect\citeauthoryear{Yaari}{1981}]{Yaari1981}
Menahem~E Yaari.
\newblock Rawls, edgeworth, shapley, nash: Theories of distributive justice
  re-examined.
\newblock {\em Journal of Economic Theory}, 24(1):1--39, 1981.

\bibitem[\protect\citeauthoryear{Yamashita}{2010}]{Yamashita2010}
Takuro Yamashita.
\newblock Mechanism games with multiple principals and three or more agents.
\newblock {\em Econometrica}, 78(2):791--801, 2010.

\bibitem[\protect\citeauthoryear{Yu \bgroup \em et al.\egroup }{2019}]{Yu2019}
Lantao Yu, Jiaming Song, and Stefano Ermon.
\newblock Multi-agent adversarial inverse reinforcement learning.
\newblock In {\em Proceedings of the 36th International Conference on Machine
  Learning}, volume~97, pages 7194--7201, 2019.

\end{thebibliography}

\setcounter{lemma}{0}
\setcounter{proposition}{0}
\setcounter{theorem}{0}
\setcounter{corollary}{0}

\section{Notation}
\label{app:notation}

See the opening paragraph of Section \ref{sec:background} for a summary of the notational principles employed in this paper.
Other general mathematical notation (aside from mere abbreviations) that we repeatedly make use of is as follows.

\begin{center}
    \begin{tabular}{l l l}
        \toprule
        \textbf{Term} & \textbf{Meaning} & \textbf{Section}\\
        \midrule
        $G$ & Game & \ref{sec:background}\\
        $u$ & Utility function & \ref{sec:background}\\
        $S$ & Pure strategy space & \ref{sec:background}\\
        $s$ & Pure strategy & \ref{sec:background}\\
        $\Sigma$ & Mixed strategy space & \ref{sec:background}\\
        $\sigma$ & Mixed strategy & \ref{sec:background}\\
        $D$ & Delegation game & \ref{sec:delegation-games}\\
        $w$ & Social welfare function & \ref{sec:delegation-games}\\
        $w_+$ & Ideal welfare & \ref{sec:delegation-games}\\
        $w_-$ & Pessimal welfare & \ref{sec:delegation-games}\\
        $w_\star$ & Maximal welfare & \ref{sec:delegation-games}\\
        $w_\bullet$ & Minimal welfare & \ref{sec:delegation-games}\\
        $\preceq$ & Preference ordering & \ref{sec:individual-alignment}\\
        $\nu$ & Normalisation function & \ref{sec:individual-alignment}\\
        $m$ & (Strictly convex) norm & \ref{sec:individual-alignment}\\
        $c$ & (Affine-equivariant) shift function & \ref{sec:individual-alignment}\\
        $\mu^w$ & Normalised welfare proxy & \ref{sec:collective-alignment}\\
        $w_{\eps}$ & Lowest $\epsNE$ welfare & \ref{sec:collective-capabilities}\\
        \bottomrule
    \end{tabular}
\end{center}

\section{Proofs}
\label{app:proofs}

We begin by providing proofs for all of the results stated in the main body of the paper, as well as a small number of auxiliary results.

\subsection{Desiderata}

We begin by proving that the measures introduced in Definitions \ref{def:IA}, \ref{def:IC}, \ref{def:CA}, and \ref{def:CC} satisfy the desiderata introduced in Section \ref{sec:delegation-games}. The original statements of the results can be found in \ref{sec:desiderata}. In the proofs below, we use the following properties of the norm $m$:
\begin{itemize}
    \item Positive definiteness: if $m(u) = 0$ then $u = \bm{0}$,
    \item Absolute homogeneity: $m(k \cdot u) = \abs{k} \cdot m(u)$ for all $k \in \R$ and $u \in U$,
    \item Strict convexity: $m\big( k \cdot u + (1-k) \cdot u' \big) < k \cdot m(u) + (1 - k) \cdot m(u')$ for all $u \neq u' \in U$ and all $k \in (0,1)$.
\end{itemize}
Note that the second and third properties imply the triangle inequality: $m(u + u') \leq m(u) + m(u')$.

\begin{proposition}[D1]
    Consider a delegation game $D$ with measures $\IA(D)$, $\IC(D)$, $\CA(D)$, and $\CC(D)$. Holding fixed any three of the measures, then for any value $v \in [0,1]$ (or $\bm{v} \in [0,1]^n$ for $\IA$ or $\IC$), there is a game $D'$ such that the fourth measure takes value $v$ ($\bm{v}$) in $D'$ and the other measures retain their previous values.
\end{proposition}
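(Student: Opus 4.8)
The plan is to exploit the structural fact noted just above the statement: the two alignment measures $\IA$ and $\CA$ are functions of the utility profiles $\bm{u}, \bm{\uHat}$ alone, whereas the two capability measures $\IC$ and $\CC$ are functions only of how the agents play $G$ (the rationality levels $\eps$ and the realised profile $\bm{\sigma}$). This yields a primary orthogonality — we may perturb the payoffs without touching the play, and conversely — together with two finer ones: within alignment, $\CA$ depends only on the agent utilities $\bm{u}$ while $\IA$ depends on the \emph{relationship} between $\bm{u}$ and $\bm{\uHat}$; within capabilities, $\IC$ is fixed by the per-agent rationalities while $\CC$ is fixed by the welfare actually achieved inside the $\epsNE$-feasible set. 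I would treat the four cases in increasing order of difficulty.

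Varying $\IA$ is immediate: modify \emph{only} the principals' utilities $\uHat^i$, leaving $G$ and the play untouched, so that $\IC$, $\CC$, and $\CA$ (none of which reference $\bm{\uHat}$) are preserved by construction. To hit a target $\bm{v}$, note that $\uHat^i_\nu$ and $u^i_\nu$ both lie on the unit $m$-sphere, so $m(\uHat^i_\nu - u^i_\nu) \in [0,2]$ and hence $\IA^i \in [0,1]$; since that sphere is path-connected, moving $\uHat^i_\nu$ continuously from $u^i_\nu$ (distance $0$) to $-u^i_\nu$ (distance $2$) sweeps every intermediate distance by the intermediate value theorem, so some choice gives $\IA^i = v^i$. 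Varying $\IC$ or $\CC$ is dual: hold the utilities fixed (preserving $\IA$ and $\CA$) and alter only the play. For $\IC$, the expected payoffs $u^i(\bm{\sigma}^{-i}, \cdot)$ achievable by agent $i$ fill an interval, so choosing $\sigma^i$ whose payoff is a $(1-\epsilon^i)$-fraction of the way from the minimum to the maximum makes $\epsilon^i$ the smallest value for which $\sigma^i \in \epsBR(\bm{\sigma}^{-i};G)$, i.e. $\IC^i = 1 - \epsilon^i = v^i$. For $\CC$, with $\eps$ (hence $w_{\eps}$) held fixed I would re-select the played profile — keeping each agent's rationality level, so that $\IC$ is preserved — so that the achieved welfare equals $w_{\eps} + \delta(w_\star - w_{\bm{0}})$.

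The genuinely delicate case is varying $\CA$, since it is a function of the agent utilities $\bm{u}$, and perturbing these disturbs \emph{every} other quantity — the normalised $u^i_\nu$ entering $\IA$, and the best-response and equilibrium structure underlying $\IC$ and $\CC$. My plan is threefold. First, realise the target value of $\CA$ by interpolating $\bm{u}$ along a continuous family running from a common-interest configuration ($\CA = 1$, all $u^i_\nu$ equal to $\mu^w$) to a maximally misaligned one ($\CA = 0$); continuity of $\CA$ in $\bm{u}$ and the intermediate value theorem then supply a game attaining any $v \in [0,1]$. Second, restore each $\IA^i$ in this new game by re-placing $\uHat^i$ via exactly the sphere argument above — crucially, this succeeds \emph{whatever} $u^i_\nu$ is, which is precisely why the arbitrary principal--agent discrepancy decouples $\IA$ from $\CA$. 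Third, restore $\IC$ and $\CC$ by re-selecting the play, using the payoff-interval and $\epsNE$-welfare arguments of the previous paragraph.

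The main obstacle is the joint feasibility underlying these last re-selections: a single profile $\bm{\sigma}$ simultaneously determines all the per-agent rationalities (hence $\IC$) and the collective welfare (hence $\CC$), so I must guarantee that the interpolating family stays rich enough that some $\bm{\sigma} \in \epsNE(G)$ realises the prescribed $\IC$ while its welfare spans the range needed to realise the prescribed $\CC$. I would secure this by building the $\CA$-family out of nondegenerate games with sufficiently many strategies — so that each agent's response interval is nontrivial and the $\epsNE$ welfares cover $[w_{\eps}, w_{\eps} + (w_\star - w_{\bm{0}})]$ — and by excluding degenerate constant-utility players, for whom $u^i_\nu = \bm{0}$ pins $\IA^i$ to $\{1, \tfrac12\}$ and the response interval collapses. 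The same nondegeneracy underwrites the easy cases, so establishing it for the interpolating family is the one technical lemma on which the whole argument really rests.
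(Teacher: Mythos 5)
Your decomposition coincides with the paper's own proof: you vary only $\bm{\uHat}$ to set $\IA$ (which no other measure references), you treat $\IC$ and $\CC$ as features of how the game is \emph{played} rather than of the game itself, and you vary $\bm{u}$ to set $\CA$ and then re-place each $\uHat^i$ to restore $\IA$. Your path-connectedness/intermediate-value arguments are a sound, and in fact more explicit, substitute for the paper's bare assertions that vectors at the required distances exist (the paper just notes there are ``infinitely many choices'', assuming $U$ is not one-dimensional).

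The genuine gap is exactly the item you defer to ``the one technical lemma on which the whole argument really rests'': it is unproven, and your proposed remedy cannot supply it in all four cases. The paper disposes of the capabilities cases in one line by setting $D' = D$ and declaring $\IC$ and $\CC$ freely adjustable; you rightly observe this is too quick, since a single profile $\bm{\sigma}$ fixes $\IC$ and $\CC$ simultaneously. But your fix -- build the interpolating family out of rich, nondegenerate games -- is only available in the $\CA$ case, where you construct the game yourself. In the capabilities cases $D$ is \emph{given}, and ``re-selecting the play'' can be impossible: in a Prisoner's Dilemma with $\IC = \bm{1}$, each agent plays an exact best response, so the play must be the unique $\NE$, and then the definition of $\CC$ pins it to $0$; no re-selection of play realises $\CC = 1$ while $\IC = \bm{1}$, $\IA$, and $\CA$ are held fixed. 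Realising that combination requires constructing a genuinely different game $D'$ whose best $\NE$ attains $w_\star$ while preserving the alignment measures (possible, e.g., by permuting the coordinates of all utility vectors under a permutation-invariant norm, which preserves $\IA$ and $\CA$ but changes the equilibrium structure), and neither your sketch nor -- in fairness -- the paper's own proof does this. The same difficulty recurs whenever a non-extreme value of $\CC$ must be \emph{held fixed} while another measure is varied, since the $\NE$ welfares of the newly constructed game are typically a finite set and need not hit $w_{\bm{0}} + \CC \cdot (w_\star - w_{\bm{0}})$ exactly. So your instinct about where the crux lies is correct -- and it exposes a weakness the paper itself glosses over -- but the lemma you rest everything on is both unproven and, as stated (a property of the constructed family only), insufficient to cover the cases where the game is given rather than built.
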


\begin{proof}
    The case for the capabilities measures $\IC$ and $\CC$ is straightforward as they define how the agents \emph{play} the game, not any particular features of the game itself. Thus we may let $D' = D$ in these cases, and set $\IC$ or $\CC$ to any $\bm{v} \in [0,1]^n$ or $v \in [0,1]$ respectively.

    For the case of $\IA$ we leave fixed $\bm{u}$ and vary $\hat{\bm{u}}$. For each $i$, if $\bm{v}[i] = 1$ then we set $\uHat^i = u^i$, and if $\bm{v}[i] = 0$ then we set $\uHat^i = -u^i$. Otherwise, assuming $U$ is not one-dimensional, there are infinitely many choices of vectors $u \in U$ such that $m(u_\nu - u^i_\nu) = 2 - 2\bm{v}[i]$. It suffices to pick any such $u$ and set $\uHat^i = u$. 
    For the case of $\CA$ we vary both $\bm{u}$ and $\hat{\bm{u}}$, in order to preserve $\IA$. As before, if $v = 1$ then we set each $u^i_\nu = u_\nu$ for some $u \neq \bm{0}$. We then set each $\uHat^i$ as above, such that $m(\uHat^i_\nu - u^i_\nu) = 2 - 2\IA^i(D)$, ensuring that $\IA(D) = \IA(D')$. Similarly, if $v = 0$ then we set each $u^i$ such that $u^i \neq \bm{0}$ but $\mu^w = \bm{0}$, which gives:
    \begin{align*}
        \CA(D)
        &\coloneqq 1 - \sum_i \frac{m^i}{\sum_j m^j} \cdot m(\mu^w - u^i_\nu)\\
        &= 1 - \sum_i \frac{m^i}{\sum_j m^j} \cdot m(\bm{0} - u^i_\nu)\\
        &= 1 - \sum_i \frac{m^i}{\sum_j m^j} \cdot 1\\
        &= 1 - 1\\
        &= 0.
    \end{align*}
    We then once again adjust each $\uHat^i$ so that $m(\uHat^i_\nu - u^i_\nu) = 2 - 2\IA^i(D)$, using the argument from preceding case. 
    Finally, for the general case of $v \in (0,1)$, there are again an infinite number of choices of each $u^i \neq \bm{0}$ such that $(1 - v) \sum_j m^j = \sum_i m^i \cdot m(\mu^w - u^i_\nu)$. Given such a setting of $\bm{u}$, we may set $\bm{\uHat}$ as in the preceding cases.
\end{proof}

\begin{proposition}[D2]
    For any $1 \leq i \leq n$, $\IA^i = 1$ ($\IA^i = 0$) if and only if $\preceq^i = \hat{\preceq}^i$ ($\preceq^i = \hat{\succeq}^i$). Similarly $\CA = 1$ if and only if $\preceq^i = \preceq^j$ for every $i, j \in N$.
\end{proposition}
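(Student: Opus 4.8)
The plan is to treat the three biconditionals separately, in each case reducing a statement about the norm $m$ to an equality (or negation) of normalised utility functions and then invoking Lemma~\ref{lem:nu-equivalence}. The case $\IA^i = 1$ is immediate: by Definition~\ref{def:IA}, $\IA^i = 1$ holds iff $m(\uHat^i_\nu - u^i_\nu) = 0$, which by positive definiteness of $m$ holds iff $\uHat^i_\nu = u^i_\nu$, and this by Lemma~\ref{lem:nu-equivalence} is equivalent to $\preceq^i = \hat{\preceq}^i$.

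For $\IA^i = 0$, which by Definition~\ref{def:IA} means $m(\uHat^i_\nu - u^i_\nu) = 2$, I would first record the auxiliary identity $\nu(-u) = -u_\nu$ for any $u \in U$. This follows from the affine-equivariance of $c$ applied with scalar $-1$ (giving $c(-u) = -c(u)$) together with absolute homogeneity of $m$, so that $-u - c(-u)\bm{1} = -(u - c(u)\bm{1})$ and the normalising denominator is unchanged. Since $-\uHat^i$ represents the reversed ordering $\hat{\succeq}^i$, Lemma~\ref{lem:nu-equivalence} then yields that $\preceq^i = \hat{\succeq}^i$ is equivalent to $u^i_\nu = \nu(-\uHat^i) = -\uHat^i_\nu$. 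It therefore remains to show that $m(\uHat^i_\nu - u^i_\nu) = 2$ iff $\uHat^i_\nu = -u^i_\nu$, restricting to the non-degenerate case where both players have non-constant utilities so that $m(\uHat^i_\nu) = m(u^i_\nu) = 1$ (indifferent players are handled trivially, as then the norm cannot equal $2$). The backward direction is a one-line computation. For the forward direction I would write $m(\uHat^i_\nu - u^i_\nu) \leq m(\uHat^i_\nu) + m(-u^i_\nu) = 2$ by the triangle inequality, so the hypothesis forces equality there.

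The crux — and the main obstacle — is the equality condition for a strictly convex norm: $m(a+b) = m(a) + m(b)$ with $a, b \neq \bm{0}$ forces $b = \lambda a$ for some $\lambda > 0$. I would prove this by setting $t = m(a)/(m(a)+m(b)) \in (0,1)$ and normalising $a, b$ to unit vectors $\hat a, \hat b$, so that equality rearranges to $m(t \hat a + (1-t)\hat b) = t\,m(\hat a) + (1-t)\,m(\hat b)$; strict convexity then forbids $\hat a \neq \hat b$, giving $\hat a = \hat b$ and hence $b = \lambda a$. Applying this with $a = \uHat^i_\nu$ and $b = -u^i_\nu$ and using $m(\uHat^i_\nu) = m(u^i_\nu) = 1$ fixes $\lambda = 1$, so $\uHat^i_\nu = -u^i_\nu$ as required. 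This is exactly the step where strict convexity (rather than mere convexity) of $m$ is essential.

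Finally, for collective alignment, Definition~\ref{def:CA} expresses $1 - \CA$ as a sum of nonnegative terms $\frac{m^i}{\sum_j m^j}\, m(\mu^w - u^i_\nu)$, so $\CA = 1$ iff every term with $m^i > 0$ vanishes, i.e. $\mu^w = u^i_\nu$ for all such $i$ (terms with $m^i = 0$ carry zero weight and correspond to indifferent players). A short computation confirms that if all $u^i_\nu$ share a common value $v$ then $\mu^w = v$ as well, so the condition is equivalent to all the $u^i_\nu$ being equal; Lemma~\ref{lem:nu-equivalence} then translates this into $\preceq^i = \preceq^j$ for every $i, j$, completing the proof.
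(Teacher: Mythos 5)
Your proof is correct and follows essentially the same route as the paper's: each biconditional is reduced to an equality of normalised utilities via Lemma~\ref{lem:nu-equivalence}, using positive definiteness of $m$ for $\IA^i = 1$, the triangle inequality plus strict convexity for $\IA^i = 0$, and the nonnegative weighted-sum structure of $\CA$ for the collective case. The only differences are that you spell out details the paper leaves implicit --- notably the identity $\nu(-u) = -u_\nu$ and a complete proof of the equality case of the triangle inequality for a strictly convex norm (which the paper merely asserts as a collinearity consequence) --- which strengthens rather than changes the argument.
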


\begin{proof}
    We begin with the case of individual alignment by considering the $i^\th$ principal-agent pair, who have utility functions $\uHat^i$ and $u^i$, respectively. Recall that we write $c^i \coloneqq c(u^i)$ and $m^i \coloneqq m(u^i - c^i)$, and analogously for $\uHat^i$.

    Let us assume that $\preceq^i = \hat{\preceq}^i$. Then, by applying Lemma \ref{lem:nu-equivalence} to the case of $u^i$ and $\uHat^i$ we have that 
    $$\IA^i = 1 - \frac{1}{2}\cdot m(u_\nu - \uHat^i_\nu) = 1 - \frac{1}{2}\cdot m(\bm{0}) = 1,$$ 
    as required.
    Conversely, if $\IA^i = 1$ then by the positive definiteness of $m$ we must have $u^i_\nu = \uHat^i_\nu$ and hence (by Lemma \ref{lem:nu-equivalence}) that $\preceq^i = \hat{\preceq}^i$, as required.

    Next, we consider the case of maximal misalignment. In this case we have $\preceq^i = \hat{\succeq}^i$ and thus (by Lemma \ref{lem:nu-equivalence}) that $u^i_\nu = - \uHat^i_\nu$. Because $u^i_\nu \neq \bm{0} \neq \uHat^i_\nu$ then by the absolute homogeneity of $m$ and the fact that $m(u^i_\nu) = 1$:
    $$
    \IA^i
    = 1 - \frac{1}{2}\cdot m(u^i_\nu - \uHat^i_\nu)
    = 1 - \frac{1}{2}\cdot m(2u^i_\nu)
    = 1 - m(u^i_\nu)
    = 0,
    $$
    as required. On the other hand, if $\IA^i = 0$ then this implies that $m(u^i_\nu - \uHat^i_\nu) = 2$.
    The triangle inequality implies that $m(u^i_\nu - \uHat^i_\nu) \leq m(u^i_\nu) + m(- \uHat^i_\nu)$. Because $m$ is strictly convex, this inequality is an equality only when $u^i_\nu$ and $-\uHat^i_\nu$ are collinear, i.e. $u^i_\nu = - \lambda \uHat^i_\nu$ for some $\lambda$. As $m(u^i_\nu) = m(- \uHat^i_\nu) = 1$, this is possible only when $\lambda = 1$, and hence $u^i_\nu = - \uHat^i_\nu$. Thus, by \ref{lem:nu-equivalence}, we have $\preceq^i = \hat{\succeq}^i$ as required.

    Finally, we consider the case of perfect \emph{collective} alignment between two or more agents. First suppose that the agents all have identical preferences $\preceq$. Then, by Lemma \ref{lem:nu-equivalence}, this means that we have $u^i_\nu = u^j_\nu$ for all $i, j \in N$.
    
    Letting $u_\nu$ denote the common normalised utility, then we have $u^i = m^i u_\nu + c^i$ for each $1 \leq i \leq n$. The welfare representative utility is therefore given by:
    $$\mu^w \coloneqq \frac{\sum_i m^i u_\nu^i}{\sum_i m^i} = \frac{u_\nu \sum_i m^i}{\sum_i m^i} = u_\nu.$$
    Thus we have that:
    \begin{align*}
        \CA 
        &\coloneqq 1 - \sum_i \frac{m^i}{\sum_j m^j} \cdot m(\mu^w - u^i_\nu)\\
        &= 1 - \sum_i \frac{m^i}{\sum_j m^j} \cdot m(\mu^w - u_\nu)\\
        &= 1 - \sum_i \frac{m^i}{\sum_j m^j} \cdot m(\bm{0})\\
        &= 1,
    \end{align*}
    as required. Conversely, if we suppose that $\CA = 1$ then we must have that $m(\mu^w - u^i_\nu) = 0$ for all $1 \leq i \leq n$. By the triangle inequality we have:
    $$m(u_\nu^i - u_\nu^j) \leq m(u_\nu^i - \mu^w) + m(\mu^w - u_\nu^j) = 0,$$
    for any $i, j \in N$, and hence that $u_\nu^i = u_\nu^j$, by the positive definiteness of $m$. Thus, again by Lemma \ref{lem:nu-equivalence}, we have that all of the agents have the same preferences, concluding the proof.
\end{proof}

\begin{proposition}[D3]
    If $\IC = \bm{1}$ and $\CC = 1$ then any strategy $\bm{\sigma}$ the agents play is such that $w(\bm{\sigma}) = w_\star$. 
    If $\IA = \bm{1}$ then $\bm{\sigma}$ is Pareto-optimal for the principals.
    If, furthermore, $\hat{m}^i = r \cdot m^i$ for some $r > 0$ for all $i$, then $\hat{w}(\bm{\sigma}) = \hat{w}_\star$.
\end{proposition}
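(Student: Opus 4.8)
The plan is to unpack each hypothesis through the relevant definitions and chain them together.

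\textbf{First part.} Since $\IC = \bm{1}$ means every $\epsilon^i = 0$, the profile $\bm{\sigma}$ is a genuine $\NE$ and $w_{\eps} = w_{\bm{0}}$. Feeding $\delta = \CC = 1$ together with $\eps = \bm{0}$ into Definition~\ref{def:CC} then gives $w(\bm{\sigma}) \geq w_{\bm{0}} + 1 \cdot (w_\star - w_{\bm{0}}) = w_\star$. Since $w_\star$ is by definition the maximal achievable welfare, the reverse inequality $w(\bm{\sigma}) \leq w_\star$ holds automatically, so $w(\bm{\sigma}) = w_\star$.

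\textbf{Second part.} First I would translate $\IA = \bm{1}$ into a statement about preferences: by Definition~\ref{def:IA} and positive definiteness of $m$, $\IA^i = 1$ forces $\uHat^i_\nu = u^i_\nu$, and hence (Lemma~\ref{lem:nu-equivalence}, equivalently Proposition~\ref{prop:D2}) $\hat{\preceq}^i = \preceq^i$ for every $i$. Concretely, writing $u^i = m^i u^i_\nu + c^i$ and $\uHat^i = \hat{m}^i u^i_\nu + \hat{c}^i$, both payoffs are strictly increasing affine functions of the common quantity $u^i_\nu(\bm{\sigma})$ (in the nondegenerate case $m^i, \hat{m}^i > 0$; a player with $m^i = 0$ is indifferent everywhere and contributes only a constant, so may be ignored). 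Consequently the Pareto order induced by $\bm{\uHat}$ coincides exactly with that induced by $\bm{u}$. By the first part $w(\bm{\sigma}) = w_\star$ maximises the average-utilitarian agent welfare; since $w$ is strictly increasing in each coordinate, any maximiser is Pareto-optimal for the agents, and transporting this across the coinciding Pareto orders yields Pareto-optimality for the principals. (Equivalently, by contradiction: if some $\bm{\sigma}'$ Pareto-dominated $\bm{\sigma}$ for the principals it would also do so for the agents, forcing $w(\bm{\sigma}') > w_\star$, which is impossible.)

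\textbf{Third part.} Under the extra hypothesis $\hat{m}^i = r \cdot m^i$, I would substitute $m^i u^i_\nu = u^i - c^i$ into $\uHat^i = \hat{m}^i u^i_\nu + \hat{c}^i$ to obtain $\uHat^i = r(u^i - c^i) + \hat{c}^i = r u^i + b^i$, where $b^i \coloneqq \hat{c}^i - r c^i$ is a constant vector. Averaging over $i$ gives $\wHat = r \cdot w + B$ with $B \coloneqq \frac{1}{n}\sum_i b^i$ constant. Since $r > 0$ this is a strictly increasing affine transformation, so $w$ and $\wHat$ share the same maximisers; as $\bm{\sigma}$ maximises $w$ it also maximises $\wHat$, i.e. $\wHat(\bm{\sigma}) = \wHat_\star$.

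\textbf{Main obstacle.} The definitional unpackings in the first and third parts are routine. The conceptual crux is the second part: individual alignment only guarantees agreement of the \emph{per-agent} preference orderings, \emph{not} of the welfare ordering $\preceq^w$, so agent welfare-maximality cannot be pushed directly to principal welfare-maximality — only to Pareto-optimality. Making this precise, and thereby seeing exactly why the calibration condition $\hat{m}^i = r \cdot m^i$ is what upgrades Pareto-optimality to full welfare-optimality, is the step that requires the most care.
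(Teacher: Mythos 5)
Your proposal is correct and follows essentially the same route as the paper's own proof: unpacking Definition~\ref{def:CC} with $\eps = \bm{0}$ for the first part, using the affine relationship $\uHat^i = r^i u^i + \text{const}$ implied by $\IA = \bm{1}$ together with a Pareto-domination-contradicts-welfare-maximality argument for the second part, and observing that $\wHat$ is a positive affine transformation of $w$ under the calibration condition for the third. Your parenthetical contradiction argument in the second part is literally the paper's argument, and your closing remark correctly identifies why calibration (not alignment alone) is what upgrades Pareto-optimality to principal welfare-optimality.
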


\begin{proof}
    Recall that if $\IC = \bm{1} - \eps$ and $\CC = \delta$ then the agents only play strategies $\bm{\sigma}$ such that
    $w(\bm{\sigma}) \geq w_{\eps} + \delta \cdot (w_\star - w_{\bm{0}})$. If $\IC = \bm{1}$ (and hence $\eps = \bm{0}$) and $\CC = \delta = 1$ then clearly $w(\bm{\sigma}) = w_\star$, as required.

    Now let us suppose that each agent is maximally aligned with its principal, i.e. $\IA^i = 1$ for every $1 \leq i \leq n$. By the construction in the proof of Lemma \ref{lem:nu-equivalence}, this implies that $u^i = a^i \hat{u}^i + b^i$ where $a^i \coloneqq \frac{m^i}{\hat{m}^i}$. Assume, for a contradiction, that $\bm{\sigma}$ is not Pareto-optimal for the principals. Then there exists some $\bm{\sigma}'$ and some $j \in N$ such that $\uHat^j(\bm{\sigma}') > \uHat^j(\bm{\sigma})$ and $\uHat^i(\bm{\sigma}') \geq \uHat^i(\bm{\sigma})$ for all $1 \leq i \leq n$. But then:
    $$u^j(\bm{\sigma}') =  a^j \hat{u}^j(\bm{\sigma}') + b^j > a^j \hat{u}^j(\bm{\sigma}) + b^j = u^j(\bm{\sigma}),$$
    and similarly $u^i(\bm{\sigma}') \geq u^i(\bm{\sigma})$ for all $1 \leq i \leq n$. Thus, we have that $w(\bm{\sigma}') = \frac{1}{n} \sum_i u^i(\bm{\sigma}') > \frac{1}{n} \sum_i u^i(\bm{\sigma}) = w(\bm{\sigma})$, contradicting the fact that $w_\star \coloneqq \max_{\bm{\sigma}} w(\bm{\sigma})$.

    Finally, let us assume that $\hat{m}^i = r \cdot m^i$ for every $1 \leq i \leq n$. Then $a^i = \frac{m^i}{\hat{m}^i} = \frac{1}{r}$ and so:
    \begin{align*}
        \hat{w}(\bm{\sigma})
        &= \frac{1}{n} \sum_i \uHat^i(\bm{\sigma})\\
        &= \frac{1}{n} \sum_i r \cdot \big( u^i(\bm{\sigma}) - b^i \big)\\
        &= \frac{r}{n} \sum_i u^i(\bm{\sigma}) - \frac{r}{n} \sum_i b^i\\
        &= r w(\bm{\sigma}) - \frac{r}{n} \sum_i b^i.
    \end{align*}
    This implies that $\argmax_{\bm{\sigma}} \hat{w}(\bm{\sigma}) = \argmax_{\bm{\sigma}} w(\bm{\sigma})$, and hence that $\hat{w}(\bm{\sigma}) = \hat{w}_\star$, which concludes the final part of the proof.
\end{proof}

\begin{proposition}[D4]
    If $\CA = 1$ then $w_\star = w_+$.
\end{proposition}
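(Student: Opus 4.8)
The plan is to leverage the characterisation of perfect collective alignment established in the proof of Proposition~\ref{prop:D2}: when $\CA = 1$ we have $m(\mu^w - u^i_\nu) = 0$ for every $i$, so (by positive definiteness of $m$ together with Lemma~\ref{lem:nu-equivalence}) all agents share a single common normalised utility $u_\nu$, and each $u^i = m^i u_\nu + c^i$, where $c^i = c(u^i)\bm{1}$ is a constant offset. Since every $u^i$ is thus a non-negative affine transformation of the one function $u_\nu$, and a constant offset does not move the maximiser, all of the $u^i$ attain their maximum at exactly the same argument.

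The first concrete step is to reduce maximisation of welfare to maximisation of $u_\nu$. Summing gives $\sum_i u^i = \big(\sum_i m^i\big) u_\nu + \sum_i c^i$, so $w = \frac{1}{n}\big[\big(\sum_i m^i\big) u_\nu + \sum_i c^i\big]$; because $\sum_i m^i \geq 0$, any profile maximising $u_\nu$ also maximises $w$. As $u_\nu(\bm{\sigma}) = \expect_{\bm{\sigma}}[u_\nu(\bm{s})]$ is linear in $\bm{\sigma}$, its maximum over $\bm{\Sigma}$ is attained at some pure profile $\bm{s}^\star$, and this same $\bm{s}^\star$ then simultaneously maximises every $u^i$ and hence $w$, yielding $w_\star = w(\bm{s}^\star)$. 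Next I would evaluate the ideal welfare at this shared maximiser: since $\bm{u}_+[i] = \max_{\bm{s}} u^i(\bm{s}) = u^i(\bm{s}^\star)$ for every $i$, we get $w_+ = w(\bm{u}_+) = \frac{1}{n}\sum_i u^i(\bm{s}^\star) = w(\bm{s}^\star)$, and combining the two displays gives $w_\star = w_+$. (Equivalently, one can note that $w_\star \leq w_+$ holds in any game because the maximum of a sum is at most the sum of the maxima, and then use the existence of a common maximiser to obtain the reverse inequality.)

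The point I would be most careful about is the degenerate case where some $m^i = 0$, i.e. an indifferent agent. Here the normalisation sets $u^i_\nu = 0$, so $\CA = 1$ forces $\mu^w = \bm{0}$ and hence every $u^j_\nu = \bm{0}$: either all agents are indifferent or none are. In the all-indifferent case each $u^i$ is constant, so $w_\star = w_+$ holds trivially; otherwise all $m^i > 0$ and the argument above applies verbatim, with $\bm{s}^\star$ the common maximiser of the shared $u_\nu$. This case analysis is the only real obstacle, and it is minor; the substance of the proof is simply that shared preferences collapse the distinction between achieving everyone's maximum simultaneously (the ideal welfare) and achieving the best attainable average (the maximal welfare).
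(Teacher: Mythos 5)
Your proof is correct and follows essentially the same route as the paper's: from $\CA = 1$ you recover a common normalised utility (via the proof of Proposition \ref{prop:D2}), deduce that welfare and every agent's utility share a maximiser, and conclude that the max of the sum equals the sum of the maxes. The only difference is that where the paper cites Lemma \ref{lem:mu-welfare} (the ordering induced by $\mu^w$ equals that induced by $w$) and Lemma \ref{lem:pure-welfare} ($w_\star$ is attained at a pure profile), you re-derive both inline, via the affine algebra and the (multi)linearity of expected utility respectively.

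One caveat on your degenerate-case remark: the claim that some $m^i = 0$ forces $\mu^w = \bm{0}$ (and hence that ``either all agents are indifferent or none are'') is wrong. Agent $i$'s term in $\CA$ carries weight $m^i / \sum_j m^j = 0$, so $\CA = 1$ places no constraint relating $\mu^w$ to $u^i_\nu$ for an indifferent agent, and a mixed situation (some agents indifferent, others not) is perfectly possible. This does not damage the result, however: an indifferent agent's utility is constant and so attains its maximum at the common maximiser of the remaining agents, meaning your main argument goes through unchanged; note also that the paper's own proof silently assumes this degenerate case away.
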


\begin{proof}
    Following the proof of Proposition \ref{prop:D2}, $\CA = 1$ implies that $u^i_\nu = \mu^w$ for every $1 \leq i \leq n$, and thus that $\preceq^i = \preceq^\mu$, where $\preceq^\mu$ denotes the preference ordering over mixed strategies induced by $\mu^w$. By Lemma \ref{lem:mu-welfare} we have that $\preceq^\mu = \preceq^w$, where $\preceq^w$ denotes the preference ordering over mixed strategies induced by $w$, and hence, by transitivity, that $\preceq^i = \preceq^w$ for each $1 \leq i \leq n$. Thus, we have $\argmax_{\bm{s}} w(\bm{s}) = \argmax_{\bm{s}} u^i(\bm{s})$ and therefore:
    $$\max_{\bm{s}} w(\bm{s})
    = \max_{\bm{s}} \frac{1}{n} \sum_i u^i(\bm{s})
    = \frac{1}{n} \sum_i \max_{\bm{s}} u^i(\bm{s})
    = w_+.$$
    By Lemma \ref{lem:pure-welfare}, $w_\star = \max_{\bm{s}} w(\bm{s})$, and so $w_\star = w_+$.
\end{proof}

\begin{proposition}[D5]
    Given a delegation game $D_1$, let $D_2$ be such that $\preceq^i_1 = \preceq^i_2$ and $\hat{\preceq}^i_1 = \hat{\preceq}^i_2$ for each $1 \leq i \leq n$. Then $\IA(D_1) = \IA(D_2)$ and $\IC(D_1) = \IC(D_2)$. Moreover, if $D_2$ is such that $\preceq^w_1 = \preceq^w_2$ and the $u^i$ are affine-independent, then $\CA(D_1) = \CA(D_2)$ and $\CC(D_1) = \CC(D_2)$ as well.
\end{proposition}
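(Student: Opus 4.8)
The plan is to reduce everything to the structure of positive affine transformations. Since $\preceq^i_1 = \preceq^i_2$, Lemma~\ref{lem:nu-equivalence} gives $u^i_{1,\nu} = u^i_{2,\nu}$, and the classical characterisation of preference-equivalent utilities means $u^i_2 = a^i u^i_1 + b^i \bm{1}$ for some $a^i > 0$ and $b^i \in \R$ (and analogously $\uHat^i_2 = \hat{a}^i \uHat^i_1 + \hat{b}^i \bm{1}$). Using affine-equivariance of $c$, I would first record the scaling identity $u^i_2 - c^i_2 = a^i(u^i_1 - c^i_1)$, whence $m^i_2 = a^i m^i_1$ by absolute homogeneity. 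Then $\IA$ is immediate: since $u^i_{1,\nu} = u^i_{2,\nu}$ and $\uHat^i_{1,\nu} = \uHat^i_{2,\nu}$, Definition~\ref{def:IA} gives $\IA^i(D_1) = 1 - \tfrac12 m(\uHat^i_{1,\nu} - u^i_{1,\nu}) = \IA^i(D_2)$. For $\IC$, I would rearrange the $\epsBR$ inequality to express the minimal value as the ratio $\epsilon^i = \frac{\max_{\tilde{\sigma}^i} u^i(\bm{\sigma}^{-i},\tilde{\sigma}^i) - u^i(\bm{\sigma})}{\max_{\tilde{\sigma}^i} u^i(\bm{\sigma}^{-i},\tilde{\sigma}^i) - \min_{\tilde{\sigma}^i} u^i(\bm{\sigma}^{-i},\tilde{\sigma}^i)}$. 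Each occurrence of $u^i$ undergoes the same positive affine map, so $b^i$ cancels in numerator and denominator while $a^i$ cancels between them; the ratio, and hence $\IC^i$, is invariant. The same cancellation shows $\epsBR(\bm{\sigma}^{-i};G_2) = \epsBR(\bm{\sigma}^{-i};G_1)$ for every partial profile, hence $\epsNE(G_2) = \epsNE(G_1)$ — a fact I reuse for $\CC$.

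For $\CA$, the extra hypotheses are needed precisely to control the weights $m^i/\sum_j m^j$ and the proxy $\mu^w$, neither of which transforms uniformly when the $a^i$ differ. I would argue as follows: $\preceq^w_1 = \preceq^w_2$ means $w_2 = \alpha w_1 + \beta \bm{1}$ for some $\alpha > 0$; expanding $w_2 = \tfrac1n \sum_i (a^i u^i_1 + b^i \bm{1})$ and matching against $\alpha w_1 = \tfrac{\alpha}{n}\sum_i u^i_1$ yields $\sum_i (a^i - \alpha) u^i_1 = \kappa \bm{1}$ for a constant $\kappa$. Affine-independence of the $u^i$ — read as linear independence of $\{u^1_1,\dots,u^n_1,\bm{1}\}$ — then forces $a^i = \alpha$ for all $i$. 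With this common scale, $m^i_2 = \alpha m^i_1$ makes the weights $m^i/\sum_j m^j$ invariant, and $\mu^w_2 = \frac{\sum_i \alpha(u^i_1 - c^i_1)}{\sum_i \alpha m^i_1} = \mu^w_1$; combined with $u^i_{1,\nu} = u^i_{2,\nu}$, each summand $m(\mu^w - u^i_\nu)$ is preserved, giving $\CA(D_1) = \CA(D_2)$.

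For $\CC$, I would use that $w_2 = \alpha w_1 + \beta \bm{1}$ with $\alpha > 0$ and $\epsNE(G_2) = \epsNE(G_1)$, so the four welfare quantities $w(\bm{\sigma})$, $w_{\eps}$, $w_\star$, and $w_{\bm{0}}$ all transform by the same affine map $x \mapsto \alpha x + \beta$ (for $w_{\eps}$ and $w_{\bm{0}}$ this uses that the minimising equilibrium set is unchanged and $\alpha > 0$). The defining ratio $\delta = \frac{w(\bm{\sigma}) - w_{\eps}}{w_\star - w_{\bm{0}}}$ is then invariant, since $\beta$ cancels within each difference and $\alpha$ cancels between them, so $\CC(D_1) = \CC(D_2)$. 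Throughout I rely on the fact (as in the proof of D1) that capabilities describe how the game is \emph{played}, so the same strategy profile $\bm{\sigma}$ is under consideration in both games.

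The main obstacle is the $\CA$ step. Unlike the other three measures, $\CA$ depends on the \emph{relative magnitudes} of the agents' utilities through the weights $m^i/\sum_j m^j$ and through $\mu^w$, and these are destroyed by independent per-agent rescalings. The real content is therefore the deduction that $\preceq^w$-preservation together with affine-independence pins down a single common scaling factor $\alpha$; once that is in hand, the remaining cancellations for both $\CA$ and $\CC$ are routine.
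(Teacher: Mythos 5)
Your proof is correct, and on the substantive part of the claim -- collective alignment -- it follows essentially the same route as the paper: both arguments combine the affine welfare relation $w_2 = \alpha w_1 + \beta$ (via Lemma \ref{lem:affine_transformation}) with affine-independence to force a single common scaling factor across agents, from which invariance of the weights $m^i / \sum_j m^j$ and of $\mu^w$ follows; you parametrise by per-agent scalings $a^i$ where the paper works directly with the normalised weights, but the key deduction is identical (and your $\IA$ step matches the paper's verbatim use of Lemma \ref{lem:nu-equivalence}). Where you genuinely diverge is on the capability measures: the paper disposes of both $\IC$ and $\CC$ in one line, asserting that capabilities are ``defined extrinsically to the game'' and so do not vary, whereas you actually verify invariance -- for $\IC$ by exhibiting the minimal $\epsilon^i$ as a ratio of utility differences that is unchanged under positive affine maps (which also gives $\epsNE(G_1) = \epsNE(G_2)$), and for $\CC$ by showing that $w(\bm{\sigma})$, $w_{\eps}$, $w_\star$ and $w_{\bm{0}}$ all transform by the same map $x \mapsto \alpha x + \beta$, so the defining ratio in Definition \ref{def:CC} is preserved. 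Your treatment of $\CC$ is arguably the more rigorous one: since that definition references the game-dependent quantities $w_{\eps}$, $w_\star$ and $w_{\bm{0}}$, the value of $\CC$ attached to fixed behaviour is not self-evidently game-independent, and your argument -- which correctly places the $\CC$ claim under the ``Moreover'' hypothesis $\preceq^w_1 = \preceq^w_2$, exactly where the proposition puts it -- supplies the cancellation that the paper leaves implicit.
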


\begin{proof}
    First, note that the capability measures are defined extrinsically to the game, and so do not vary as the game varies. We therefore concern ourselves with the alignment measures.

    In the case of individual alignment, notice that if $\preceq^i_1 = \preceq^i_2$ then $\nu(u^i_1) = \nu(u^i_2)$ by Lemma \ref{lem:nu-equivalence}, and similarly for $\uHat^i_1$ and $\uHat^i_2$. Thus we have that $m\big( \nu(\uHat^i_1) - \nu(u^i_1) \big) = m\big( \nu(\uHat^i_2) - \nu(u^i_2) \big)$, and hence $\IA^i(D_1) = \IA^i(D_2)$. Applying the same argument to each $i$ gives $\IA(D_1) = \IA(D_2)$.
    
    For the case of collective alignment, then by Lemma \ref{lem:affine_transformation} we have $w_1 = a w_2 + b$ for some $a \in \R_{>0}$ and $b \in \R$. Because we also have $\nu(u^i_1) = \nu(u^i_2)$ for all $1 \leq i \leq n$, then:
    \begin{align*}
        \mu^w_1 \coloneqq \frac{\sum_i u^i_1 - c^i_1}{\sum_i m^i_1} = \frac{\sum_i m^i \nu(u^i_1)}{\sum_i m^i_1} = \frac{\sum_i m^i \nu(u^i_2)}{\sum_i m^i_1}.
    \end{align*}
    Combining these two statements and rearranging results in:
    $$\sum_i \left( \frac{m^i_1}{\sum_j m^j_1} - a \frac{m^i_2}{\sum_j m^j_2} \right) \nu(u^i_1) = b.$$
    Assuming that the $\nu(u^i_1)$ are not affine-dependent, then we must have $\frac{m^i_1}{\sum_j m^j_1} = a \frac{m^i_2}{\sum_j m^j_2}$ for each $1 \leq i \leq n$. Thus, we have $\mu^w_1 = \mu^w_2$, and therefore:
    \begin{align*}
        \CA(D_1)
        &\coloneqq 1 - \sum_i \frac{m^i_1}{\sum_j m^j_1} \cdot m\big(\mu^w_1 - \nu(u^i_1)\big)\\
        &= 1 - \sum_i \frac{m^i_2}{\sum_j m^j_2} \cdot m\big(\mu^w_2 - \nu(u^i_2)\big)\\
        &\eqqcolon \CA(D_2),
    \end{align*}
    which concludes the proof.
\end{proof}

\subsection{Bounding Welfare Regret}

Next, we provide proofs of the bounds on welfare regret from Section \ref{sec:bounds}.\\

\begin{proposition}
    Given a delegation game $D$, we have:
    $$\wHat_\star - \wHat(\bm{\sigma}) \leq \frac{1}{n} \sum_i r^i \big( u^i(\hat{\bm{s}}_\star) - u^i(\bm{\sigma}) \big) + \frac{4K}{n} \hat{\bm{m}}^\top(\bm{1} - \IA),$$
    where $r^i \coloneqq \frac{\hat{m}^i}{m^i}$, $\hat{\bm{m}}[i] = \hat{m}^i$, $K$ satisfies $\norm{u_\nu - u'_\nu}_\infty \leq K \cdot m(u_\nu - u'_\nu)$ for any $u, u' \in U$, and $\wHat(\hat{\bm{s}}_\star) = \wHat_\star$.
\end{proposition}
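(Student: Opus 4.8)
The plan is to expand the principal welfare regret coordinatewise, convert each principal utility into its corresponding agent utility via the normalisation identity, and then cleanly separate the resulting expression into a ``calibrated'' agent-regret term and a term controlled entirely by individual alignment. Throughout I would lean on the decompositions $\uHat^i = \hat{m}^i \uHat^i_\nu + \hat{c}^i$ and $u^i = m^i u^i_\nu + c^i$, together with Definition \ref{def:IA} and the defining inequality for $K$.

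First I would use the hypothesis $\wHat(\hat{\bm{s}}_\star) = \wHat_\star$ to write $\wHat_\star - \wHat(\bm{\sigma}) = \frac{1}{n}\sum_i \big(\uHat^i(\hat{\bm{s}}_\star) - \uHat^i(\bm{\sigma})\big)$. Since the additive constant $\hat{c}^i$ cancels in any difference, this equals $\frac{1}{n}\sum_i \hat{m}^i\big(\uHat^i_\nu(\hat{\bm{s}}_\star) - \uHat^i_\nu(\bm{\sigma})\big)$. I would then introduce $\Delta^i \coloneqq \uHat^i_\nu - u^i_\nu$ and split the bracket as $\big(u^i_\nu(\hat{\bm{s}}_\star) - u^i_\nu(\bm{\sigma})\big) + \big(\Delta^i(\hat{\bm{s}}_\star) - \Delta^i(\bm{\sigma})\big)$, isolating an agent-utility contribution from a pure misalignment contribution.

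For the agent contribution, the identity $u^i = m^i u^i_\nu + c^i$ gives $u^i_\nu(\hat{\bm{s}}_\star) - u^i_\nu(\bm{\sigma}) = \frac{1}{m^i}\big(u^i(\hat{\bm{s}}_\star) - u^i(\bm{\sigma})\big)$, so multiplying by $\hat{m}^i$ produces exactly $r^i\big(u^i(\hat{\bm{s}}_\star) - u^i(\bm{\sigma})\big)$ with $r^i = \hat{m}^i / m^i$. For the misalignment contribution, I would observe that $\Delta^i(\hat{\bm{s}}_\star)$ is a single coordinate of $\Delta^i$, while $\Delta^i(\bm{\sigma}) = \expect_{\bm{\sigma}}[\Delta^i(\bm{s})]$ is a convex combination of coordinates; hence both have absolute value at most $\norm{\Delta^i}_\infty$, so their difference is at most $2\norm{\Delta^i}_\infty$. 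Applying the $K$-inequality and then $m(\Delta^i) = m(\uHat^i_\nu - u^i_\nu) = 2(1-\IA^i)$ (Definition \ref{def:IA}) bounds this by $2 \cdot K \cdot 2(1-\IA^i) = 4K(1-\IA^i)$. Multiplying by $\hat{m}^i$, summing, and dividing by $n$ collects the misalignment terms into $\frac{4K}{n}\hat{\bm{m}}^\top(\bm{1}-\IA)$ and yields the claim.

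The delicate step will be the misalignment bound: one must carefully note that the factor of $2$ from differencing two evaluations of $\Delta^i$ combines with the factor of $2$ implicit in $m(\Delta^i) = 2(1-\IA^i)$ to give the constant $4K$, rather than a smaller or larger one. Everything else is a routine cancellation of affine constants. One minor edge case worth flagging is a constant agent utility, for which $m^i = 0$ and $r^i$ is undefined; this is ruled out by the standing assumption that normalisation (and hence $r^i$) is well defined, i.e. each $u^i$ is non-constant.
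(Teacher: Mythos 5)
Your proposal is correct and follows essentially the same route as the paper's proof: both cancel the affine shifts $\hat{c}^i$, replace $\uHat^i_\nu$ by $u^i_\nu$ at the cost of an error controlled via the $K$-inequality and $m(\uHat^i_\nu - u^i_\nu) = 2(1-\IA^i)$, and rescale by $\hat{m}^i/m^i$ to recover the $r^i$ term. Your packaging of the error as $\Delta^i(\hat{\bm{s}}_\star) - \Delta^i(\bm{\sigma}) \leq 2\norm{\Delta^i}_\infty$ is just the paper's two one-sided coordinate bounds ($\pm K d^i$) combined, and your handling of mixed strategies directly (rather than proving the pure case first and extending by linearity of expectation) is an equivalent, equally valid presentation.
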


\begin{proof}
    To begin with, let us consider the principal's welfare regret from a pure strategy profile $\bm{s}$. Let $\hat{\bm{s}}_\star$ be such that $\wHat(\hat{\bm{s}}_\star) = \wHat_\star$, which we know must exist from Lemma \ref{lem:pure-welfare}. Then we have:
    \begin{align*}
        &\wHat_\star - \wHat(\bm{s})\\
        = &\wHat(\hat{\bm{s}}_\star) - \wHat(\bm{s})\\
        = &\frac{1}{n} \sum_i \uHat^i(\hat{\bm{s}}_\star) - \frac{1}{n} \sum_i \uHat^i(\bm{s})\\
        = &\frac{1}{n}\left( \sum_i \hat{m}^i \uHat^i_\nu(\hat{\bm{s}}_\star) + \hat{c}^i \right) - \frac{1}{n}\left( \sum_i \hat{m}^i \uHat^i_\nu(\bm{s}) + \hat{c}^i \right)\\
        = &\frac{1}{n} \left( \sum_i \hat{m}^i \uHat^i_\nu(\hat{\bm{s}}_\star) - \sum_i \hat{m}^i \uHat^i_\nu(\bm{s}) \right).
    \end{align*}
    Because $m$ is a norm and $U$ is finite-dimensional, then there is some finite $K$ such that $\norm{u_\nu - u'_\nu}_\infty \leq K \cdot m(u_\nu - u'_\nu)$ for any $u, u' \in U$. Applying this bound to each $\uHat^i_\nu$ and $u^i_\nu$, and writing $d^i \coloneqq m(\uHat^i_\nu - u^i_\nu)$ we have:
    \begin{align*}
        &\sum_i \hat{m}^i \uHat^i_\nu(\hat{\bm{s}}_\star) - \sum_i \hat{m}^i \uHat^i_\nu(\bm{s})\\
        \leq &\sum_i \hat{m}^i \cdot \left( u^i_\nu(\hat{\bm{s}}_\star) + K d^i \right) - \sum_i \hat{m}^i \cdot \left( u^i_\nu(\bm{s}) - K d^i \right)\\
        = &\sum_i \hat{m}^i u^i_\nu(\hat{\bm{s}}_\star) - \sum_i \hat{m}^i u^i_\nu(\bm{s}) + 2 K \sum_i \hat{m}^i d^i\\
        = &\sum_i \hat{m}^i \frac{u^i(\hat{\bm{s}}_\star) - c^i}{m^i} - \sum_i \hat{m}^i \frac{u^i(\bm{s}) - c^i}{m^i} + 2 K \sum_i \hat{m}^i d^i\\
        = &\sum_i r^i \big( u^i(\hat{\bm{s}}_\star) - u^i(\bm{s}) \big) + 2 K \sum_i \hat{m}^i d^i,
    \end{align*}
    where $r^i \coloneqq \frac{\hat{m}^i}{m^i}$. Finally, note that for any $\bm{\sigma}$, we have:
    $$w(\bm{\sigma}) \coloneqq \expect_{\bm{\sigma}} [w(\bm{s})] = \expect_{\bm{\sigma}} \Big[\frac{1}{n} \sum_i u^i(\bm{s})\Big] = \frac{1}{n} \sum_i \expect_{\bm{\sigma}} \big[ u^i(\bm{s}) \big],$$
    by the linearity of expectation. Applying this to the bound above, substituting in $d^i = 2(1 - \IA^i)$, and writing the final sum over players in vector notation gives:
    \begin{align*}
        \wHat_\star - \wHat(\bm{\sigma})
        &\leq \frac{1}{n} \left(\sum_i r^i \big( u^i(\hat{\bm{s}}_\star) - u^i(\bm{\sigma}) \big) + 2 K \sum_i \hat{m}^i d^i\right)\\
        &= \frac{1}{n} \sum_i r^i \big( u^i(\hat{\bm{s}}_\star) - u^i(\bm{\sigma}) \big) + \frac{4K}{n} \hat{\bm{m}}^\top(\bm{1} - \IA),
    \end{align*}
    where $\hat{\bm{m}}[i] = \hat{m}^i$, completing the proof.
\end{proof}

\begin{theorem}
    Given a delegation game $D$, we have that:
    \begin{align*}
        \wHat_\star - \wHat(\bm{\sigma})
        &\leq \frac{4K}{n} \hat{\bm{m}}^\top(\bm{1} - \IA) + r^* \left( (w_{\bm{0}} - w_{\eps}) \right.\\
        &+ \left. (1 - \CC) (w_\star - w_{\bm{0}}) \right) + R(\bm{\sigma}),
    \end{align*}
    where $\IC = \bm{1} - \eps$, 
    $r^* \in [\min_i r^i, \max_i r^i]$,
    $R(\bm{\sigma}) \coloneqq \frac {1}{n} \sum_i (\hat m^i - r^*m^i) \left( u_\nu^i(\bm{\sHat}_\star) - u_\nu^i(\bm{\sigma}) \right)$ is a remainder accounting for collective misalignment and unequal $r^i$,
    and $K$ and $r^i$ are defined as in Proposition \ref{prop:IA-bound}.
    Note that when all $r^i$ are equal \emph{or} $\CA = 1$ then there is an $r^*$ with $R(\bm{\sigma}) = 0$.
\end{theorem}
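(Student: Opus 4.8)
The plan is to start from Proposition \ref{prop:IA-bound} and rewrite its first term---the $r^i$-weighted sum of the agents' individual regrets at the principal-optimal profile $\hat{\bm{s}}_\star$---as a single welfare regret scaled by a representative ratio $r^*$, plus the remainder $R(\bm{\sigma})$. The key algebraic fact is that, since $u^i = m^i u^i_\nu + c^i$, the constant shifts cancel in every difference, so $u^i(\hat{\bm{s}}_\star) - u^i(\bm{\sigma}) = m^i\big(u^i_\nu(\hat{\bm{s}}_\star) - u^i_\nu(\bm{\sigma})\big)$, and hence $r^i\big(u^i(\hat{\bm{s}}_\star) - u^i(\bm{\sigma})\big) = \hat{m}^i\big(u^i_\nu(\hat{\bm{s}}_\star) - u^i_\nu(\bm{\sigma})\big)$ because $r^i m^i = \hat{m}^i$.

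Writing $\hat{m}^i = r^* m^i + (\hat{m}^i - r^* m^i)$ for an arbitrary $r^*$ then splits this sum \emph{exactly}:
\begin{align*}
\frac{1}{n}\sum_i r^i\big(u^i(\hat{\bm{s}}_\star) - u^i(\bm{\sigma})\big) = r^*\big(w(\hat{\bm{s}}_\star) - w(\bm{\sigma})\big) + R(\bm{\sigma}),
\end{align*}
where the first part collapses to a welfare difference via $\frac{1}{n}\sum_i u^i = w$, and the second part is precisely $R(\bm{\sigma})$. This identity holds for every $r^*$; the restriction $r^* \in [\min_i r^i, \max_i r^i]$ enters in the next step. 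Since $r^i = \hat{m}^i/m^i > 0$, any such $r^*$ is positive, so I may bound $w(\hat{\bm{s}}_\star) \le w_\star$ (as $w_\star$ is the maximal agent welfare) to get $r^*\big(w(\hat{\bm{s}}_\star) - w(\bm{\sigma})\big) \le r^*\big(w_\star - w(\bm{\sigma})\big)$.

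It then remains to bound $w_\star - w(\bm{\sigma})$ using the capabilities measures. By Definition \ref{def:CC}, agents with individual capabilities $\IC = \bm{1} - \eps$ and collective capabilities $\CC$ play some $\bm{\sigma}$ with $w(\bm{\sigma}) \ge w_{\eps} + \CC\,(w_\star - w_{\bm{0}})$. Rearranging and splitting $w_\star - w_{\eps} = (w_\star - w_{\bm{0}}) + (w_{\bm{0}} - w_{\eps})$ gives $w_\star - w(\bm{\sigma}) \le (w_{\bm{0}} - w_{\eps}) + (1 - \CC)(w_\star - w_{\bm{0}})$. Multiplying by $r^*$ and adding the $\frac{4K}{n}\hat{\bm{m}}^\top(\bm{1} - \IA)$ term carried over from Proposition \ref{prop:IA-bound} yields the claimed inequality.

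Finally, for the remark: if all $r^i$ share a common value $r$, choosing $r^* = r$ makes every coefficient $\hat{m}^i - r^* m^i = 0$, so $R(\bm{\sigma}) = 0$; if instead $\CA = 1$, then by the proof of Proposition \ref{prop:D2} all $u^i_\nu$ coincide, so each factor $u^i_\nu(\hat{\bm{s}}_\star) - u^i_\nu(\bm{\sigma})$ equals a common value $\Delta$ and $R(\bm{\sigma}) = \frac{\Delta}{n}\sum_i(\hat{m}^i - r^* m^i)$, which vanishes for the weighted mean $r^* = \sum_i \hat{m}^i / \sum_i m^i$; this lies in $[\min_i r^i, \max_i r^i]$ because it is a convex combination of the $r^i$ with weights $m^i/\sum_j m^j$. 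I expect the only delicate points to be keeping the decomposition exact---it is an equality, so signs must be tracked carefully when passing to the inequality $w(\hat{\bm{s}}_\star) \le w_\star$---and verifying the admissible range of $r^*$ in the $\CA = 1$ case; everything else is routine substitution.
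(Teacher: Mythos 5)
Your proposal is correct and follows essentially the same route as the paper's own proof: the same exact decomposition of the $r^i$-weighted regret sum into $r^*\big(w(\hat{\bm{s}}_\star) - w(\bm{\sigma})\big) + R(\bm{\sigma})$ using the shift-cancellation $r^i\big(u^i(\hat{\bm{s}}_\star)-u^i(\bm{\sigma})\big) = \hat{m}^i\big(u^i_\nu(\hat{\bm{s}}_\star)-u^i_\nu(\bm{\sigma})\big)$, then $w(\hat{\bm{s}}_\star) \le w_\star$, Definition \ref{def:CC}, and substitution into Proposition \ref{prop:IA-bound}, with the same two choices of $r^*$ (a common $r^i$, or $\sum_i \hat{m}^i / \sum_i m^i$ when $\CA = 1$) for the final remark. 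Your explicit check that the weighted-mean $r^*$ lies in $[\min_i r^i, \max_i r^i]$ as a convex combination of the $r^i$ is a small point the paper leaves implicit, but it does not change the argument.
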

\begin{proof}
    First, observing that by definition $w_\star \ge w(\bm{\sHat}_\star)$, we see that for any $r^* > 0$:
    \begin{align*}
        & \frac {1}{n} \sum_i r^i\left( u^i(\bm{\sHat}_\star) - u^i(\bm{\sigma}) \right) \\
        = & \frac {1}{n} \sum_i r^* \left( u^i(\bm{\sHat}_\star) - u^i(\bm{\sigma}) \right) + \frac {1}{n} \sum_i (r^i - r^*) \left( u^i(\bm{\sHat}_\star) - u^i(\bm{\sigma}) \right) \\
        = & r^* (w(\bm{\sHat_\star}) - w(\bm{\sigma})) + \frac {1}{n} \sum_i (\hat m^i - r^*m^i) \left( u_\nu^i(\bm{\sHat}_\star) - u_\nu^i(\bm{\sigma}) \right) \\
        \le & r^* (w_\star - w(\bm{\sigma})) + \frac {1}{n} \sum_i (\hat m^i - r^*m^i) \left( u_\nu^i(\bm{\sHat}_\star) - u_\nu^i(\bm{\sigma}) \right) \\
        = & r^* (w_\star - w(\bm{\sigma})) + R(\bm{\sigma}),
    \end{align*}
    where $R(\bm{\sigma}) \coloneqq \frac {1}{n} \sum_i (\hat m^i - r^*m^i) \left( u_\nu^i(\bm{\sHat}_\star) - u_\nu^i(\bm{\sigma}) \right)$. 
    This is a weighted combination of the agent welfare regret $w_\star - w(\bm{\sigma})$ with a `fairness remainder' term $R(\bm{\sigma})$. The remainder can be seen to be intimately related to the different ratios $r^i \coloneqq \frac {\hat m^i} {m^i}$, which determine when Pareto improvements are considered gains in agent welfare. When all $r^i$ are identical, a choice of $r^* = r^i$ causes $R(\bm{\sigma})$ to vanish, which we refer to as the `perfectly calibrated' case. (We discuss several continuous bounds on this remainder in Appendix \ref{app:auxiliary_results}.)
    Now, from Definition \ref{def:CC} we have $w(\bm{\sigma}) \geq w_{\eps} + \CC \cdot (w_\star - w_{\bm{0}})$, which implies that:
    \begin{align*}
        w_\star - w(\bm{\sigma})
        &\leq w_\star - w_{\eps} - \CC \cdot (w_\star - w_{\bm{0}})\\
        &= (1 - \CC) \cdot w_\star - w_{\eps} + \CC \cdot w_{\bm{0}}\\
        &= (1 - \CC) \cdot w_\star - w_{\eps} - (1 - \CC) \cdot w_{\bm{0}} + w_{\bm{0}}\\
        &= (1 - \CC) (w_\star - w_{\bm{0}}) + (w_{\bm{0}} - w_{\eps}).
    \end{align*}
    Substituting this inequality into the previous inequality and then into the bound from Proposition \ref{prop:IA-bound} gives:
    \begin{align*}
        \wHat_\star - \wHat(\bm{\sigma})
        &\leq \frac{4K}{n} \hat{\bm{m}}^\top(\bm{1} - \IA) + r^* \left( (w_{\bm{0}} - w_{\eps}) \right.\\
        &+ \left. (1 - \CC) (w_\star - w_{\bm{0}})\right) + R(\bm{\sigma}).
    \end{align*}
\end{proof}

\begin{proposition}
    Given a game $G$ with collective alignment $\CA$, then $w_+ - w_\star \leq \frac{K \sum_i m^i}{n} (1 - \CA)$, where $K$ is defined as in Proposition \ref{prop:IA-bound}.
\end{proposition}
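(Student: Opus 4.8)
The plan is to reduce the gap $w_+ - w_\star$ to a weighted sum of per-agent deviations from the welfare proxy $\mu^w$, and then to recover the factor $K$ (rather than the $2K$ a naive argument would give) by anchoring every term to a single reference point. First I would use that $w$ is linear in the mixed profile, so $w_\star$ is attained at some pure profile $\bm{s}_\star$ (Lemma \ref{lem:pure-welfare}). Writing $w_\star = w(\bm{s}_\star) = \frac1n\sum_i u^i(\bm{s}_\star)$ and $w_+ = \frac1n\sum_i \max_{\bm{s}} u^i(\bm{s})$, substituting $u^i = m^i u^i_\nu + c^i$, the additive constants $c^i$ cancel and we obtain
$$w_+ - w_\star = \frac1n \sum_i m^i \big( u^i_\nu(\bm{s}^i_\star) - u^i_\nu(\bm{s}_\star) \big),$$
where $\bm{s}^i_\star$ maximises $u^i$, equivalently $u^i_\nu$ (since $u^i_\nu$ is a positive affine image of $u^i$).

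Next I would bring in $\mu^w$ as a common reference. Since $w$ and $\mu^w$ induce the same ordering over profiles (Lemma \ref{lem:mu-welfare}) and $\bm{s}_\star$ maximises $w$, it also maximises $\mu^w$, so $\mu^w(\bm{s}^i_\star) \le \mu^w(\bm{s}_\star)$ for every $i$. I would then write $u^i_\nu(\bm{s}^i_\star) = \mu^w(\bm{s}^i_\star) + \big( u^i_\nu(\bm{s}^i_\star) - \mu^w(\bm{s}^i_\star) \big) \le \mu^w(\bm{s}_\star) + \big( u^i_\nu(\bm{s}^i_\star) - \mu^w(\bm{s}^i_\star) \big)$, and bound the residual pointwise by $\norm{u^i_\nu - \mu^w}_\infty \le K\, m(\mu^w - u^i_\nu)$. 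This uses that $m$ is a norm on the finite-dimensional space $U$, so it dominates $\norm{\cdot}_\infty$ up to the constant $K$ of Proposition \ref{prop:IA-bound}; note this inequality applies to the arbitrary vector $\mu^w - u^i_\nu$, not merely to a difference of two normalised utilities, which is the one point requiring a word of justification.

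The crucial step is the bookkeeping on substitution. Summing the previous display with weights $m^i/n$, the anchor contributions $\tfrac1n\sum_i m^i \mu^w(\bm{s}_\star)$ recombine, via $\mu^w = \big(\sum_i m^i u^i_\nu\big)/\sum_j m^j$, to exactly $\tfrac1n\sum_i m^i u^i_\nu(\bm{s}_\star)$, which cancels the subtracted term. What remains is $\tfrac{K}{n}\sum_i m^i\, m(\mu^w - u^i_\nu)$, and recognising $1 - \CA = \sum_i \tfrac{m^i}{\sum_j m^j} m(\mu^w - u^i_\nu)$ rewrites this as $\tfrac{K\sum_i m^i}{n}(1-\CA)$, as required. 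I expect the main obstacle to be precisely this cancellation: the crude triangle-inequality bound on $u^i_\nu(\bm{s}^i_\star) - u^i_\nu(\bm{s}_\star)$ loses a factor of two, and the tight constant hinges on matching the $\bm{s}_\star$-term to $\mu^w(\bm{s}_\star)$ exactly rather than bounding it.
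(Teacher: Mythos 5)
Your proposal is correct and takes essentially the same route as the paper's proof: both decompose $w_+$ via the per-agent maximisers, anchor each term to $\mu^w$, use Lemma \ref{lem:mu-welfare} to compare $\mu^w(\bm{s}^i_\star)$ with $\mu^w(\bm{s}_\star)$, apply the norm-equivalence constant $K$ once per agent to the residual $u^i_\nu - \mu^w$, and recover $w_\star$ exactly through the recombination identity $\sum_i m^i \mu^w = \sum_i m^i u^i_\nu$ -- which is precisely how the paper, too, avoids the factor of $2K$. The only difference is cosmetic: you cancel the shifts $c^i$ at the outset rather than carrying them through the computation.
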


\begin{proof}
    To begin, we first choose $\bm{s}^i_+ \in \argmax_{\bm{s}} u^i(\bm{s})$ for each $i$. Second, because $m$ is a norm and $U$ is finite-dimensional, then there is some finite $K$ such that $\norm{u - u'}_\infty \leq K \cdot m(u - u')$ for any $u, u' \in U$.
    Using these two pieces of information, we have:
    \begin{align*}
        w_+ 
        &\coloneqq \frac{1}{n} \sum_i u^i(\bm{s}^i_+)\\
        &= \frac{1}{n} \sum_i \big( m^i u^i_\nu(\bm{s}^i_+) + c^i \big)\\
        &= \frac{1}{n} \sum_i \big( m^i \mu^w(\bm{s}^i_+) + c^i \big) +  \frac{1}{n} \sum_i m^i\big( u^i_\nu(\bm{s}^i_+) - \mu^w(\bm{s}^i_+) \big)\\
        &\leq \frac{1}{n} \sum_i \big(m^i \mu^w(\bm{s}^i_+) + c^i \big) +  \frac{1}{n} \sum_i m^i \norm{u^i_\nu - \mu^w}_\infty\\
        &\leq \frac{1}{n} \sum_i \big(m^i \mu^w(\bm{s}^i_+) + c^i \big) + \frac{K}{n} \cdot \sum_i m^i \cdot m(\mu^w - u^i_\nu).
    \end{align*}
    Next, let us choose some $\bm{s}_\star \in \argmax_{\bm{s}} w(\bm{s})$. Then clearly $w(\bm{s}^i_+) \leq w(\bm{s}_\star) = w_\star$ for any $1 \leq i \leq n$, where the second equality follows from Lemma \ref{lem:pure-welfare}. Hence, by Lemma \ref{lem:mu-welfare}, we also have $\mu^w(\bm{s}^i_+) \leq \mu^w(\bm{s}_\star)$. Using this fact, we have:
    \begin{align*}
        &\frac{1}{n} \sum_i \big(m^i \mu^w(\bm{s}^i_+) + c^i \big)\\
        \leq &\frac{1}{n} \sum_i \big(m^i \mu^w(\bm{s}_\star) + c^i \big)\\
        = &\frac{1}{n} \mu^w(\bm{s}_\star) \sum_i m^i + \frac{1}{n} \sum_i c^i\\
        = &\frac{1}{n} \frac{\sum_i \big( u^i(\bm{s}_\star) - c^i \big)}{\sum_i m^i} \sum_i m^i +  \sum_i c^i\\
        = &\frac{1}{n} \sum_i \big(u^i(\bm{s}_\star) - c^i\big) + \frac{1}{n} \sum_i c^i\\
        = &\frac{1}{n} \sum_i u^i(\bm{s}_\star)\\
        = &w_\star.
    \end{align*}
    Substituting this and using the fact that $\sum_i m^i \cdot m(\mu^w - u^i_\nu) = (1 - \CA) \sum_i m^i$ in to our our previous inequality we have:
    $$w_+ \leq w_\star + \frac{K \sum_i m^i}{n} (1 - \CA),$$
    from which our result follows by subtracting $w_\star$.
\end{proof}

\subsection{Auxiliary Results}
\label{app:auxiliary_results}

\begin{lemma}
    For any $u,u' \in U$, $u_\nu = u'_\nu$ if and only if $\preceq = \preceq'$.
\end{lemma}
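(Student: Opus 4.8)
The plan is to reduce the statement to the classical von Neumann–Morgenstern fact (cited in the excerpt) that two utility functions $u, u' \in U$ induce the same preference ordering $\preceq$ over mixed strategy profiles if and only if $u' = a u + b\bm{1}$ for some $a > 0$ and $b \in \R$, with the understanding that the trivial (everywhere-indifferent) ordering corresponds to $u$ and $u'$ both being constant vectors. The two directions of the equivalence then follow by exploiting the two defining properties of $\nu$: that $c$ is affine-equivariant, i.e. $c(au + b\bm{1}) = a\,c(u) + b$, and that $m$ is absolutely homogeneous, i.e. $m(au) = \abs{a}\,m(u)$.

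For the forward direction, suppose $u_\nu = u'_\nu$. In the non-degenerate case $m(u - c(u)\bm{1}) \neq 0$ I would rewrite Definition \ref{def:normalisation} as $u = m(u - c(u)\bm{1})\, u_\nu + c(u)\bm{1}$, exhibiting $u$ as a positive affine transformation of $u_\nu$ (the multiplier is strictly positive by positive definiteness of the norm). The same identity holds for $u'$ with respect to $u'_\nu = u_\nu$, so $u$ and $u'$ are both positive affine transformations of a common vector and hence of one another; the classical result then gives $\preceq = \preceq'$. The degenerate case $u_\nu = u'_\nu = \bm{0}$ forces $m(u - c(u)\bm{1}) = 0$, whence $u = c(u)\bm{1}$ is constant (and likewise $u'$), so both induce the trivial ordering and again $\preceq = \preceq'$.

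For the reverse direction, suppose $\preceq = \preceq'$ and write $u' = a u + b\bm{1}$ with $a > 0$. Using affine-equivariance of $c$ I would compute $u' - c(u')\bm{1} = au + b\bm{1} - (a\,c(u) + b)\bm{1} = a\big(u - c(u)\bm{1}\big)$, so the shifted vectors differ only by the positive scalar $a$. Absolute homogeneity then gives $m(u' - c(u')\bm{1}) = a\,m(u - c(u)\bm{1})$, and forming the quotient cancels $a$ exactly, yielding $u'_\nu = u_\nu$. The trivial-ordering case is handled identically, since $u$ constant forces $u'$ constant and both normalise to $\bm{0}$.

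The only genuine subtlety — and the step I would treat most carefully — is the appeal to the classical characterisation at the boundary: the ``positive affine transformation'' equivalence is cleanest for non-trivial preferences, so I would explicitly separate the constant-utility case, where the scalar $a$ is unconstrained yet $\nu$ collapses both utilities to $\bm{0}$ regardless. Everything else is a direct substitution using the equivariance and homogeneity properties already assumed of $c$ and $m$, so no further machinery is needed.
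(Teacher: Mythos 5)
Your proposal is correct and follows essentially the same route as the paper's own proof: both directions reduce to the classical positive-affine-transformation characterisation of identical preferences (Lemma \ref{lem:affine_transformation}), combined with the affine-equivariance of $c$ and the absolute homogeneity of $m$ to show the shifted-and-scaled representatives coincide. The only difference is that you explicitly separate the degenerate constant-utility case (where $m(u - c(u)\bm{1}) = 0$), which the paper's proof passes over silently when it divides by $m$ and $m'$; this is a minor but genuine improvement in care rather than a different approach.
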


\begin{proof}
    First, recall that, by Lemma \ref{lem:affine_transformation}, $\preceq = \preceq'$ if and only if there exist $a \in \R_{>0}$ and $b \in \R$ such that $u = a u' + b$. Using the fact that $c$ is affine-equivariant we have:
    \begin{align*}
        u - c 
        & = au' + b - c(au' + b) \\
        & = au' + b - a \cdot c(u') - b \\
        & = a \cdot (u' - c').
    \end{align*}
    Next, by appealing to the absolute homogeneity of $m$ and the fact that $a > 0$ we have:
    \begin{align*}
        u_\nu
        &= \frac{u - c}{m}\\
        &= \frac{a \cdot (u' - c')}{m\big(a \cdot (u' - c')\big)}\\
        &= \frac{a \cdot (u' - c')}{\abs{a} \cdot m \big( u' - c'\big)}\\
        &= \frac{u' - c'}{m'}\\
        &= u'_\nu,
    \end{align*}
    as required. Conversely, we see that:
    \begin{align*}
        u_\nu = u'_\nu
        &\implies \frac{u - c}{m} = \frac{u' - c'}{m'}\\
        &\implies u = \left( \frac{m}{m'} \right) \cdot u' + 
        \left(c - \frac{m}{m'} c'\right).
    \end{align*}
    Letting $a' \coloneqq \frac{m}{m'} > 0$ and $b' \coloneqq c - \frac{m}{m'} c'$ then we have $u = a' u' + b'$ and hence $\preceq = \preceq'$ (by Lemma \ref{lem:affine_transformation}), as required.
\end{proof}

\begin{lemma}
    There exists a (two-player, two-action) delegation game $D$ such that for any $x > 0$, however small, even if $\IA(D) = \bm{1}$ and $\IC(D) = \bm{1}$, we have only one NE $\bm{\sigma}$, and $\frac{\wHat_\star - \wHat(\bm{\sigma})}{\wHat_+ - \wHat_-} = 1-x$.
\end{lemma}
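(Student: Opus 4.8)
The plan is to exhibit a Prisoner's Dilemma. Concretely, I would take the agent game $G$ to be a two-player, two-action symmetric Prisoner's Dilemma and set $\uHat^i = u^i$ for each principal, so that the principal game $\GHat$ is identical to $G$. Since $\uHat^i = u^i$ gives $\hat{\preceq}^i = \preceq^i$, Proposition \ref{prop:D2} immediately yields $\IA(D) = \bm{1}$; and because the two games coincide, every welfare quantity agrees, $\wHat_\dagger = w_\dagger$ and $\wHat(\bm{\sigma}) = w(\bm{\sigma})$, so it suffices to work entirely inside $G$. The conceptual point is that a Prisoner's Dilemma is the canonical setting where perfect control (perfect alignment \emph{and} perfect rationality) still yields a badly suboptimal outcome, because the failure is one of cooperation rather than control.

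Label the actions $C$ (cooperate) and $D$ (defect), with symmetric payoffs $u^i(C,C) = R$, $u^i(D,D) = P$, and temptation/sucker payoffs $T$ and $S$, obeying the usual ordering $T > R > P > S$ together with the efficiency condition $2R > T + S$. In such a game $D$ strictly dominates $C$ for both players, so $(D,D)$ is the unique Nash equilibrium (pure or mixed, since a strictly dominated action is never played in any $\NE$) — and the only $\epsNE$ at $\eps = \bm{0}$. Imposing $\IC(D) = \bm{1}$ therefore forces the agents to play exactly this profile $\bm{\sigma}$, with $w(\bm{\sigma}) = P$.

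Next I would compute the three reference welfares. By Lemma \ref{lem:pure-welfare} the maximal welfare is attained at a pure profile; comparing $w(C,C) = R$, $w(D,D) = P$, and $w(C,D) = w(D,C) = (T+S)/2$, the conditions $R > P$ and $2R > T + S$ give $w_\star = R$. The ideal welfare uses each player's best attainable payoff, $\bm{u}_+[i] = T$, so $w_+ = T$; likewise $\bm{u}_-[i] = S$ gives $w_- = S$. Hence the regret ratio collapses to $\frac{\wHat_\star - \wHat(\bm{\sigma})}{\wHat_+ - \wHat_-} = \frac{R - P}{T - S}$, which automatically lies in $(0,1)$ since $T > R$ and $S < P$.

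It then remains to tune the parameters. Fixing $R = 1$, $P = 0$ and writing $T = 1 + a$, $S = -b$ with $a, b > 0$, the ratio becomes $\frac{1}{1 + a + b}$; setting $a = b = \frac{x}{2(1-x)}$ makes it exactly $1 - x$ for any $x \in (0,1)$. The Prisoner's Dilemma constraints survive: $T > R > P > S$ holds for any $a, b > 0$, and $2R > T + S$ reduces to $2 > 1 + a - b = 1$. The step I would treat as the crux is checking that pushing the ratio towards $1$ does not quietly break the example — that is, keeping $(C,C)$ (rather than a mixed or off-diagonal profile) as the welfare maximiser and keeping $(D,D)$ as the \emph{unique} equilibrium. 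Both are guaranteed precisely by the joint constraints $2R > T + S$ and the strict dominance of $D$, and both remain robust as $a, b \to 0$, which is exactly the regime driving $x \to 0$.
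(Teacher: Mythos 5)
Your proof is correct and takes essentially the same approach as the paper: both exhibit a Prisoner's Dilemma in which perfect individual alignment and capabilities force the unique (dominant-strategy) equilibrium $(D,D)$, then tune the payoffs so the normalised welfare regret equals $1-x$. The only cosmetic differences are that you parametrise via general $T>R>P>S$ with $2R>T+S$ and set $\uHat^i = u^i$ exactly, whereas the paper fixes a specific PD (payoffs $1-\tfrac{x}{2}$, $\tfrac{x}{2}$, $0$, $1$) as the principal game and infers the agents' equilibria from $\IA(D)=\bm{1}$ via Proposition \ref{prop:D2}.
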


\begin{proof}
    Consider the Prisoner's Dilemma as specified in Figure \ref{fig:PD}, which we take to be the principal game $\GHat$. Note that $\wHat_+ = 1$ and $\wHat_- = 0$. If $\IA(D) = \bm{1}$ then $\hat{\preceq}^i = \preceq^i$ by Proposition \ref{prop:D2}, meaning that the NEs of $G$ are the same as those of $\GHat$. 
    Because $\IC(D) = \bm{1}$ then we consider only these NEs, which in this case is simply $\bm{s} = (B,B)$, giving $\wHat(\bm{s}) = \frac{x}{2}$. However, $\max_{\bm{\sigma}} \wHat(\bm{\sigma}) = 1 - \frac{x}{2}$, and thus we have $\frac{\wHat_\star - \wHat(\bm{\sigma})}{\wHat_+ - \wHat_-} = 1-x$.
\end{proof}

\begin{figure*}[t]
    \centering
    \begin{subfigure}[c]{0.4\textwidth}
        \centering
        \begin{subfigure}{\textwidth}
            \centering
            \renewcommand\arraystretch{1.7}
            \begin{tabular}{c|c|c|c}
                \multicolumn{1}{c}{} & \multicolumn{1}{c}{$A$} & \multicolumn{1}{c}{$B$} &\\\cline{2-3}
                $A$ & $1-\frac{x}{2}, 1-\frac{x}{2}$ & $0, 1$ &\phantom{$A$}\\\cline{2-3}
                $B$ & $1, 0$ & $\frac{x}{2}, \frac{x}{2}$ &\phantom{$B$}\\\cline{2-3}
            \end{tabular}
            \caption{}
        \label{fig:PD}
        \end{subfigure}

        \begin{subfigure}{\textwidth}
            \centering
            \renewcommand\arraystretch{1.7}
            \begin{tabular}{c|c|c|c|c}
                \multicolumn{1}{c}{} & \multicolumn{1}{c}{$A$} & \multicolumn{1}{c}{$B$} & \multicolumn{1}{c}{$C$}\\\cline{2-4}
                $A$ & $1, 1$            & $x, 1-\epsilon^2$                  & $0, 0$ &\phantom{$A$}\\\cline{2-4}
                $B$ & $1-\epsilon^1, x$ & $(1-\epsilon^1)x, (1-\epsilon^2)x$ & $x, 0$ &\phantom{$B$}\\\cline{2-4}
                $C$ & $0, 0$            & $0, x$                             & $0, 0$ &\phantom{$B$}\\\cline{2-4}
            \end{tabular}
            \caption{}
        \label{fig:fragile}
        \end{subfigure}
    \end{subfigure}
    \begin{subfigure}[c]{0.55\textwidth}
        \centering
        \renewcommand\arraystretch{1.7}
        \begin{tabular}{c|c|c|c|c|c|c}
            \multicolumn{1}{c}{} & \multicolumn{1}{c}{$A$} & \multicolumn{1}{c}{$B$} & \multicolumn{1}{c}{$C$} & \multicolumn{1}{c}{$\cdots$} & \multicolumn{1}{c}{$Z$} &\\\cline{2-6}
            $A$ & $1-x,1-x$ & $1-3x,1$ & $0,0$ & $\cdots$ & $0,0$ & \phantom{$A$}\\\cline{2-6}
            $B$ & $1,1-3x$ & $1-2x,1-2x$ & $1-4x,1-x$ & $\cdots$ & $0,0$ & \phantom{$B$}\\\cline{2-6}
            $C$ & $0,0$ & $1-x,1-4x$ & $1-3x,1-3x$ & $\cdots$ & $0,0$ &\phantom{$C$}\\\cline{2-6}
            $\vdots$ & $\vdots$ & $\vdots$ & $\vdots$ & $\ddots$ & $\vdots$ &\phantom{$D$}\\\cline{2-6}
            $Z$ & $0,0$ & $0,0$ & $0,0$ & $\cdots$ & $x,x$ &\phantom{$Z$}\\\cline{2-6}
        \end{tabular}
        \caption{}
    \label{fig:TD}
    \end{subfigure}
    \caption{(a) A Prisoner's Dilemma leading to arbitrarily high welfare regret for the principals, despite perfect control of each agent by its principal; (b) a game in which the welfare in the worst $\epsNE$ is much lower than in the worst NE.; and (c) a Traveller's Dilemma leading to arbitrarily high welfare regret for the principals, despite perfect control of each agent by its principal, and near-perfect collective alignment.}
\end{figure*}

\begin{lemma}
    For any $\bm{\sigma}, \bm{\sigma}' \in \bm{\Sigma}$, $\mu^w(\bm{\sigma}) \leq \mu^w(\bm{\sigma}')$ if and only if $w(\bm{\sigma}) \leq w(\bm{\sigma}')$.
\end{lemma}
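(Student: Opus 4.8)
The plan is to show that $\mu^w$ is simply a positive affine transformation of $w$, from which the biconditional follows immediately, since positive affine maps preserve order and both quantities extend to mixed strategies via the same expectation.

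First I would recall that $w(\bm{s}) = \frac{1}{n}\sum_i u^i(\bm{s})$, so that $\sum_i u^i = n\,w$ as functions on $\bm{S}$. Substituting this into the definition $\mu^w \coloneqq \frac{\sum_i u^i - c^i}{\sum_i m^i}$, and recalling that $c^i = c(u^i)\bm{1}$ so that $\sum_i c^i = \big(\sum_i c(u^i)\big)\bm{1}$ is a constant, I obtain the pointwise identity
$$\mu^w = \frac{n}{\sum_j m^j}\, w \;-\; \frac{\sum_i c(u^i)}{\sum_j m^j}\,\bm{1} \;=\; a\, w + b\,\bm{1}, \qquad a \coloneqq \frac{n}{\sum_j m^j}, \quad b \coloneqq -\frac{\sum_i c(u^i)}{\sum_j m^j}.$$

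Next I would observe that $a > 0$. Since $m$ is a norm, each $m^j = m(u^j - c^j) \geq 0$, so $\sum_j m^j > 0$ as long as at least one agent has a non-constant utility function — which is in any case exactly the condition needed for $\mu^w$ to be well-defined (the denominator must not vanish). Hence $a$ is a strictly positive scalar and $b$ a scalar.

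Finally, since $\mu^w = a\,w + b\,\bm{1}$ holds pointwise on $\bm{S}$, taking expectations under an arbitrary profile $\bm{\sigma}$ and using linearity of expectation gives $\mu^w(\bm{\sigma}) = a\, w(\bm{\sigma}) + b$. As $a > 0$, the map $t \mapsto a t + b$ is strictly increasing, so $\mu^w(\bm{\sigma}) \leq \mu^w(\bm{\sigma}')$ if and only if $w(\bm{\sigma}) \leq w(\bm{\sigma}')$, as required. The computation is direct; the only point genuinely needing care is the positivity of $a$ (equivalently, that $\mu^w$ is well-defined at all), which is therefore the main — and indeed the sole — obstacle.
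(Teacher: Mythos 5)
Your proof is correct and follows essentially the same route as the paper's: both derive the pointwise identity $\mu^w = \frac{n}{\sum_i m^i} w - \frac{\sum_i c^i}{\sum_i m^i}$ and conclude via linearity of expectation, with the paper citing its affine-transformation lemma where you instead argue directly that a positive affine map is order-preserving. Your additional remark on the positivity of the scale factor (well-definedness of $\mu^w$) is a minor point the paper leaves implicit, but it does not change the substance of the argument.
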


\begin{proof}
    By Definition \ref{def:CA} we have that:
    \begin{align*}
        \mu^w 
        &\coloneqq \frac{\sum_i u^i - c^i}{\sum_i m^i}\\
        &= \frac{1}{\sum_i m^i} \sum_i u^i - \frac{\sum_i c^i}{\sum_i m^i}\\
        &= \frac{n}{\sum_i m^i} w - \frac{\sum_i c^i}{\sum_i m^i}.
    \end{align*}
    The result follows by the linearity of expectation, taken with respect to $\bm{s}$. Indeed, setting $a \coloneqq \frac{n}{\sum_i m^i}$ and $b = - \frac{\sum_i c^i}{\sum_i m^i}$ then by Lemma \ref{lem:affine_transformation} we have that $\preceq^\mu = \preceq^w$, where $\preceq^\mu$ and $\preceq^w$ denote the preference ordering over mixed strategies induced by $\mu^w$ and $w$ respectively.
\end{proof}

\begin{lemma}
    There exists a family of (two-player, $k$-action) delegation games $D$ such that even if $\IA^i(D) = 1$, $\IC^i(D) = 1$ for each agent, and there is only one NE $\bm{\sigma}$, we have $\lim_{k \to \infty} \CA(D) = 1$ but $\lim_{k \to \infty} \frac{\wHat_\star - \wHat(\bm{\sigma})}{\wHat_+ - \wHat_-} = 1$.
\end{lemma}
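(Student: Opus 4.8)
The plan is to instantiate $D$ using the Traveller's Dilemma of Figure \ref{fig:TD} as the agent game $G$, with $k$ actions and the parameter fixed to $x = \frac{1}{k+1}$ (so that the final diagonal entry $1-kx$ equals $x$, making the diagonal payoffs $1-jx$ consistent for all $1 \le j \le k$). To force $\IA^i = 1$, I would set the principal game equal to the agent game, $\uHat^i = u^i$: identical preferences give $\IA^i = 1$ by Proposition \ref{prop:D2}, and henceforth every principal-side quantity ($\wHat_\star, \wHat_+, \wHat_-, \wHat(\bm\sigma)$) coincides with its agent-side counterpart. The game is symmetric under exchanging the two players, which I exploit twice below.

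The first and main task is to show $G$ has a unique Nash equilibrium, at $(Z,Z)$ (both players choosing the last action). Writing $u^1(i,j)$ explicitly --- $1-jx$ on the diagonal, $1-(j-1)x$ just below it (the undercutting bonus), $1-(j+1)x$ just above it (the penalty), and $0$ when $|i-j|\ge 2$ --- one finds the best-response map is $j \mapsto j+1$ for $j < k$ and $k \mapsto k$, whose only fixed point is $k$. This pins down $(Z,Z)$ as the unique \emph{pure} equilibrium and verifies it is one (deviating from $Z$ against $Z$ drops the payoff from $x$ to $0$). The delicate part --- and the main obstacle --- is ruling out mixed equilibria: since $i+1$ only \emph{weakly} dominates $i$ (they tie on the far off-diagonal zeros), I cannot simply invoke iterated strict dominance. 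I would instead argue directly, via the strictly increasing best-response structure, that the highest index in either player's support must escalate to $k$ in any equilibrium, concentrating all mass on $Z$; this is the standard (if fiddly) Traveller's Dilemma uniqueness argument. Given uniqueness, $\IC^i = 1$ means the agents play exactly $\bm\sigma = (Z,Z)$.

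Next I would read off the welfare quantities. Maximal welfare is attained at the pure profile $(A,A)$ (Lemma \ref{lem:pure-welfare}), giving $w_\star = 1-x$; the equilibrium gives $w(\bm\sigma) = x$; the ideal welfare is $w_+ = 1$ (each player's best entry is $1$); and $w_- = 0$ (for $k \ge 3$ the zero entries are present and, since every nonzero entry is at least $x>0$, they are the minimum). As $\uHat^i = u^i$ these transfer verbatim to the principals, so
\[
\frac{\wHat_\star - \wHat(\bm\sigma)}{\wHat_+ - \wHat_-} = \frac{(1-x)-x}{1-0} = 1 - 2x = 1 - \tfrac{2}{k+1} \xrightarrow{k\to\infty} 1,
\]
as required.

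Finally, for $\CA \to 1$ I would use symmetry. With $c$ affine-equivariant and $m$ a symmetric norm (e.g.\ the experimental choice $c(u)=\expect_{\bm s}[u]$ and $m = \norm{\cdot}_2$ with $d$ uniform), transpose symmetry forces $m^1 = m^2 =: m^*$ and $c^1 = c^2$, whence $\mu^w - u^1_\nu = \frac{u^2 - u^1}{2m^*}$ and, by the two-player form of Definition \ref{def:CA}, $\CA = 1 - \frac{m(u^1 - u^2)}{2m^*}$. It then remains to estimate the two magnitudes: $u^1 - u^2$ is supported only on the two near-diagonals, where it equals $\pm 3x$, giving $m(u^1-u^2) = \Theta(k^{-3/2})$; whereas the diagonal payoffs of $u^i$ are spread across $[x,1-x]$, giving $m^* = \Theta(k^{-1/2})$. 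Hence $\frac{m(u^1-u^2)}{2m^*} = \Theta(k^{-1}) \to 0$ and $\CA \to 1$. The only care needed is that these order estimates hold for the chosen norm; the scaling (an $O(x)$ difference on $O(k)$ entries against an $\Theta(1)$ spread on $k$ diagonal entries) is robust to the precise strictly convex norm used.
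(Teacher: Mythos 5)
Your proposal is correct and uses the paper's construction: the same Traveller's Dilemma family, the same unique equilibrium $(Z,Z)$, and the same welfare arithmetic giving normalised regret $1-2x \to 1$ (whether the dilemma is placed in the agent game with $\uHat^i = u^i$, as you do, or in the principal game with preference-identical agents, as the paper does, is immaterial once $\IA = \bm{1}$). You diverge from the paper exactly in its two loosest spots, and in both your route is arguably stronger. First, the paper simply \emph{asserts} that the only NE is the bottom-right profile; you rightly flag ruling out mixed equilibria as the real work. One correction to your sketch, though: the escalation argument must track the \emph{lowest} index $i_0$ (i.e.\ the highest claim) in the union of the two supports, not the highest index. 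If $i_0 < k$ and, say, player $1$ has $i_0$ in support, then either the opponent has mass on $\{i_0, i_0+1\}$, in which case $i_0 + 1$ strictly beats $i_0$ (the gain is at least $x$ times that mass), or the opponent's support lies in $[i_0+2,k]$, in which case $i_0$ earns $0$ while matching the opponent's lowest action earns a positive amount; either way $i_0$ is not a best response, forcing $i_0 = k$ and hence all mass on $Z$. "The highest index reaches $k$" would not, by itself, concentrate mass on $Z$. Second, for $\CA \to 1$ the paper argues for general $(c,m)$ via a norm-equivalence constant $L$ and a "limit" utility function before invoking Lemma \ref{lem:nu-equivalence} and Proposition \ref{prop:D2}; this glosses over the fact that both $L$ and the limit are taken in spaces $\R^{k^2}$ whose dimension grows with $k$. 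Your symmetry identity $\CA = 1 - m(u^1-u^2)/(2m^*)$, combined with the explicit rates $m(u^1-u^2) = \Theta(k^{-3/2})$ and $m^* = \Theta(k^{-1/2})$, avoids these issues entirely and even yields a rate, $\CA = 1 - \Theta(1/k)$; the cost is fixing $c(u) = \expect_{\bm{s}}[u(\bm{s})]$ and $m = \norm{\cdot}_2$ with uniform weights. Your closing assertion that the rates are robust to \emph{any} strictly convex norm is not justified (a weighting $d$ concentrated on the near-diagonal entries would change them), but since the lemma is an existence claim, specialising to the paper's own experimental instantiation of $\nu$ is harmless.
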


\begin{proof}
    Consider the Traveller's Dilemma as specified in Figure \ref{fig:TD}, which we take to be the principal game $\GHat$. As argued in the proof of Lemma \ref{lem:PD}, the fact that $\IA(D) = \bm{1}$ means that $\NE(G) = \NE(\GHat)$, and because $\IC(D) = \bm{1}$ then we need only consider the NEs of $G$, which in this case is simply $\bm{s} = (E,E)$, and (as in the case of Lemma \ref{lem:PD}) leads to a welfare regret of $1-2x$. If the game has $k$ actions for each player, and we assume that the payoffs for each agent are normalised such that $\wHat_+ = 1$ and $\wHat_- = 0$ then we have $x = \frac{1}{k+1}$. Thus, we have:
    $$\lim_{k \to \infty} \frac{\wHat_\star - \wHat(\bm{\sigma})}{\wHat_+ - \wHat_-} = \lim_{k \to \infty} \frac{1-\frac{1}{k+1}}{1 - 0} = 1.$$

    Now we consider the second limit. First note that $\norm{u^1 - u^2}_\infty = 3x$ and, as $c$ is affine-equivariant, then $\norm{c^1 - c^2}_\infty = 3x$ too. Because $m$ is a norm and $U$ is finite-dimensional, then there is some finite $L$ such that $m(u - u') \leq L \cdot \norm{u - u'}_\infty$ for any $u, u' \in U$, and therefore:
    \begin{align*}
        m^1 - m^2
        &= m(u^1 - c^1) - m(u^2 - c^2)\\
        &\leq m\big((u^1 - c^1) - (u^2 - c^2) \big)\\
        &\leq L \cdot \norm{(u^1 - c^1) - (u^2 - c^2)}_\infty\\
        &\leq L \cdot \norm{u^1 - u^2}_\infty + L \cdot \norm{c^1 - c^2}_\infty\\
        &= 6Lx.
    \end{align*}
    Hence, let use denote $u \coloneqq \lim_{k \to \infty} u^1 = \lim_{k \to \infty} u^2$ and similarly for $m$ and $c$. Then, assuming that $m \neq 0$ (otherwise the proof is complete), we compose limits to see that:
    $$\lim_{k \to \infty} u^1_\nu = \lim_{k \to \infty} \frac{u^1 - c^1}{m^1} = \frac{u - c}{m} = \lim_{k \to \infty} \frac{u^2 - c^2}{m^2} = \lim_{k \to \infty} u^2_\nu.$$
    To conclude the proof, observe that by Lemma \ref{lem:nu-equivalence} we have $\lim_{k \to \infty} \preceq^1 = \lim_{k \to \infty} \preceq^2$ and hence (by Proposition \ref{prop:D2}) that $\lim_{k \to \infty} \CA = 1$.
\end{proof}

\begin{lemma}
    For any $\eps \succ \bm{0}$, there exists a game $G$ such that $w_{\bm{0}} = w_+$ but for any $x > 0$, however small, $w_{\eps} - w_- < x$.
\end{lemma}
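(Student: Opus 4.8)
The plan is to exhibit the game $G$ of Figure \ref{fig:fragile} (with parameters $\eps \succ \bm 0$ and any $0 < x < 1$) and verify the two claimed properties directly. First I would record the extreme welfares: the largest payoff available to either player is $1$ (attained at $(A,A)$) and the smallest is $0$, so $\bm u_+ = (1,1)$ and $\bm u_- = (0,0)$, giving $w_+ = 1$ and $w_- = 0$.

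The first main step is to show that $(A,A)$ is the \emph{unique} Nash equilibrium, which forces $w_{\bm 0} = w(A,A) = 1 = w_+$ (every profile satisfies $w(\bm\sigma) \le w_+$, so a single equilibrium of welfare $w_+$ already pins down the worst one). I would begin by eliminating action $C$ from the support of any equilibrium: column $C$ pays player $2$ exactly $0$ against every row, so it can be a best response only if columns $A$ and $B$ also pay at most $0$, which forces $p_A = p_B = 0$, i.e. player $1$ plays $C$ purely; but then column $B$ pays $x > 0$, a contradiction, so $q_C = 0$, and the transposed argument gives $p_C = 0$. On the residual $2 \times 2$ game over $\{A,B\}$, row $A$ pays $q_A + (1-q_A)x > 0$ while row $B$ pays exactly $(1-\epsilon^1)$ times the same quantity, so $A$ strictly dominates $B$ for player $1$ (as $\epsilon^1 > 0$); player $2$'s unique best response to $A$ is $A$, yielding the unique equilibrium $(A,A)$.

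The second main step is to show $(B,B)$ is an $\epsNE$ of vanishing welfare. Here action $C$ is crucial: when player $2$ plays $B$, player $1$'s available payoffs are $\{u^1(A,B), u^1(B,B), u^1(C,B)\} = \{x, (1-\epsilon^1)x, 0\}$, so the minimum is $0$ and the maximum is $x$, and the $\epsilon^1$-best-response threshold is exactly $0 + (1-\epsilon^1)(x-0) = (1-\epsilon^1)x = u^1(B,B)$; thus $B$ is an $\epsilon^1$-best response (with equality), and the symmetric computation shows $B$ is an $\epsilon^2$-best response for player $2$. Hence $(B,B) \in \epsNE(G)$, so $w_{\eps} \le w(B,B) = \tfrac{x}{2}\big((1-\epsilon^1) + (1-\epsilon^2)\big) < x$ because $\epsilon^1 + \epsilon^2 > 0$; combined with $w_{\eps} \ge w_- = 0$ this yields $w_{\eps} - w_- < x$.

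The main obstacle is establishing \emph{uniqueness} of the equilibrium, and in particular ruling out mixed equilibria of lower welfare, rather than the routine arithmetic checking the $\epsilon$-best-response inequalities. The design point worth emphasising is that the zero-payoff action $C$ is inert at exact rationality (it is dominated and never enters an equilibrium, so it cannot spoil $w_{\bm 0} = w_+$), yet it lowers the relevant minimum response payoff from $(1-\epsilon^i)x$ to $0$, widening the $\epsilon$-best-response tolerance just enough to license the catastrophic profile $(B,B)$ as an $\epsNE$.
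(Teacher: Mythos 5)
Your proof is correct and takes essentially the same route as the paper: it uses the game of Figure \ref{fig:fragile}, establishes that $(A,A)$ is the unique NE (giving $w_{\bm{0}} = w_+ = 1$), and verifies that $(B,B)$ is an $\epsNE$ whose welfare $\left(1 - \frac{\epsilon^1+\epsilon^2}{2}\right)x$ is below $x$. The only difference is that you spell out the equilibrium-uniqueness argument (eliminating $C$ from equilibrium supports, then strict dominance of $A$ over $B$ on the residual game), which the paper states as a bare observation.
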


\begin{proof}
    Consider the game specified in Figure \ref{fig:fragile}. Observe that the only NE is given by $(A,A)$, leading to welfare $w_{\bm{0}} = w_+ = 1$. In contrast, note that $(B,B)$ is an $\epsNE$, as neither player can deviate to obtain a greater fraction than $(1 - \epsilon^i)$ of the utility available when playing their best response ($x$). Given this, we have:
    $$w_{\eps} - w_- < \left(1 - \frac{\epsilon^1 + \epsilon^2}{2}\right) x - 0 < x.$$
\end{proof}

\begin{proposition}
    Given a game $G$, if $u^i(\bm{s})$ for every $i$ and $\bm{s} \in S$, and $d$ has maximal support over the outcomes generated when agents act together/alone (respectively), then: 
    \begin{align*}
        \CC &\leq \lim_{\abs{\D} \to \infty} \frac{\min_{\bm{s} \in \D(\bm{S})} w(\bm{s})}{\max_{\bm{s} \in \D(\bm{S})} w(\bm{s})}, \text{ and }\\
        \IC^i &\leq \lim_{\abs{\D} \to \infty} \min_{\bm{s} \in \D(\bm{S})} \frac{u^i(\bm{s})}{\max_{\tilde{s}^i \in \D(S^i)} u^i(\bm{s}^{-i},\tilde{s}^i)}.
    \end{align*}
\end{proposition}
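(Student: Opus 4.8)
The plan is to unfold each capability measure into an explicit ratio, relax it to an upper bound using the non-negativity hypothesis (note that the statement should read ``if $u^i(\bm{s}) \geq 0$ for every $i$ and $\bm{s} \in \bm{S}$''), and then argue that the empirical quantities appearing in the estimators converge to the relevant true quantities as $\abs{\D}\to\infty$. First I would rewrite $\IC^i$ explicitly: from the definition of an $\epsBR$, the smallest $\epsilon^i$ compatible with agent $i$'s observed response to a situation $\bm{s}^{-i}$ makes $1-\epsilon^i$ equal to $\frac{u^i(\bm{s}) - \min_{\tilde{s}^i} u^i(\bm{s}^{-i},\tilde{s}^i)}{\max_{\tilde{s}^i} u^i(\bm{s}^{-i},\tilde{s}^i) - \min_{\tilde{s}^i} u^i(\bm{s}^{-i},\tilde{s}^i)}$; since $\epsilon^i$ must cover \emph{every} observed situation, $\IC^i = 1 - \epsilon^i$ is the minimum of this ratio over situations. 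Analogously, Definition \ref{def:CC} gives $\CC = \frac{w(\bm{\sigma}) - w_{\eps}}{w_\star - w_{\bm{0}}}$, where $w(\bm{\sigma})$ is the welfare the agents secure when acting together.

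The central relaxation is a monotonicity observation: for $0 \le \mu \le a \le M$ the map $\mu \mapsto \frac{a-\mu}{M-\mu}$ is non-increasing, since its derivative $\frac{a-M}{(M-\mu)^2} \le 0$. Hence replacing the subtracted baseline $\mu$ by its lower bound $0$ can only increase the ratio. Non-negativity guarantees $\min_{\tilde{s}^i} u^i(\bm{s}^{-i},\tilde{s}^i) \ge 0$ (and $w_{\eps} \ge 0$), and $u^i(\bm{s}) \le \max_{\tilde{s}^i} u^i(\bm{s}^{-i},\tilde{s}^i)$ always holds because the played action is itself a candidate response; so dropping the baseline yields the per-situation upper bound $\frac{u^i(\bm{s})}{\max_{\tilde{s}^i} u^i(\bm{s}^{-i},\tilde{s}^i)}$. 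The remaining step replaces the true normaliser by its empirical counterpart: because $d$ has maximal support over the outcomes generated when the agents act alone, as $\abs{\D}\to\infty$ the empirical maximum $\max_{\tilde{s}^i\in\D(S^i)} u^i(\bm{s}^{-i},\tilde{s}^i)$ converges to the true maximum, and $\D(\bm{S})$ exhausts the observed support so the outer minimisation is over the correct index set. Since the per-situation inequality holds pointwise, taking minima over the (asymptotically identical) index sets preserves it, yielding the $\IC^i$ bound.

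The $\CC$ bound follows the same template, relaxing $w_{\eps} \ge 0$ in the numerator and using maximal support over the together-outcomes to drive $\max_{\bm{s}\in\D(\bm{S})} w(\bm{s})$ to $w_\star$, while $\min_{\bm{s}\in\D(\bm{S})} w(\bm{s})$ upper-bounds the secured welfare $w(\bm{\sigma})$. I would present $\IC^i$ in full first, then note $\CC$ mirrors it with $w$ in the role of $u^i$.

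The main obstacle will be the limiting-support argument and, specifically, the reconciliation of the \emph{two} distinct baselines $w_{\eps}$ and $w_{\bm{0}}$ in Definition \ref{def:CC} with the single observable welfare range. For $\IC^i$ one must verify that maximal support over the alone-outcomes guarantees the best alternative response $\arg\max_{\tilde{s}^i} u^i(\bm{s}^{-i},\tilde{s}^i)$ is eventually observed for each relevant $\bm{s}^{-i}$, so that the empirical denominator is not a strict under-estimate in the limit. For $\CC$ the delicacy is that dropping $w_{\eps}$ from the numerator enlarges the ratio (an over-estimate, the desired direction) whereas the denominator contains $w_{\bm{0}}$ rather than $w_{\eps}$; here I would rely on non-negativity together with the facts $w_{\eps} \le w_{\bm{0}}$ and $w(\bm{\sigma}) \ge w_{\eps}$, and on the convergence $\max_{\bm{s}\in\D(\bm{S})} w(\bm{s}) \to w_\star \ge w_\star - w_{\bm{0}}$, to ensure every relaxation moves the estimate toward an over-estimate. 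Checking that each of these relaxations is simultaneously in the correct direction is precisely the part that the monotonicity observation is designed to control, and is where the proof requires the most care.
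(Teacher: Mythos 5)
Your $\IC^i$ argument is correct and is essentially the paper's own proof: unfold the definition into the per-situation ratio $\big(u^i(\bm{s}) - \min_{\tilde{s}^i} u^i(\bm{s}^{-i},\tilde{s}^i)\big) / \big(\max_{\tilde{s}^i} u^i(\bm{s}^{-i},\tilde{s}^i) - \min_{\tilde{s}^i} u^i(\bm{s}^{-i},\tilde{s}^i)\big)$, drop the non-negative baseline using exactly your monotonicity observation, and let maximal support turn the true maximiser into the empirical one in the limit. (You also correctly identified the typo: the hypothesis should read $u^i(\bm{s}) \geq 0$.)

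The $\CC$ half, however, has a genuine gap, and it sits precisely at the point you flagged as delicate. You assert that maximal support drives $\max_{\bm{s}\in\D(\bm{S})} w(\bm{s})$ to $w_\star$. It does not: the observations are generated by the agents themselves, and their capabilities are exactly what limits play. Under the paper's observation model, the outcomes seen when the agents act together satisfy $w_{\eps} + \CC\cdot(w_\star - w_{\bm{0}}) \leq w(\bm{s}) \leq w_{\eps} + (w_\star - w_{\bm{0}})$, so the empirical maximum converges to $w_{\eps} + (w_\star - w_{\bm{0}})$, which is strictly below $w_\star$ whenever $w_{\eps} < w_{\bm{0}}$. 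This is not cosmetic: if the maximum really converged to $w_\star$, the proposition would be \emph{false}. Taking $w_{\eps}=0$, $w_{\bm{0}}=1$, $w_\star=2$, $\CC=\tfrac{1}{2}$, the minimum converges to $w_{\eps} + \CC(w_\star - w_{\bm{0}}) = \tfrac{1}{2}$, and your version of the estimator would tend to $\tfrac{1}{2}/2 = \tfrac{1}{4} < \CC$; in general $\CC \leq \big(w_{\eps} + \CC(w_\star - w_{\bm{0}})\big)/w_\star$ requires $\CC\, w_{\bm{0}} \leq w_{\eps}$, which non-negativity and $w_{\eps} \leq w_{\bm{0}}$ do not supply. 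The paper's proof instead identifies \emph{both} empirical limits within the capability-limited window and then uses only $w_{\eps} \geq 0$ and $\CC \leq 1$ to get $\CC \cdot \big(w_{\eps} + (w_\star - w_{\bm{0}})\big) \leq w_{\eps} + \CC(w_\star - w_{\bm{0}})$, i.e. the limiting ratio dominates $\CC$. So the essential ingredient you are missing is the upper cap $w_{\eps} + (w_\star - w_{\bm{0}})$ on what capability-limited agents can ever exhibit; with it, the two-baselines worry you raised dissolves, and without it, no combination of the facts you list closes the argument.
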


\begin{proof}
    Let us begin by considering the case in which the agents are observed acting together. Given their cooperative capabilities $\CC$ we observe outcomes $\bm{s}$ such that:
    $$w_{\eps} + \CC \cdot (w_\star - w_{\bm{0}}) \leq w(\bm{s}) \leq w_{\eps} + (w_\star - w_{\bm{0}}).$$
    If $d$ has maximal support over the set of such outcomes, then we have:
    \begin{align*}
        &\lim_{\abs{\D} \to \infty} \max_{\bm{s} \in \D(\bm{S})} w(\bm{s}) = w_{\eps} + (w_\star - w_{\bm{0}}),\\
        &\lim_{\abs{\D} \to \infty} \min_{\bm{s} \in \D(\bm{S})} w(\bm{s}) = w_{\eps} + \CC \cdot (w_\star - w_{\bm{0}}).
    \end{align*}
    If $\lim_{\abs{\D} \to \infty} \max_{\bm{s} \in \D(\bm{S})} w(\bm{s}) = 0$, then we must have $u^i(\bm{s}) = 0$ for every $1 \leq i \leq n$ and $\bm{s} \in S$, in which case it is meaningless to talk about the agents' cooperative capabilities, so let us assume otherwise.
    Subtracting $w_{\eps}$ (which, by assumption, is non-negative), cancelling the terms $(w_\star - w_{\bm{0}})$ in these equalities, and composing limits leads to:
    $$\CC \leq \lim_{\abs{\D} \to \infty} \frac{\min_{\bm{s} \in \D(\bm{S})} w(\bm{s})}{\max_{\bm{s} \in \D(\bm{S})} w(\bm{s})}.$$
    Similarly, in the case where each agent is observed acting alone, then given their individual capabilities $\IC^i$ we observe outcomes $\bm{s}$ such that:
    $$u^i_\bullet(\bm{s}^{-i}) + \IC^i \cdot \big( u^i_\star(\bm{s}^{-i}) - u^i_\bullet(\bm{s}^{-i}) \big) \leq u^i(\bm{s}) \leq u^i_\star(\bm{s}^{-i}),$$
    where we define:
    \begin{align*}
        u^i_\bullet(\bm{s}^{-i}) &\coloneqq \min_{\tilde{s}^i} u^i(\bm{s}^{-i}, \tilde{s}^i),\\
        u^i_\star(\bm{s}^{-i}) &\coloneqq \max_{\tilde{s}^i} u^i(\bm{s}^{-i}, \tilde{s}^i).
    \end{align*}
    If $d$ has maximal support over the set of such outcomes, including over \emph{all} pure partial strategy profiles $\bm{s}^{-i}$, then we have:
    \begin{align*}
        \IC^i 
        &\leq \lim_{\abs{\D} \to \infty} \min_{\bm{s} \in \D(\bm{S})} \frac{u^i(\bm{s}) - u^i_\bullet(\bm{s}^{-i})}{u^i_\star(\bm{s}^{-i}) - u^i_\bullet(\bm{s}^{-i})}\\
        &\leq \lim_{\abs{\D} \to \infty} \min_{\bm{s} \in \D(\bm{S})} \frac{u^i(\bm{s})}{u^i_\star(\bm{s}^{-i})}\\
        &= \lim_{\abs{\D} \to \infty} \min_{\bm{s} \in \D(\bm{S})} \frac{u^i(\bm{s})}{\max_{\tilde{s}^i \in \D(S^i)} u^i(\bm{s}^{-i},\tilde{s}^i)},
    \end{align*}
    which concludes the proof.
\end{proof}

\begin{lemma}
    \label{lem:affine_transformation}
    Assuming that a game $G$ has at least three outcomes, then for utility functions $u$ and $u'$ with corresponding preference orderings $\preceq$ and $\preceq'$ respectively, $\preceq = \preceq'$ if and only if there exist $a \in \R_{>0}$ and $b \in \R$ such that $u = a u' + b$. Similarly, $\preceq = \succeq'$ if and only if there exist $a \in \R_{>0}$ and $b \in \R$ such that $u = - a u' + b$.
\end{lemma}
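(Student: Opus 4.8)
The plan is to prove both biconditionals, reducing the ``opposite preferences'' statement to the ``same preferences'' one at the end. The forward direction of the first biconditional is immediate: if $u = a u' + b$ with $a > 0$, then by linearity of expectation $u(\bm{\sigma}) = a\, u'(\bm{\sigma}) + b$ for every strategy profile $\bm{\sigma}$, and since $x \mapsto ax + b$ is strictly increasing we get $u(\bm{\sigma}) \le u(\bm{\sigma}') \iff u'(\bm{\sigma}) \le u'(\bm{\sigma}')$, i.e. $\preceq\, =\, \preceq'$. The substance is the converse.

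For the converse I would first dispose of the degenerate case where $u$ is constant over outcomes: then every profile is $\preceq$-indifferent, hence every profile is $\preceq'$-indifferent, forcing $u'$ constant too, so $u = u' + b$ works with $a = 1$. Otherwise, let $\bm{s}_0$ minimise and $\bm{s}_1$ maximise $u$ over the (pure) outcomes, so $u(\bm{s}_0) < u(\bm{s}_1)$; since $\preceq\, =\, \preceq'$ these same outcomes are $\preceq'$-least and $\preceq'$-greatest, whence $u'(\bm{s}_0) < u'(\bm{s}_1)$. I then calibrate by setting $a \coloneqq \frac{u'(\bm{s}_1) - u'(\bm{s}_0)}{u(\bm{s}_1) - u(\bm{s}_0)} > 0$ and $b \coloneqq u'(\bm{s}_0) - a\, u(\bm{s}_0)$, so that the affine image $v \coloneqq a u + b$ agrees with $u'$ at both $\bm{s}_0$ and $\bm{s}_1$.

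The crux is to show $v(\bm{s}) = u'(\bm{s})$ at every remaining outcome $\bm{s}$ --- and this is where assuming at least three outcomes gives the step genuine content. For any such $\bm{s}$ we have $u(\bm{s}_0) \le u(\bm{s}) \le u(\bm{s}_1)$, so there is a unique $\lambda \in [0,1]$ with $u(\bm{s}) = \lambda\, u(\bm{s}_1) + (1 - \lambda)\, u(\bm{s}_0)$. Viewing the calibrating lottery $p \coloneqq \lambda\, \bm{s}_1 + (1 - \lambda)\, \bm{s}_0$ as a distribution over outcomes and again using linearity of expectation, $u(p) = u(\bm{s})$, so $\bm{s}$ and $p$ are $\preceq$-indifferent; transferring this indifference across $\preceq\, =\, \preceq'$ gives $u'(\bm{s}) = \lambda\, u'(\bm{s}_1) + (1 - \lambda)\, u'(\bm{s}_0)$. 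Since $v$ is affine in $u$ and matches $u'$ at $\bm{s}_0, \bm{s}_1$, the same convex combination yields $v(\bm{s}) = \lambda\, u'(\bm{s}_1) + (1 - \lambda)\, u'(\bm{s}_0) = u'(\bm{s})$. Hence $u' = a u + b$, and rearranging gives $u = (1/a)\, u' - (b/a)$ with $1/a > 0$, as required.

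Finally, for the statement $\preceq\, =\, \succeq'$ I would note that $\succeq'$ is precisely the ordering induced by the utility $-u'$ (since $\bm{\sigma} \succeq' \bm{\sigma}' \iff -u'(\bm{\sigma}) \le -u'(\bm{\sigma}')$); applying the first biconditional to the pair $(u, -u')$ then yields $u = a(-u') + b = -a u' + b$ with $a > 0$. The step I expect to be most delicate is the legitimacy of the calibrating lottery $p$: it mixes two pure profiles that may differ in several coordinates and so need not be a product distribution, so this argument relies on reading $\preceq$ as an ordering over all distributions over $\bm{S}$ (per the correspondence noted where preferences were defined), rather than over mixed profiles alone; handling the constant-utility case separately is the other point requiring care.
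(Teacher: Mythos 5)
Your proof is correct, but it takes a genuinely different route from the paper's: the paper does not prove this lemma at all, instead citing Proposition 6.B.2 of \cite{MasColell1995} (uniqueness of von Neumann--Morgenstern representations up to positive affine transformation) for the first biconditional and asserting that the second half is ``analogous''. What you have done is reconstruct that textbook argument in the paper's own setting: the degenerate constant case, calibration of $a$ and $b$ at a $u$-minimal and a $u$-maximal outcome, and transfer of indifference between an outcome $\bm{s}$ and the two-point lottery $p$ realising the same expected utility. Two things in your write-up improve on ``see the textbook''. First, your reduction of the second biconditional to the first, by observing that $\succeq'$ is the ordering induced by $-u'$, is exactly the clean way to make the paper's ``can be proved analogously'' rigorous without duplicating work. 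Second, your closing caveat identifies the right subtlety: the calibrating lottery $p$ is in general a \emph{correlated} distribution over $\bm{S}$, not a product of mixed strategies, so the argument requires reading $\preceq$ as an ordering over all distributions on $\bm{S}$. The paper is in the same boat --- the cited proposition concerns preferences over lotteries on an abstract outcome space, and the paper's footnote (somewhat loosely) identifies mixed profiles with such distributions --- so your self-contained proof rests on precisely the same reading as the paper's citation. A further observation implicit in your argument: the ``at least three outcomes'' hypothesis is never used (with two outcomes your crux step is simply vacuous and calibration alone finishes the proof), so under the lottery reading that hypothesis is dispensable; the paper presumably retains it only to match the setting of the cited result.
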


\begin{proof}
    See, e.g. Proposition 6.B.2 in \cite{MasColell1995}, for a proof of the first half of this lemma. The second half can be proved analogously.
\end{proof}

\begin{lemma}
    \label{lem:pure-welfare}
    Given a game $G$, there is a pure strategy $\bm{s}_\star$ such that $w_\star(G) = w(\bm{s}_\star)$.
\end{lemma}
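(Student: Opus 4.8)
The plan is to exploit the fact that welfare is linear in the distribution over outcomes. By the notational convention established for $w$, we have $w(\bm{\sigma}) = \expect_{\bm{\sigma}}[w(\bm{s})]$, so the expected welfare of any mixed profile is nothing but a convex combination of the finitely many pure-profile welfare values $w(\bm{s})$. Since a convex combination never exceeds the largest of the values being combined, this immediately pins the maximum over mixed profiles to a pure profile.

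Concretely, I would first write out the expectation explicitly. Because players randomise independently, the probability that $\bm{\sigma}$ assigns to a pure profile $\bm{s} = (s^1,\dots,s^n)$ is $\prod_i \sigma^i(s^i) \geq 0$, and these weights sum to $1$ over $\bm{s} \in \bm{S}$. Hence
$$w(\bm{\sigma}) = \sum_{\bm{s} \in \bm{S}} \Big(\prod_i \sigma^i(s^i)\Big) w(\bm{s}) \leq \max_{\bm{s} \in \bm{S}} w(\bm{s}),$$
the inequality holding because the right-hand value is an upper bound on each term of a convex combination. Taking the supremum over $\bm{\sigma}$ gives $w_\star(G) = \max_{\bm{\sigma}} w(\bm{\sigma}) \leq \max_{\bm{s}} w(\bm{s})$.

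For the reverse inequality, I would note that every pure profile $\bm{s}$ is realised by a degenerate mixed profile (each $\sigma^i$ a point mass on $s^i$), so $\max_{\bm{s}} w(\bm{s}) \leq \max_{\bm{\sigma}} w(\bm{\sigma}) = w_\star(G)$. Combining the two inequalities yields $w_\star(G) = \max_{\bm{s}} w(\bm{s})$; since $\bm{S}$ is finite this maximum is attained at some $\bm{s}_\star$, which is the desired profile. There is no real obstacle here — the only points requiring a moment's care are confirming that the mixing weights form a genuine probability distribution over $\bm{S}$ (so that the averaging bound applies) and that finiteness of the pure strategy space guarantees the maximum is achieved rather than merely approached.
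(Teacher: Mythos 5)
Your proof is correct and rests on the same core fact as the paper's: the welfare of a mixed profile $w(\bm{\sigma})$ is a convex combination of the pure-profile welfares $w(\bm{s})$, so the maximum over mixed profiles coincides with the maximum over the (finitely many) pure profiles. The paper merely packages this as a contradiction argument on the support of a maximising $\bm{\sigma}$ — any supported pure profile must attain $w_\star$ — whereas you give the direct two-sided inequality; the substance is the same.
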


\begin{proof}
    Recall that $w_\star(G) \coloneqq \max_{\bm{\sigma}} w(\bm{\sigma})$. Let $\bm{\sigma}$ be such that $w(\bm{\sigma}) = w_\star(G)$, and consider the set $\bm{S}' \subseteq \bm{S}$ of pure strategy profiles with positive support in $\bm{\sigma}$.
    Clearly, for any $\bm{s}, \tilde{\bm{s}} \in \bm{S}$ we must have $w(\bm{\sigma}) = w(\tilde{\bm{s}})$. Otherwise, let $\bm{s}_\star \in \argmax_{\bm{s}_\star \in \bm{S}'}$. Then $w(\bm{s}_\star) > w(\bm{\sigma})$, a contradiction.
    But if $w(\bm{\sigma}) = w(\tilde{\bm{s}})$ for any $\bm{s}, \tilde{\bm{s}} \in \bm{S}$, then $w(\bm{\sigma}) = \sum_{\bm{s} \in \bm{S}'} \bm{\sigma}(\bm{s}) w(\bm{s}) = w(\bm{s})$ for any $\bm{s} \in \bm{S}'$. Choosing one such $\bm{s}$ and denoting it $\bm{s}_\star$ completes the proof.
\end{proof}

\subsubsection{Bounds on Welfare `Fairness' Remainder}

In Theorem \ref{thm:capabilities-bound} we remarked that $R(\bm{\sigma})$ -- a `fairness' remainder term accounting for collective misalignment and unequal $r^i$ -- can be reduced to zero when either $\CA = 1$ or all $r^i$ are equal, where recall that:
$$R(\bm{\sigma}) \coloneqq \frac {1}{n} \sum_i (\hat m^i - r^*m^i) \left( u_\nu^i(\bm{\sHat}_\star) - u_\nu^i(\bm{\sigma}) \right).$$
Here we provide more detail substantiating those claims.
First, making no assumptions about collective alignment, we can see that when all $r^i$ are equal, we must have $r^* = r^i$, and therefore $R(\bm \sigma) = 0$.
Alternatively, we can bound $R(\bm \sigma)$ by appealing to the collective alignment of the agents, which, recall, is given in Definition \ref{def:CA} as:
$$\CA(D) = 1 - \sum_i \frac{m^i}{\sum_j m^j} \cdot m(\mu^w - u^i_\nu),$$
where $\mu^w \coloneqq \frac{\sum_i u^i - c^i}{\sum_i m^i}$.
Thus, when $\CA = 1$, all $u_\nu^i$ are equal to $\mu^w$, and so $u_\nu^i(\bm{\sHat}_\star) - u_\nu^i(\bm{\sigma})$ is constant in $i$. By choosing $r^* = \frac{\sum_i \hat m^i} {\sum_i m^i}$ so that larger and smaller deviations in $r^i$ cancel, we again have $R(\bm \sigma) = 0$.

More generally, let us define $d_w^i \coloneqq m(u_\nu^i - \mu^w)$.\footnote{Note that we can express $\CA = 1 - \frac{\sum_i m^i d_w^i}{\sum_i m^i}$ as a weighted average of these terms.} Incorporating both principles, we have a bound in terms of both calibration and collective alignment.
\begin{proposition}
    \label{prop:calibration-CA-bound}
    If $r^* = \frac{\sum_i \hat m^i} {\sum_i m^i}$, then:
    $$
    R(\bm \sigma) \le \frac {2K} {n} \sum_i \left|\hat m^i - r^*m^i\right| d_w^i,
    $$
    where $K$ is defined as in Proposition \ref{prop:IA-bound}.
\end{proposition}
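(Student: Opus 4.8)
The plan is to exploit the specific choice $r^* = \frac{\sum_i \hat m^i}{\sum_i m^i}$, which is engineered precisely so that the coefficients $\hat m^i - r^* m^i$ sum to zero: indeed $\sum_i (\hat m^i - r^* m^i) = \sum_i \hat m^i - r^* \sum_i m^i = \sum_i \hat m^i - \sum_i \hat m^i = 0$. This zero-sum identity is the crux of the argument, because it means we may subtract from each summand of $R(\bm\sigma)$ any quantity that is constant in $i$ without changing the value of $R(\bm\sigma)$.

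Concretely, I would start from
$$R(\bm\sigma) = \frac{1}{n}\sum_i (\hat m^i - r^* m^i)\big(u_\nu^i(\bm{\sHat}_\star) - u_\nu^i(\bm\sigma)\big),$$
and use the zero-sum property to subtract the $i$-independent term $\mu^w(\bm{\sHat}_\star) - \mu^w(\bm\sigma)$ inside each parenthesis, giving
$$R(\bm\sigma) = \frac{1}{n}\sum_i (\hat m^i - r^* m^i)\Big[\big(u_\nu^i - \mu^w\big)(\bm{\sHat}_\star) - \big(u_\nu^i - \mu^w\big)(\bm\sigma)\Big].$$
Writing $f^i \coloneqq u_\nu^i - \mu^w \in U$, each bracketed term is simply $f^i(\bm{\sHat}_\star) - f^i(\bm\sigma)$, a difference of two evaluations of the single vector $f^i$. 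This centering step is exactly what surfaces the quantities $d_w^i = m(u_\nu^i - \mu^w)$ that appear in the target bound (and in the definition of $\CA$).

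It then remains to bound each factor. By the triangle inequality, $\abs{f^i(\bm{\sHat}_\star) - f^i(\bm\sigma)} \le 2\norm{f^i}_\infty$, and by the norm-equivalence constant $K$ from Proposition \ref{prop:IA-bound} — which, since $m$ is a norm on the finite-dimensional space $U$, satisfies $\norm{v}_\infty \le K\,m(v)$ for every $v \in U$ (applied to $u_\nu^i - \mu^w$ exactly as in the proof of Proposition \ref{prop:ideal-welfare-bound}) — we get $\norm{f^i}_\infty = \norm{u_\nu^i - \mu^w}_\infty \le K\,m(u_\nu^i - \mu^w) = K d_w^i$. Finally, bounding $R(\bm\sigma) \le \abs{R(\bm\sigma)}$ and applying $\abs{\hat m^i - r^* m^i}$ termwise yields
$$R(\bm\sigma) \le \frac{1}{n}\sum_i \abs{\hat m^i - r^* m^i}\cdot 2K d_w^i = \frac{2K}{n}\sum_i \abs{\hat m^i - r^* m^i} d_w^i,$$
as required. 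The only genuine subtlety — and hence the step I expect to be the main obstacle to get right — is the centering manipulation: everything else is a routine triangle-inequality-plus-norm-equivalence estimate, but it is the zero-sum identity for the coefficients, a direct consequence of the stated choice of $r^*$, that legitimises replacing the raw normalised-utility differences by their deviations from the welfare proxy $\mu^w$.
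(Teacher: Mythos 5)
Your proof is correct and follows essentially the same route as the paper's: both rest on the zero-sum identity $\sum_i (\hat m^i - r^* m^i) = 0$ forced by the choice of $r^*$, plus the bound $\lvert (u_\nu^i - \mu^w)(\bm{\sHat}_\star) - (u_\nu^i - \mu^w)(\bm{\sigma}) \rvert \le 2K d_w^i$ from norm equivalence on the finite-dimensional space $U$. The only (immaterial) difference is ordering -- you centre by $\mu^w$ first and then bound termwise, whereas the paper bounds each term by swapping $u_\nu^i$ for $\mu^w$ at cost $2K\lvert \hat m^i - r^* m^i\rvert d_w^i$ and then lets the $\mu^w$ contributions cancel in the sum.
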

\begin{proof}
    First, consider the term $(\hat m^i - r^*m^i) \left( u_\nu^i(\bm \sHat_\star) - u_\nu^i(\bm \sigma) \right)$. The coefficient $(\hat m^i - r^*m^i)$ is not necessarily positive, but irrespective of its sign, we have:
    \begin{align*}
        &(\hat m^i - r^*m^i) \left( u_\nu^i(\bm \sHat_\star) - u_\nu^i(\bm \sigma) \right) \\
        \le &(\hat m^i - r^*m^i) \left( \mu_w(\bm \sHat_\star) - \mu_w(\bm \sigma) \right) + 2K \left|\hat m^i - r^*m^i\right| d_w^i.
    \end{align*}
    Substituting this inequality into $R(\bm{\sigma})$ gives:
    \begin{align*}
        R(\bm{\sigma}) & \coloneqq \frac {1}{n} \sum_i (\hat m^i - r^*m^i) \left( u_\nu^i(\bm{\sHat}_\star) - u_\nu^i(\bm{\sigma}) \right) \\
            & \le \frac 1 n \sum_i (\hat m^i - r^*m^i) \left( \mu_w(\bm \sHat_\star) - \mu_w(\bm \sigma) \right) \\
            & + \frac {2K} {n} \sum_i \left|\hat m^i - r^*m^i\right| d_w^i \\
            & = \frac 1 n \left( \mu_w(\bm \sHat_\star) - \mu_w(\bm \sigma) \right) \sum_i (\hat m^i - r^*m^i) \\
            & + \frac {2K} {n} \sum_i \left|\hat m^i - r^*m^i\right| d_w^i \\
            & = \frac {2K} {n} \sum_i \left|\hat m^i - r^*m^i\right| d_w^i,
    \end{align*}
    where the last equality follows because:
    $$
    \sum_i (\hat m^i - r^*m^i) = \sum_i \hat m^i - \left(\frac{\sum_i \hat m^i} {\sum_i m^i}\right) \sum_i m^i = 0,
    $$
    by our choice of $r^*$.
\end{proof}

Note that this bound is also equal to zero when either $\CA = 1$ or all $r^i$ are equal.
Various other related bounds on $R(\bm \sigma)$ are possible for different choices of $r^*$. If we have no information about the collective alignment, we can either adapt the preceding bound conservatively, or we can produce a similar bound with reference only to the $r^i$ terms. In the latter case, the choice of $r^*$ is freer, giving rise to a family of bounding planes, whose envelope produces a non-linear (piecewise affine) bound. See Figure \ref{fig:3d-plots} in Section \ref{sec:alt_vis} for an example of this.

\section{Additional Experiments}
\label{app:addition_exps}

In this section, we provide further details on the experiments we ran, and include additional plots.\footnote{Our code is available at \url{https://github.com/lrhammond/delegation-games}.}

\subsection{Empirical Validation}
\label{app:more_games}

In this section, we provide further empirical validation of our theoretical results across a range of values of the four measures. As in the experiment described in Section \ref{sec:validation}, we hold three of the measures fixed and vary a fourth, recording how the average principal welfare changes as a result. In the example in Section \ref{sec:validation} we set the other variables to 0.9, and inspected games with approximately $10^1$ outcomes. Here, we set the variables to (the same) values in 0, 0.25, 0.5, 0.75, and 1, respectively, and inspect games with $10^1$, $10^2$, and $10^3$ outcomes, in Figures \ref{fig:extra_plots_10}, \ref{fig:extra_plots_100}, and \ref{fig:extra_plots_1000}, respectively.

For each such setting, we varied the fourth variable linearly between 0 and 1 and generated 10 samples per increment, as follows: 
\begin{enumerate}
    \item Randomly generate a delegation game $D$ satisfying the given measures for individual and collective alignment (of the the principals). This is done by:
    \begin{itemize}
        \item Sampling an initial vector from a multivariate Gaussian $\mathcal{N}(\bm{0},\bm{1})$ for each principal $1 \leq i \leq n$, and normalising it to get $\uHat^i_\nu$;
        \item Sampling a magnitude $\hat{m}^i$ and constant $\hat{c}^i$ from uniform distributions $\U(0.5,1.5)$ and $\U(-1,1)$, respectively;
        \item Adjusting the angle to the vectors $\uHat^i$ so as to satisfy the collective alignment measure;
        \item Sampling a utility function for each agent, using the same process as before, and adjusting the angle of the resulting vector $u^i$ so as to satisfy the individual alignment measure.
    \end{itemize}
    \item Compute the pure $\NE$s and $\epsNE$s.\footnote{Unfortunately, at the time of writing, there are no scalable game solvers that can compute all mixed $\epsNE$s for arbitrary $\eps$.}
    \item Find the pure strategy profiles $\bm{s}$ such that: $$w_{\eps} + \CC \cdot (w_\star - w_{\bm{0}}) \leq w(\bm{s}) \leq w_{\eps} + (w_\star - w_{\bm{0}}).$$
    \item Compute the mean principal welfare across all such strategies, recording this along with $\wHat_+$, $\wHat_-$, $\wHat_\star$, and $\wHat_\bullet$.
    \item Compute and record the bounds from Theorem \ref{thm:capabilities-bound} and Proposition \ref{prop:ideal-welfare-bound}, which may vary slightly depending on each $m^i$ and $\hat{m}^i$.
\end{enumerate}

\subsection{Alternative Visualisations}
\label{sec:alt_vis}

Another empirical perspective on our theoretical results is given by 3-dimensional plots of the bounds from Section \ref{sec:bounds}. Given the number of separate terms in these bounds, a full-dimensional plot is impossible, but there are a few natural ways to combine the variables along three axes. These plots display the \textit{principals' welfare regret} $\hat{w}_\star - \hat{w}(\bm{\sigma})$ alongside \textit{agents' welfare regret} $w_\star - w(\bm{\sigma})$ and \textit{total agent misalignment}, measured as $\sum_i m(\uHat^i - u^i) = 2 \sum_i (1 - \IA^i)$. 
We abstract over the agents' individual and collective capabilities as separate inputs by considering only the agent welfare. We sample outcomes with full support as mixed (correlated) strategies producing a given agent welfare regret (so a given agent welfare regret does not here specify what capabilities or mechanism gave rise to it). This leaves the `welfare calibration ratios' $r^i$, which we hold fixed per visualisation, the collective alignment $\CA$ which we permit to vary in our generating procedure, and the \emph{ideal} and \emph{maximal} welfare regret, which are not represented in these visualisations.

Figure \ref{fig:3d-plots} shows four plots of this kind, each taking a different set of fixed $r^i$ ratios. Observe that when these ratios are uneven, the principals are subject to additional `welfare miscalibration' risk, even when agent welfare regret and misalignment are zero, as seen in our bound surface and as exemplified by the experimental samples. This occurs when a trade-off between agents' utility registers as a gain according to the agents' welfare, but the same trade-off between corresponding principals is a bad one.

\begin{figure}
    \centering
    \begin{subfigure}[t]{0.49\columnwidth}
        \includegraphics[width=\textwidth]{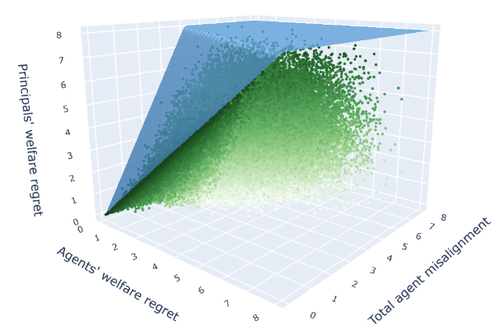}
        \caption{}
    \end{subfigure}
    \begin{subfigure}[t]{0.49\columnwidth}
        \includegraphics[width=\textwidth]{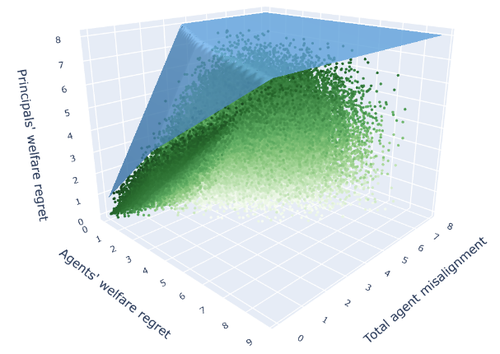}
        \caption{}
    \end{subfigure}
    \begin{subfigure}[t]{0.49\columnwidth}
        \includegraphics[width=\textwidth]{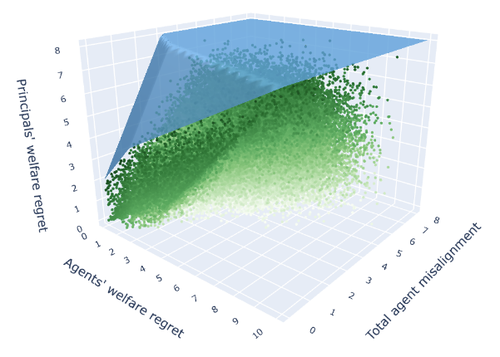}
        \caption{}
    \end{subfigure}
    \begin{subfigure}[t]{0.49\columnwidth}
        \includegraphics[width=\textwidth]{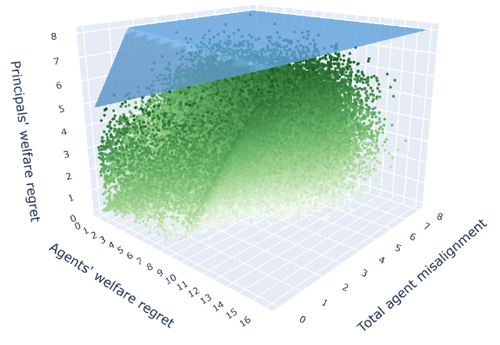}
        \caption{}
    \end{subfigure}
    \caption{$10^5$ simulated games, plotted with principals' welfare regret, agents' welfare regret, and total agent misalignment. Our bound surface in blue, with individual game outcomes as a green scatter. Collective alignment is permitted to vary, while welfare ratios $r^i$ are fixed per plot as: (a) four players, all equal $r^i = 1$; (b) four players, $r^1 = 1.4$ with $r^i = 1$ otherwise; four players, $r^1 = 2$ with $r^i = 1$ otherwise; four players, $r^1 = 5$ with $r^i = 1$ otherwise.}
    \label{fig:3d-plots}
\end{figure}

We also include a visualisation of the normalisation procedure described in Definition \ref{def:normalisation}, where recall that:
$$u^i_\nu \coloneqq \begin{cases}
    0 & \text{ if } m \big( u^i - c(u^i) \bm{1} \big) = 0\\
    \frac{u^i - c(u^i) \bm{1}}{m ( u^i - c(u^i) \bm{1} )} & \text{ otherwise.} 
\end{cases}$$
If we restrict $U$ to three dimensions, with concrete choices of $c$ and $m$, we can visualise the effect of standardisation on a collection of random utility functions, shown in Figure \ref{fig:standardisation}. Concretely, we use a uniformly-weighted expected value $c(u) = \frac 1 3 \sum u(\bm s)$ and $m = \ell^2$. Inset graphics illustrate an alternative choice of shift function, $c(u) = \frac{1}{2}\left(\max_{\bm{s}} u(\bm{s}) - \min_{\bm{s}}u(\bm{s})\right)$.
\begin{figure*}
    \centering
    \begin{subfigure}[t]{0.3\textwidth}
        \includegraphics[width=\textwidth]{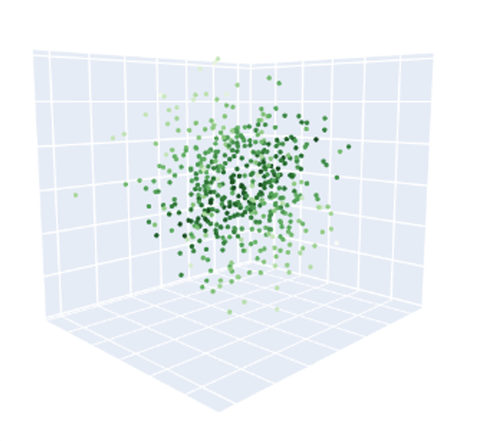}
        \caption{}
    \end{subfigure}
    \begin{subfigure}[t]{0.3\textwidth}
        \includegraphics[width=\textwidth]{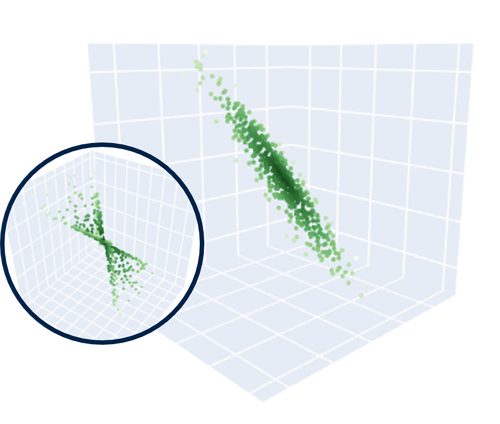}
        \caption{}
    \end{subfigure}
    \begin{subfigure}[t]{0.3\textwidth}
        \includegraphics[width=\textwidth]{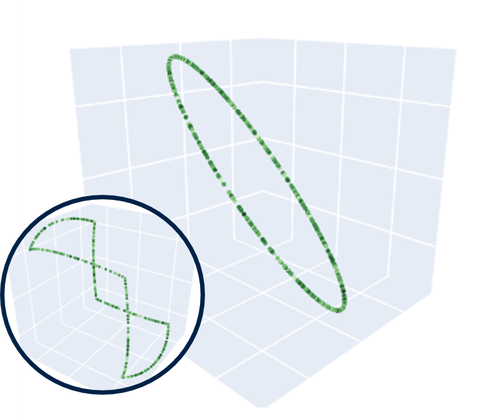}
        \caption{}
    \end{subfigure}
    \caption{(a) A random selection of utility functions $u$ (each represented as a point) fill the space, and are naively incomparable; (b) The affine-equivariant shift $u-c(u)$ projects onto a lower-dimensional surface (here, a disc; inset, a warped disc) while preserving preferences; and (c) normalising via $\frac {u-c(u)} {m(u-c(u))}$ preserves preferences while distinguishing all possible preferences (here, a circle; inset, a warped circle).}
    \label{fig:standardisation}
\end{figure*}

\section{Worked Example}

Here we show how to compute each of the four measures in the game from Example \ref{ex:driving}, which is shown as Figure \ref{fig:driving} in the main body and repeated here for reference.
\begin{figure}[H]
    \centering
    \begin{subfigure}[b]{0.36\columnwidth}
        \centering
        \renewcommand\arraystretch{1.7}
        \begin{tabular}{c|c|c|c}
            \multicolumn{1}{c}{} & \multicolumn{1}{c}{$A$} & \multicolumn{1}{c}{$B$} &\\\cline{2-3}
            $A$ & \textcolor{orange}{3}, \textcolor{blue}{3} & \textcolor{orange}{6}, \textcolor{blue}{2} &\phantom{$A$}\\\cline{2-3}
            $B$ & \textcolor{orange}{2}, \textcolor{blue}{6} & \textcolor{orange}{0}, \textcolor{blue}{0} &\phantom{$B$}\\\cline{2-3}
        \end{tabular}
        \caption{}
    \end{subfigure}    
    \begin{subfigure}[b]{0.36\columnwidth}
        \centering
        \renewcommand\arraystretch{1.7}
        \begin{tabular}{c|c|c|c}
            \multicolumn{1}{c}{} & \multicolumn{1}{c}{$A$} & \multicolumn{1}{c}{$B$} &\\\cline{2-3}
            $A$ & \textcolor{red}{2}, \textcolor{teal}{3} & \textcolor{red}{3}, \textcolor{teal}{3} &\phantom{$A$}\\\cline{2-3}
            $B$ & \textcolor{red}{4}, \textcolor{teal}{6} & \textcolor{red}{3}, \textcolor{teal}{0} &\phantom{$B$}\\\cline{2-3}
        \end{tabular}
        \caption{}
    \end{subfigure}
    \begin{subfigure}[b]{0.25\columnwidth}
        \centering
        \includegraphics[width=0.8\textwidth]{figures/delegation_game/faces_npc.png}
        \caption{}
    \end{subfigure}
    \renewcommand\thefigure{\ref*{fig:driving}}
    \caption{(a) The payoffs of the agents in Example \ref{ex:driving}; (b) the payoffs of the principals; and (c) a graphical representation of the full delegation game, with vertical arrows indicating control and horizontal arrows indicating cooperation.}
\end{figure}

\subsection{Alignment}

First, how are the agents \emph{individually aligned} with their principals? We must standardise the utility functions in order to make comparisons. (See Figure \ref{fig:standardisation} for an representative visualisation of the procedure.)
For ease of illustration,\footnote{$\ell^\infty$ is \emph{not} a strictly convex norm, so while our upper bound results are satisfied for this choice, our alignment desiderata are not all satisfied. We use $\ell^\infty$ for the purposes of this worked example simply because it results in clearer, cleaner numbers.} we choose: 
$$c(u) = \frac{\max_{\bm{s}} u(\bm{s}) - \min_{\bm{s}}u(\bm{s})}{2} \quad \text{and} \quad m = \ell^\infty.$$
Each agent and principal in this example has utilities ranging from $0$ to $6$ or $2$ to $4$, meaning that $c(u) = 3$ in each case. Subtracting this quantity gives:
\begin{figure}[H]
    \centering
    \begin{subfigure}[b]{0.49\columnwidth}
        \centering
        \renewcommand\arraystretch{1.7}
        \begin{tabular}{c|c|c|c}
            \multicolumn{1}{c}{} & \multicolumn{1}{c}{$A$} & \multicolumn{1}{c}{$B$} &\\\cline{2-3}
            $A$ & \textcolor{orange}{0}, \textcolor{blue}{0} & \textcolor{orange}{3}, \textcolor{blue}{-1} &\phantom{$A$}\\\cline{2-3}
            $B$ & \textcolor{orange}{-1}, \textcolor{blue}{3} & \textcolor{orange}{-3}, \textcolor{blue}{-3} &\phantom{$B$}\\\cline{2-3}
        \end{tabular}
    \end{subfigure}
    \begin{subfigure}[b]{0.49\columnwidth}
        \centering
        \renewcommand\arraystretch{1.7}
        \begin{tabular}{c|c|c|c}
            \multicolumn{1}{c}{} & \multicolumn{1}{c}{$A$} & \multicolumn{1}{c}{$B$} &\\\cline{2-3}
            $A$ & \textcolor{red}{-1}, \textcolor{teal}{0} & \textcolor{red}{0}, \textcolor{teal}{0} &\phantom{$A$}\\\cline{2-3}
            $B$ & \textcolor{red}{1}, \textcolor{teal}{3} & \textcolor{red}{0}, \textcolor{teal}{-3} &\phantom{$B$}\\\cline{2-3}
        \end{tabular}
    \end{subfigure}
\end{figure}
\addtocounter{figure}{-1}
The centred agents' utilities also coincidentally have the same norm, namely $m^1 = m^2 = 3$, while the \textcolor{red}{first principal} has norm $\hat{m}^1 = 1$ and the \textcolor{teal}{second principal} has norm $\hat{m}^2 = 3$.\footnote{We might interpret this normatively as the \textcolor{red}{first principal}'s preferences over these outcomes mattering less, or as their marginal relative utility being `cheap' in the setting with transferable utility.} Dividing by these norms gives:
\begin{figure}[H]
    \centering
    \begin{subfigure}[b]{0.49\columnwidth}
        \centering
        \renewcommand\arraystretch{1.7}
        \begin{tabular}{c|c|c|c}
            \multicolumn{1}{c}{} & \multicolumn{1}{c}{$A$} & \multicolumn{1}{c}{$B$} &\\\cline{2-3}
            $A$ & \textcolor{orange}{0}, \textcolor{blue}{0} & \textcolor{orange}{1}, \textcolor{blue}{$-\frac 1 3$} &\phantom{$A$}\\\cline{2-3}
            $B$ & \textcolor{orange}{$-\frac 1 3$}, \textcolor{blue}{1} & \textcolor{orange}{-1}, \textcolor{blue}{-1} &\phantom{$B$}\\\cline{2-3}
        \end{tabular}
    \end{subfigure}
    \begin{subfigure}[b]{0.49\columnwidth}
        \centering
        \renewcommand\arraystretch{1.7}
        \begin{tabular}{c|c|c|c}
            \multicolumn{1}{c}{} & \multicolumn{1}{c}{$A$} & \multicolumn{1}{c}{$B$} &\\\cline{2-3}
            $A$ & \textcolor{red}{-1}, \textcolor{teal}{0} & \textcolor{red}{0}, \textcolor{teal}{0} &\phantom{$A$}\\\cline{2-3}
            $B$ & \textcolor{red}{1}, \textcolor{teal}{1} & \textcolor{red}{0}, \textcolor{teal}{-1} &\phantom{$B$}\\\cline{2-3}
        \end{tabular}
    \end{subfigure}
\end{figure}
\addtocounter{figure}{-1}
Finally, we can compare the standardised utilities of principals and agents, to find that $m(\hat u_\nu^1 - u_\nu^1) = \frac 4 3$ and $m(\hat u_\nu^2 - u_\nu^2) = \frac 1 3$. Thus, $\IA^1 = \frac 1 3$ and $\IA^2 = \frac 5 6$. This matches our intuitions, as the \textcolor{orange}{first agent}'s preferences visibly deviate from the \textcolor{red}{first principal}'s (inverting several pure preferences), while the \textcolor{blue}{second agent}'s are a very close match for the \textcolor{teal}{second principal}'s.

Next we consider the \emph{collective alignment} of the agents (the calculation for the principals proceeds in the same way). We have already normalised the utilities. Because the magnitudes $m^1$ and $m^2$ are equal, the welfare-representative proxy utility $\mu^w$ is simply the mean of the agents' standardised utilities:
\begin{figure}[H]
    \centering
    \renewcommand\arraystretch{1.7}
    \begin{tabular}{c|c|c|c}
        \multicolumn{1}{c}{} & \multicolumn{1}{c}{$A$} & \multicolumn{1}{c}{$B$} &\\\cline{2-3}
        $A$ & 0 & $\frac 1 3$ &\phantom{$A$}\\\cline{2-3}
        $B$ & $\frac 1 3$ & -1 &\phantom{$B$}\\\cline{2-3}
    \end{tabular}
\end{figure}
\addtocounter{figure}{-1}
Using $\mu^w$ we see that both $m(\mu^w - u_\nu^1) = \frac 2 3$ and $m(\mu^w - u_\nu^2) = \frac 2 3$. 
Again, because the magnitudes are equal, the weighted average of these distances is also $\frac 2 3$, and so $\CA = \frac 1 3$. 
This again matches our intuition, as none of the Pareto-optimal outcomes $(\textcolor{orange}{0}, \textcolor{blue}{0})$, $(\textcolor{orange}{3}, \textcolor{blue}{-1})$, and $(\textcolor{orange}{-1}, \textcolor{blue}{3})$ comes close to the (impossible) \emph{ideal welfare} outcome $(\textcolor{orange}{3}, \textcolor{blue}{3})$, but the situation is also clearly not entirely adversarial.

Using the quantities above we can also immediately consider the \emph{welfare ratios}. As $\hat{m}^1 = 1$, $\hat{m}^2 = 3$, and $m^1 = m^2 = 3$ then we have $r^1 = 1$ and $r^2 = \frac 1 3$. Since these are unequal, even had the agents been fully aligned, their welfare optimum would not necessarily coincide with that of the principals.

\subsection{Capabilities}

Since the capabilities are a feature of how the game is played (rather than the game itself), they are not determined by the game described in Example \ref{ex:driving}.

We note that the agent welfare regret is a function of capabilities, and we can describe both the principal and the agent welfare regret of each outcome by a simple comparison to the \emph{maximal welfare} ($8$ for agents and $10$ for principals):

\begin{figure}[H]
    \centering
    \begin{subfigure}[b]{0.49\columnwidth}
        \centering
        \renewcommand\arraystretch{1.7}
        \begin{tabular}{c|c|c|c}
            \multicolumn{1}{c}{} & \multicolumn{1}{c}{$A$} & \multicolumn{1}{c}{$B$} &\\\cline{2-3}
            $A$ & 2 & 0 &\phantom{$A$}\\\cline{2-3}
            $B$ & 0 & 8 &\phantom{$B$}\\\cline{2-3}
        \end{tabular}
    \end{subfigure}
    \begin{subfigure}[b]{0.49\columnwidth}
        \centering
        \renewcommand\arraystretch{1.7}
        \begin{tabular}{c|c|c|c}
            \multicolumn{1}{c}{} & \multicolumn{1}{c}{$A$} & \multicolumn{1}{c}{$B$} &\\\cline{2-3}
            $A$ & 5 & 4 &\phantom{$A$}\\\cline{2-3}
            $B$ & 0 & 7 &\phantom{$B$}\\\cline{2-3}
        \end{tabular}
    \end{subfigure}
\end{figure}
\addtocounter{figure}{-1}

With observations of actual play by the agents we could estimate their capabilities (as in Section \ref{sec:inference}). Alternatively, if we know the agents' capabilities, we can describe or draw from the distribution of possible outcomes.

Let us denote by $p^i$ the probability of agent $i$ playing strategy $A$. The agent game $G$ has a single dominant strategy equilibrium at $(A, A)$, i.e. $p^1 = p^2 = 1$, which is the sole NE. As a result, we have $w_{\bm{0}} = 6$, which gives agent welfare regret of $2$ (and ideal regret of $6$) -- the result obtained by agents with $\IC^i = 1$ and $\CC = 0$. In general, the agent welfare is given by $8p^1 + 8p^2 - 10p^1p^2$, meaning that if either $\epsilon^1 \le \frac 1 5$ or $\epsilon^2 \le \frac 1 5$ then the pure $(A, A)$ strategy is also the lowest-welfare $\epsNE$.

As a concrete example, let us suppose that $\IC^1 = 0.9$ (so $\epsilon^1 = 0.1$), $\IC^2 = 0.7$ (so $\epsilon^2 = 0.3$), and $\CC = 0.5$. Then the lowest-welfare $\epsNE$ is the pure $(A, A)$ strategy as above, and so $w_{\eps} = 6$ (from Definition \ref{def:CC}). But $w_{\bm 0} = 6$ as well, so with $\CC = 0.5$, the agents achieve welfare at least $6 + 0.5 \cdot (8 - 6) = 7$. I.e., the agent welfare regret is 1. Solving for $8p^1 + 8p^2 - 10p^1p^2 \geq 7$ leads to the set of strategies played in $G$, which are characterised by $0 \leq p^1 \leq \frac{1}{2}$ or $\frac{7}{8} \leq p^1 \leq 1$, with $p^2 = \frac{8p^1 - 7}{10p^1 - 4}$.

Finally, we can assess what this means for principal welfare, which is given by $3 + 3p^1 + 7p^2 - 8p^1p^2$. Computing the maxima and minima of this function in the two regions of agent strategy space defined by the inequalities above leads to $\wHat(\bm{\sigma}) \in [\frac{15}{2}, \frac{73}{8}] = [7.5, 9.125]$ and $\wHat(\bm{\sigma}) \in [\frac{11}{2}, \frac{3}{25}(53 - \sqrt{34})] \approx [5.5, 5.660]$. As $\wHat_\star = 10$ this leads to a principal welfare regret between 0.875 and 4.5.

\section{Further Discussion}

We conclude with some additional discussion of a more philosophical nature, while drawing connections to the mathematics underlying our measures and related results in the wider literature.

\subsection{Variations on Alignment Metrics}

The alignment measure we introduce has the attractive property that a distance $m(u_\nu - u'_\nu)$ of zero corresponds exactly to identical preference orderings over mixed outcomes, as demonstrated in the proof of Proposition \ref{prop:D2}.
This property relies on the positive definiteness property of the metric, which is guaranteed by any norm, including the weighted $\ell^2$ norm used by EPIC \cite{Gleave2021}, provided the coverage distribution has support on all possible outcomes. 
Without this, the distance can be zero even for differing preference orderings (though preferences over outcomes with support in the coverage distribution will still be in agreement).

Beyond providing this guarantee by having support on all possible outcomes, we might wish to guide this notion of distance to give more weight to more `important' or `plausible' outcomes. Indeed, if using weighted $\ell^2$ with a guide distribution $\Delta$ which is precisely the mixed strategy profile $\bm \sigma$ selected by the players, the distance $m(u_\nu - u'_\nu)$ has a close relationship with the correlation of players' payoffs under that strategy profile:
$$
m(u_\nu - u'_\nu) = \sqrt{2 - 2\rho_{\bm \sigma}(u, u')}
$$
where $\rho_{\bm \sigma}$ is the Pearson correlation under the distribution arising from $\bm \sigma$.
This is appealing in some sense, but it requires invoking a specific strategy profile $\bm \sigma$ which is in general undetermined \emph{a priori}, and it also fails to capture the notion that players might agree or disagree on preferences over alternative, unrealised outcomes.
If we expect a particular solution concept to be played, we might limit the coverage distribution to support only those strategies that arise under that concept, though the weighting is still underdetermined.
\begin{figure*}[h]
    \centering
    \begin{subfigure}[t]{0.32\textwidth}
        \includegraphics[width=\textwidth]{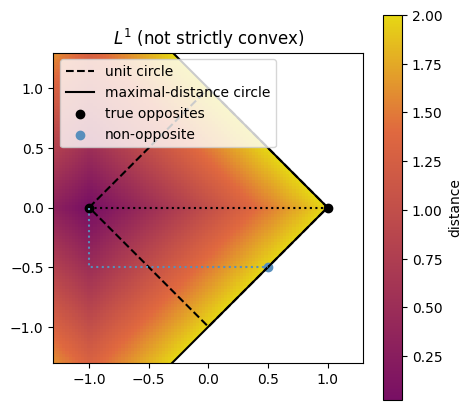}
        \caption{}
    \end{subfigure}
    \begin{subfigure}[t]{0.32\textwidth}
        \includegraphics[width=\textwidth]{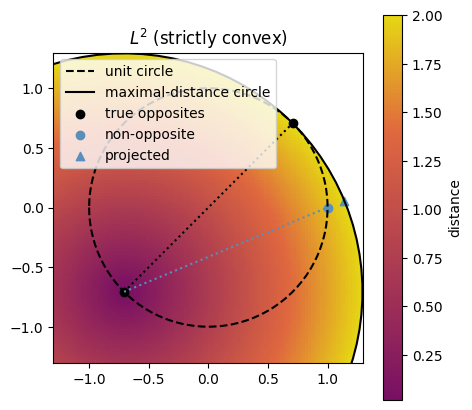}
        \caption{}
    \end{subfigure}
    \begin{subfigure}[t]{0.32\textwidth}
        \includegraphics[width=\textwidth]{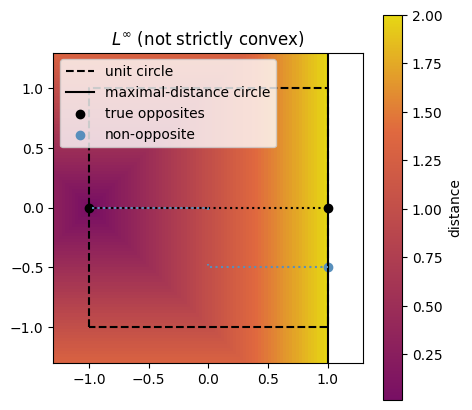}
        \caption{}
    \end{subfigure}
    \caption{(a) $\ell^1$ norm, a \emph{non}-strictly convex norm; (b) $\ell^2$ norm, a strictly convex norm; and (c) $\ell^\infty$ norm, a \emph{non}-strictly convex norm -- distance is represented as an interrupted line so that the length corresponds to the norm distance.}
    \label{fig:strictnorms}
\end{figure*}

\subsection{Relaxing Our Desiderata}
\label{appendix:relaxations}

As hinted in our discussion of desiderata D1-D5, several relaxations are possible, which open up more permissive definitions of the measures we use to satisfy them. For example, recall:
\begin{itemize}
    \item[(D2)] Two players are perfectly individually aligned (misaligned) if and only if they have identical (opposite) preferences. Two or more players are perfectly \emph{collectively} aligned  if and only if they have identical preferences.
\end{itemize}
In Definition \ref{def:normalisation}, we required $c$ to be affine-equivariant, and $m$, used for both normalisation and distance, to be a strictly convex norm.

Weakening any of these requirements means that at least one of our desiderata is not satisfied,\footnote{This is not to say that these requirements are \emph{necessary} for the desiderata, merely that each is necessary given the others. For example if we ask for the normalisation and the distance to each be a norm, then they must be the same norm and strictly convex, but the desiderata can be satisfied by some functional forms not including norms at all.} but some relaxations are nonetheless of interest for different purposes.
It is straightforward to see why positivity and absolute homogeneity are necessary properties of the normalisation, to satisfy any of our desiderata, and why the centering function used to normalise must be equivariant to affine transformations; without any of these properties, equivalent preferences expressed by scaled or shifted utility functions would map to different normalised points. It may be less clear that positive definiteness, the triangle inequality, or strict convexity is required of the distance, or why the distance metric used for calculating (mis)alignment should be equivalent to the one used for normalisation.

\subsubsection{The Enemy of my Enemy is my Friend: Opposite Preferences and Strict Convexity}

Our desiderata ask that opposite preferences correspond to maximal misalignment and vice versa, which makes `most misaligned' an involution -- i.e. `the enemy of my enemy is my friend', interpreted as a statement about preferences (aside from any strategic considerations or coalition formation), holds true.
Without strict convexity or matching norms for normalisation and distance, this part of our alignment desideratum is not in general satisfied.

If we use a general norm $m$ instead of a strictly convex norm, opposite preferences are still guaranteed to correspond to maximal misalignment (the same proof suffices), but there may be \emph{other} utility functions beside the opposite preferences which nevertheless measure as maximally misaligned. One such non-strictly convex norm is the max or $\ell^\infty$ norm, which corresponds to scaling normalised utility functions to a fixed range. Consider utility functions over three values, represented as a vector $u = (u_0, u_1, u_2)$, and let us normalise to the range $[-1, 1]$ using the $\ell^\infty$ norm. Now $(-1, 0, 1)$ has opposite preferences to $(1, 0, -1)$, and indeed their distance is $\max\{2, 0, 2\} = 2$, which is maximal on the $\ell^\infty$-normalised manifold. But the alternative utility $(1, 1, -1)$ also measures as maximally distant, despite not exhibiting entirely opposite preferences! 

Figure \ref{fig:strictnorms} exhibits the difference between a strictly convex norm and some non-strictly convex norms. The dashed contour is the unit circle of the respective norm. The leftmost identified point is the `start' point in each image, and the colour-mapped distance and solid `maximal-distance circle' relates to that point, measured using the respective norm. For $\ell^1$ and $\ell^\infty$, \emph{non}-strictly convex norms, the blue point, non-opposite, measures as equally distant from the start point as the `true' opposite - visually, this corresponds to the `maximal distance circle' coinciding with the `unit circle' at multiple points. In contrast, for $\ell^2$, a strictly-convex norm, the blue point, non-opposite, is strictly closer than the true opposite. A blue triangle marks where on the `maximal-distance circle' the blue point would have to be to match the maximal distance: it lies off the standard unit circle and is thus not the normalisation of any utility function.

\begin{figure}[h]
    \centering
    \begin{subfigure}[t]{0.33\textwidth}
        \includegraphics[width=\textwidth]{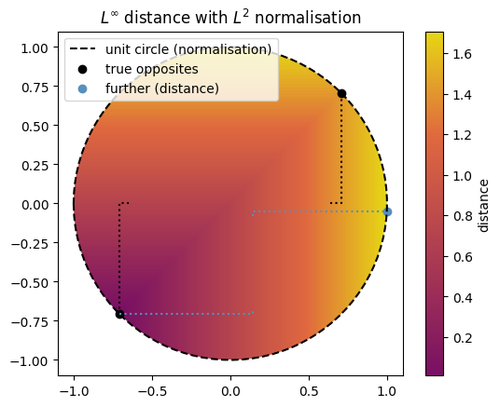}
        \caption{}
    \end{subfigure}

    \begin{subfigure}[t]{0.33\textwidth}
        \includegraphics[width=\textwidth]{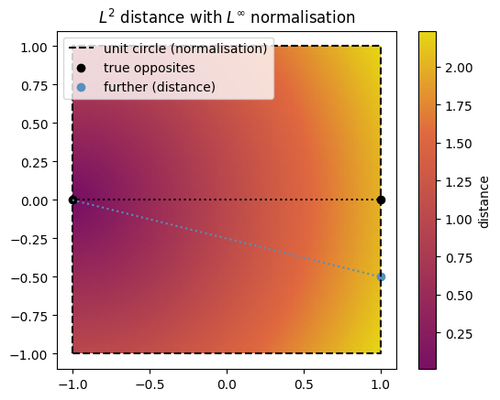}
        \caption{}
    \end{subfigure}
    \caption{(a) $\ell^\infty$ distance with $\ell^2$ normalisation -- distance is represented as an interrupted line so that the length corresponds to the norm distance; and (b) $\ell^2$ distance with $\ell^\infty$ normalisation.}
    \label{fig:matchingnorms}
\end{figure}

If we relax things further, by permitting the distance measure between two normalised utilities to differ from that used to normalise,\footnote{In fact, the distance may be a general metric, and the normalisation need not be sub-additive, i.e. it need not satisfy a triangle inequality.} we still preserve the relationship between identical preferences and zero misalignment distance, but there will in general exist non-opposite utility functions that measure as \emph{even more} distant than their opposites. Consider first a normalisation with $\ell^\infty$ but a distance using $\ell^2$. Now take $u = (-1, 0, 1)$ and notice that its opposite, $(1, 0, -1)$ is at $\ell^2$ distance $2\sqrt 2$, but $(1, 1, -1)$ is `further' away in $\ell^2$ (distance $3$), despite having less inversion of preference! Alternatively consider a normalisation with $\ell^2$, and distance using $\ell^\infty$. Notice that similarly, $\big(\frac{\sqrt 2}{2}, \frac{\sqrt 2}{2}, 0\big)$ has $\ell^\infty$ distance $\sqrt 2$ from its opposite, $\big(\frac{\sqrt 2}{2}, \frac{\sqrt 2}{2}, 0\big)$, while, for example, $(-1, 0, 0)$ is more distant in $\ell^\infty$. 

Figure \ref{fig:matchingnorms} exhibits two combinations of inconsistent norms for normalisation and for measuring the distance between normalised points. In each case, the leftmost black point is the `starting point', and its true opposite is the other black point. The blue point, non-opposite, is actually counted as `more misaligned' under such combinations of inconsistent norms!
Nevertheless, both of these relaxations are able to preserve the other alignment desiderata, namely that identical preferences correspond to $0$ distance and vice versa. Although our overall desiderata regarding opposite preferences require these properties, our regret bounds are still fulfilled by measures with somewhat weaker properties, analogous to EPIC \cite{Gleave2021}, or the EPIC-like distances of \cite{Skalse2023}.

\subsection{Agent Welfare}
\label{appendix:pareto}

Here we discuss a close connection between Pareto concepts of optima and improvements, and welfare-specific concepts of optima and improvements. This connection unlocks an alternative perspective on our characterisation of cooperation, and provides a more agnostic foundation than `agent welfare' while essentially retaining the same analysis.

In the case of the principals, whose outcomes we are most interested in analysing, it should not be surprising that we reach for an aggregate utility, i.e. a social welfare function, with average utilitarian welfare in particular receiving most of our attention in this paper. When we imagine principals being humans, this welfare bears some correspondence to the colloquial sense of `welfare', and even if we do not have precise access to `utility' nor to the `right' way to aggregate it, this operationalisation of welfare is a good starting place for analysis, though the finer points remain debateworthy. 

For agents, on the other hand, especially when those agents are artificial systems, the referents of `utility' and `welfare' in the colloquial and the philosophical sense may be absent entirely, or at the very least inaccessible and unknown. We rescue a technical analysis of `agent utility', up to affine scaling, by appealing to well-known results on the preferences embodied by the agents \cite{MasColell1995,Tewolde2021,vonNeumann1944}.
In the case when agents are also moral patients (for example, other humans), we may wish to analyse both utility and welfare on the basis of some understanding of true wellbeing. But when wellbeing is meaningless or unknown, and our utility functions thus represent embodied preferences with no guarantee of commensurability, it appears difficult at first to sensibly produce meaningful aggregates. Must we taboo `agent welfare'?

Recall that our purposes for agent welfare in particular are to capture aspects of cooperation: collective alignment and capabilities. We consider cooperative success to consist of mutual gains (perhaps compared with some non-cooperative baseline), i.e. Pareto improvements. Pareto improvements and Pareto optima are a much more agnostic embodiment of cooperation, but as we have already seen in Proposition \ref{prop:D3}, for positively-weighted welfares, welfare optima \emph{are} Pareto optima. Various theorists have drawn converse results too: every \emph{weak} Pareto optimum is the optimum of some (non-negative) welfare \cite{Yaari1981}. \cite{Arrow1953} and descendant works -- e.g. \cite{Daniilidis2000} and \cite{Che2020} -- find other close connections between Pareto optima and utilitarian welfare optima. Our initial alignment result (Proposition \ref{prop:D3}) thus characterises the Pareto frontiers of two fully-aligned disjoint sets of players as being identical -- a statement that needs make no mention of welfare. What the welfare weightings provide is an `orientation' (either descriptive or prescriptive) in the space of utility trade-offs, which identifies particular optima from the Pareto frontier.

Bringing these points together, what can we do when we have a relatively clear way to analyse or estimate the welfare of principals, and we want to make richer claims than mere Pareto concepts for them, but not for agents? We might attempt to ensure that agents' utilities are scaled equivalently to their principals', which is to force all `welfare calibration' ratios to $r^i = 1$, but we need not be this prescriptive. The aforementioned theorems on welfare and Pareto concepts mean that for cooperative agent behaviours or dispositions (construed as weight-agnostic mutual gains, i.e. Pareto improvements) there is some reasonable description in terms of welfare weighting with which to quantify it.

We leave unspecified how specific trade-offs might enter into agents' collective behaviour or dispositions, but provide some examples. Consider, for instance, explicit instructions from their principals (whether or not they are fair), or exchange rates between different goods, broadly construed, or marginal production returns from given resources, which can give rise to natural orientations in the space of utility trade-offs (again, whether or not they are fair). Our discussion in terms of ratios and the welfare regret bounds then allow us to bring in concepts of welfare as desired in order to make more specific statements like those we prove in the main text.

\begin{figure*}[h]
    \centering
        \begin{subfigure}{0.24\textwidth}
                \centering
                \includegraphics[width=0.95\textwidth]{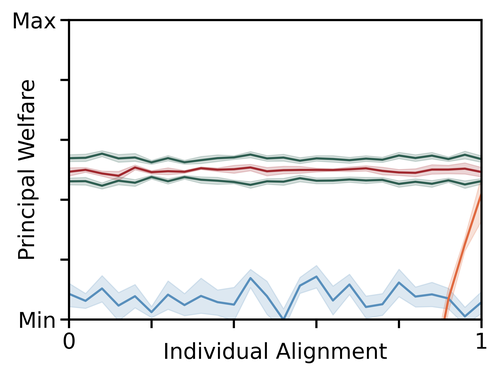}
        \end{subfigure}
        \begin{subfigure}{0.24\textwidth}
                \centering
                \includegraphics[width=0.95\textwidth]{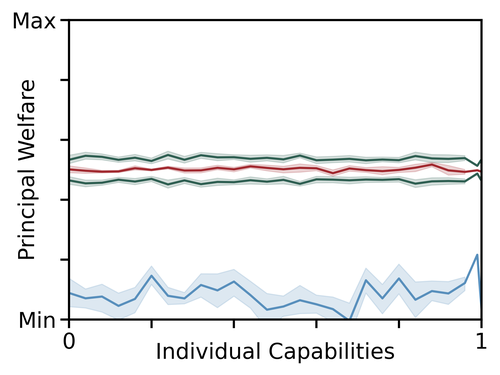}
        \end{subfigure}
        \begin{subfigure}{0.24\textwidth}
                \centering
                \includegraphics[width=0.95\textwidth]{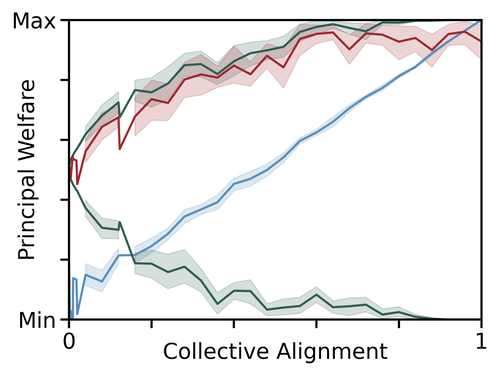}
        \end{subfigure}
        \begin{subfigure}{0.24\textwidth}
                \centering
                \includegraphics[width=0.95\textwidth]{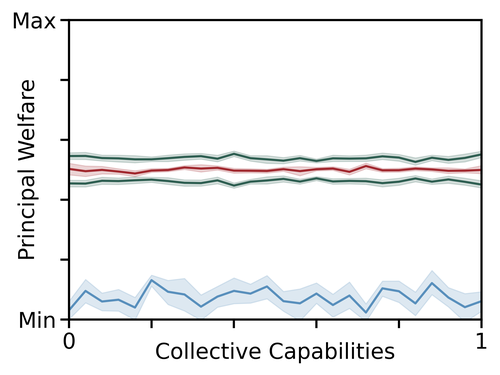}
        \end{subfigure}

        \begin{subfigure}{0.24\textwidth}
                \centering
                \includegraphics[width=0.95\textwidth]{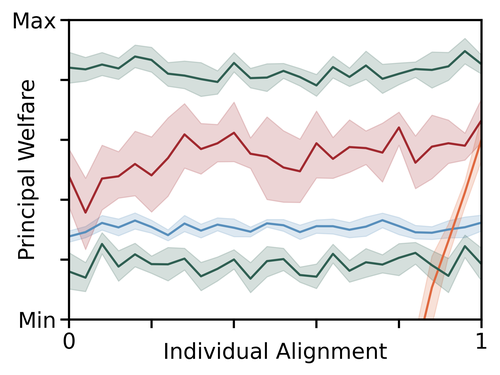}
        \end{subfigure}
        \begin{subfigure}{0.24\textwidth}
                \centering
                \includegraphics[width=0.95\textwidth]{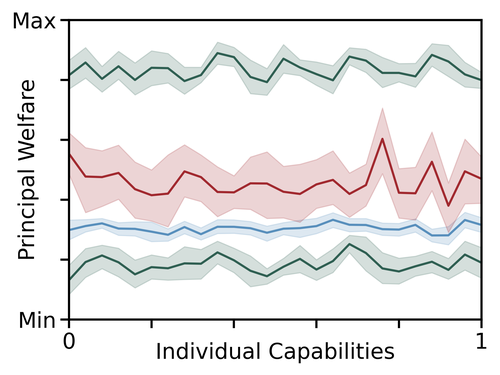}
        \end{subfigure}
        \begin{subfigure}{0.24\textwidth}
                \centering
                \includegraphics[width=0.95\textwidth]{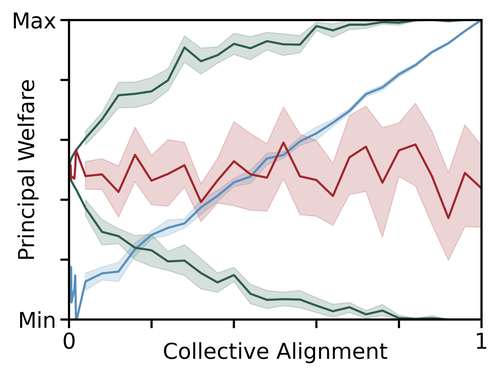}
        \end{subfigure}
        \begin{subfigure}{0.24\textwidth}
                \centering
                \includegraphics[width=0.95\textwidth]{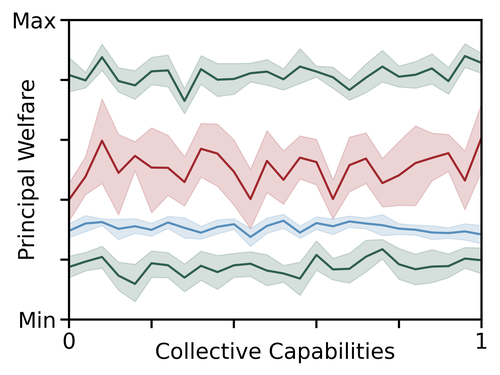}
        \end{subfigure}

        \begin{subfigure}{0.24\textwidth}
                \centering
                \includegraphics[width=0.95\textwidth]{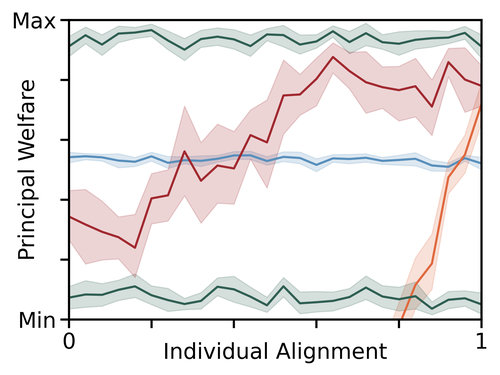}
        \end{subfigure}
        \begin{subfigure}{0.24\textwidth}
                \centering
                \includegraphics[width=0.95\textwidth]{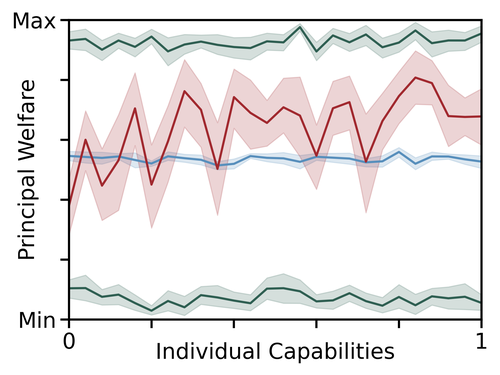}
        \end{subfigure}
        \begin{subfigure}{0.24\textwidth}
                \centering
                \includegraphics[width=0.95\textwidth]{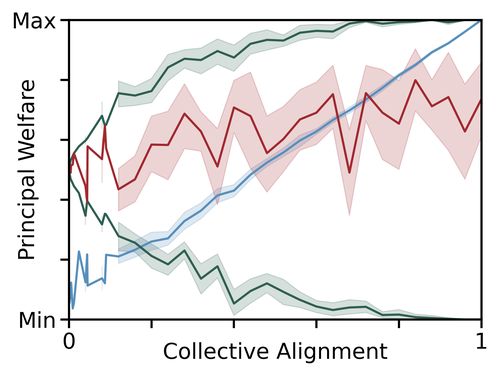}
        \end{subfigure}
        \begin{subfigure}{0.24\textwidth}
                \centering
                \includegraphics[width=0.95\textwidth]{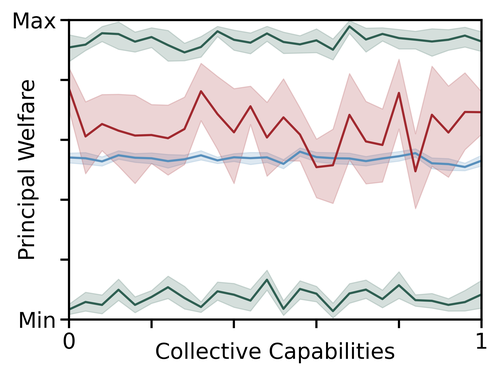}
        \end{subfigure}

        \begin{subfigure}{0.24\textwidth}
                \centering
                \includegraphics[width=0.95\textwidth]{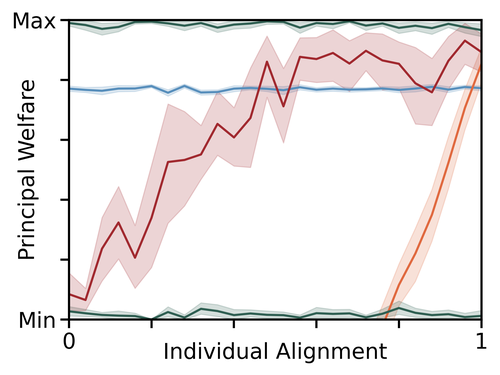}
        \end{subfigure}
        \begin{subfigure}{0.24\textwidth}
                \centering
                \includegraphics[width=0.95\textwidth]{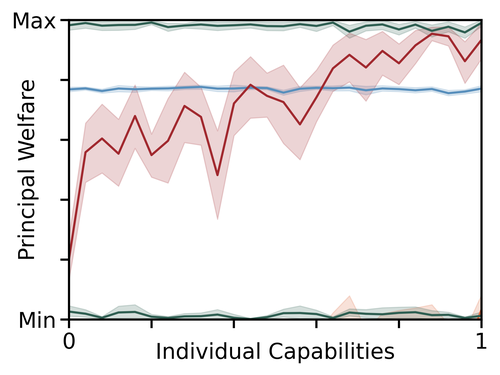}
        \end{subfigure}
        \begin{subfigure}{0.24\textwidth}
                \centering
                \includegraphics[width=0.95\textwidth]{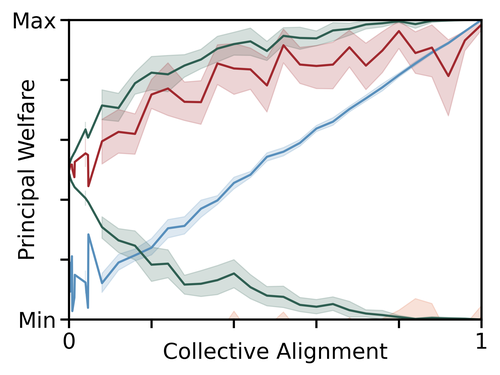}
        \end{subfigure}
        \begin{subfigure}{0.24\textwidth}
                \centering
                \includegraphics[width=0.95\textwidth]{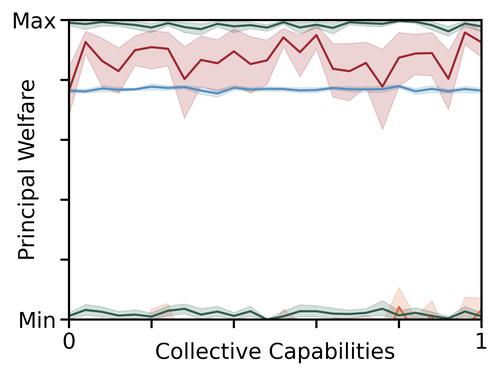}
        \end{subfigure}

        \begin{subfigure}{0.24\textwidth}
                \centering
                \includegraphics[width=0.95\textwidth]{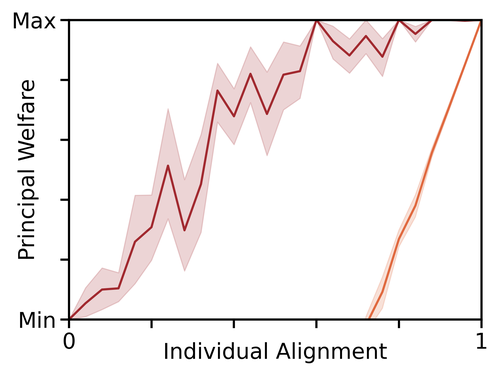}
        \end{subfigure}
        \begin{subfigure}{0.24\textwidth}
                \centering
                \includegraphics[width=0.95\textwidth]{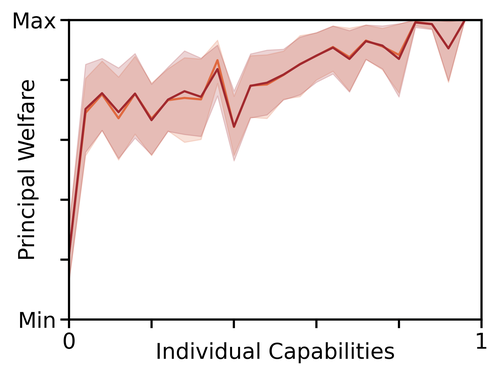}
        \end{subfigure}
        \begin{subfigure}{0.24\textwidth}
                \centering
                \includegraphics[width=0.95\textwidth]{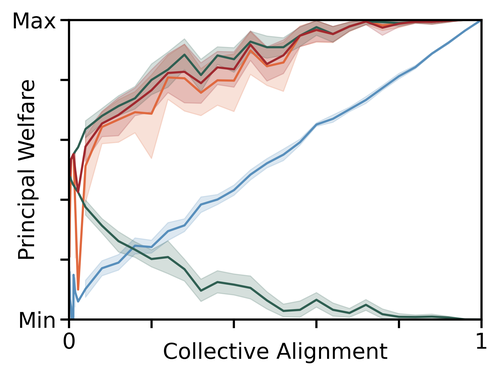}
        \end{subfigure}
        \begin{subfigure}{0.24\textwidth}
                \centering
                \includegraphics[width=0.95\textwidth]{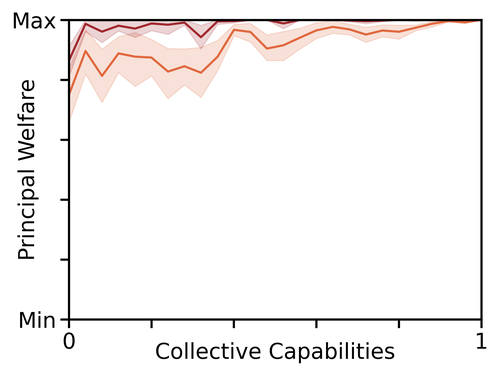}
        \end{subfigure}
    \caption{Each of these plots was generated from games with approximately $10$ outcomes. The key for this figure is identical to the related figure in the main body. Each row corresponds to setting (all of) the fixed variables to one of the five values $v \in \{0,0.25,0.5,0.75,1 \}$, respectively.}
    \label{fig:extra_plots_10}
\end{figure*}

\begin{figure*}[h]
    \centering
    \begin{subfigure}{0.24\textwidth}
            \centering
            \includegraphics[width=0.95\textwidth]{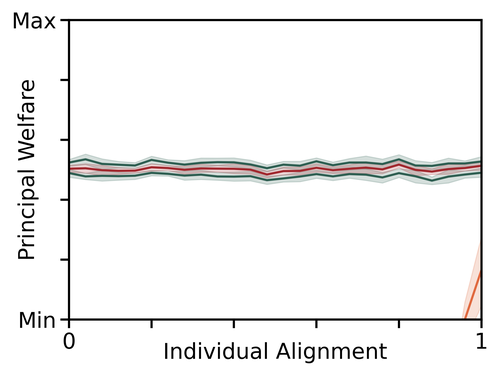}
    \end{subfigure}
    \begin{subfigure}{0.24\textwidth}
            \centering
            \includegraphics[width=0.95\textwidth]{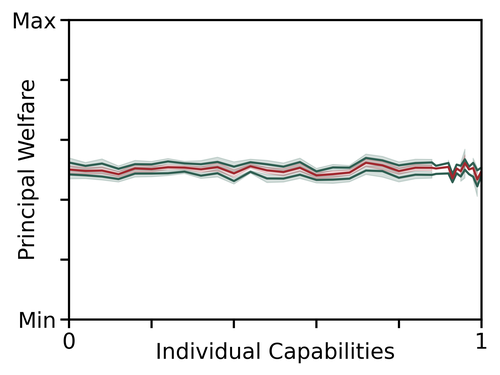}
    \end{subfigure}
    \begin{subfigure}{0.24\textwidth}
            \centering
            \includegraphics[width=0.95\textwidth]{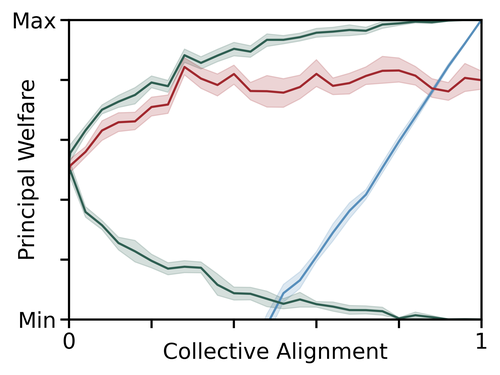}
    \end{subfigure}
    \begin{subfigure}{0.24\textwidth}
            \centering
            \includegraphics[width=0.95\textwidth]{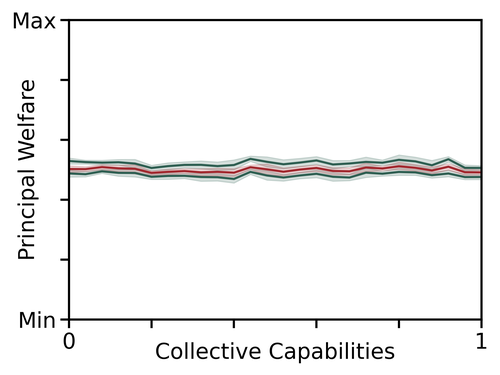}
    \end{subfigure}

    \begin{subfigure}{0.24\textwidth}
            \centering
            \includegraphics[width=0.95\textwidth]{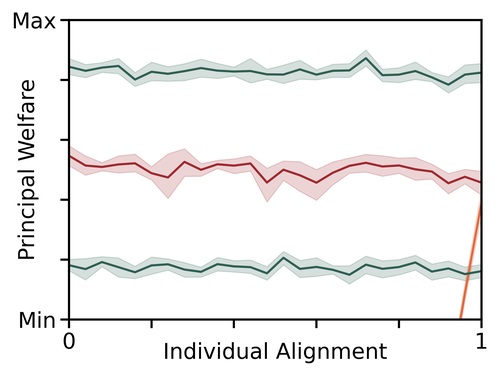}
    \end{subfigure}
    \begin{subfigure}{0.24\textwidth}
            \centering
            \includegraphics[width=0.95\textwidth]{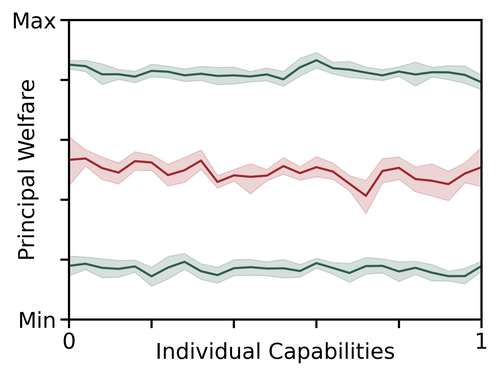}
    \end{subfigure}
    \begin{subfigure}{0.24\textwidth}
            \centering
            \includegraphics[width=0.95\textwidth]{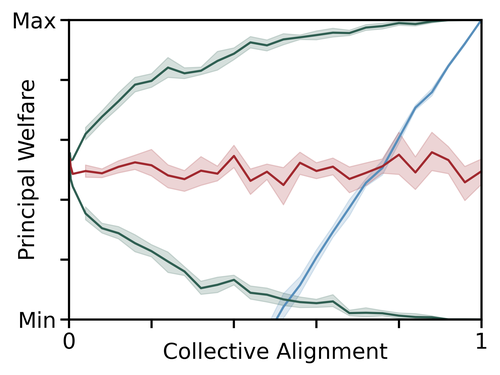}
    \end{subfigure}
    \begin{subfigure}{0.24\textwidth}
            \centering
            \includegraphics[width=0.95\textwidth]{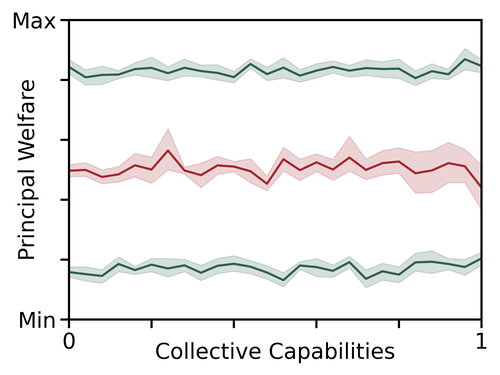}
    \end{subfigure}

    \begin{subfigure}{0.24\textwidth}
            \centering
            \includegraphics[width=0.95\textwidth]{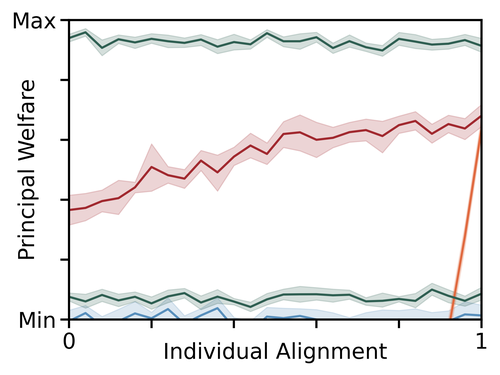}
    \end{subfigure}
    \begin{subfigure}{0.24\textwidth}
            \centering
            \includegraphics[width=0.95\textwidth]{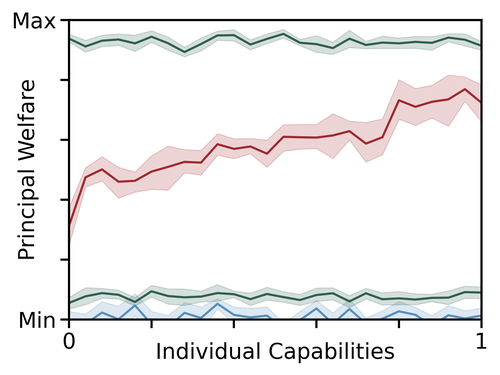}
    \end{subfigure}
    \begin{subfigure}{0.24\textwidth}
            \centering
            \includegraphics[width=0.95\textwidth]{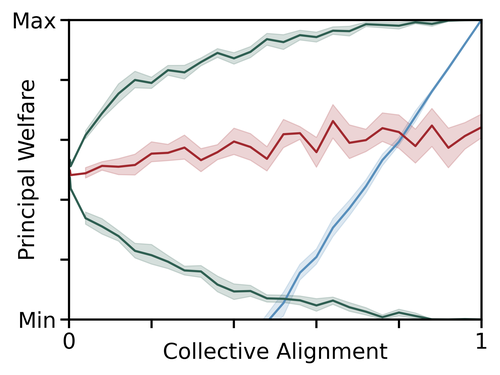}
    \end{subfigure}
    \begin{subfigure}{0.24\textwidth}
            \centering
            \includegraphics[width=0.95\textwidth]{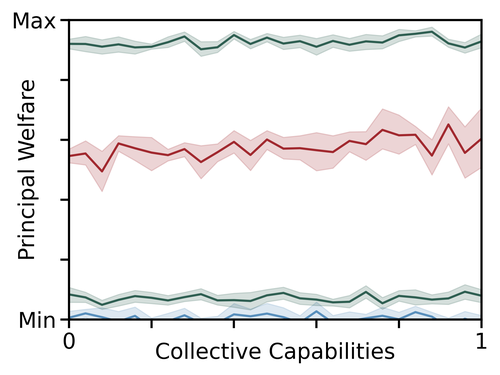}
    \end{subfigure}

    \begin{subfigure}{0.24\textwidth}
            \centering
            \includegraphics[width=0.95\textwidth]{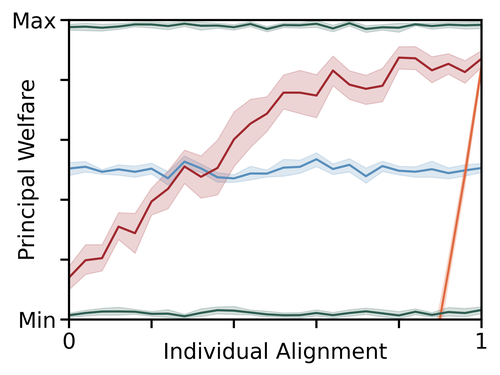}
    \end{subfigure}
    \begin{subfigure}{0.24\textwidth}
            \centering
            \includegraphics[width=0.95\textwidth]{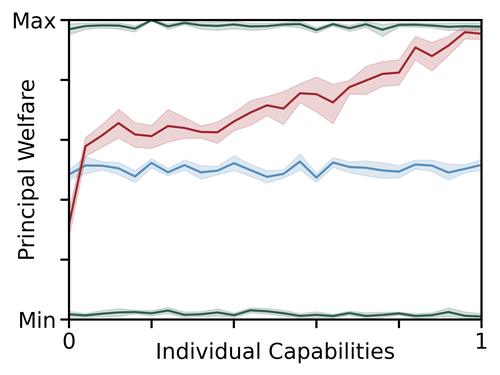}
    \end{subfigure}
    \begin{subfigure}{0.24\textwidth}
            \centering
            \includegraphics[width=0.95\textwidth]{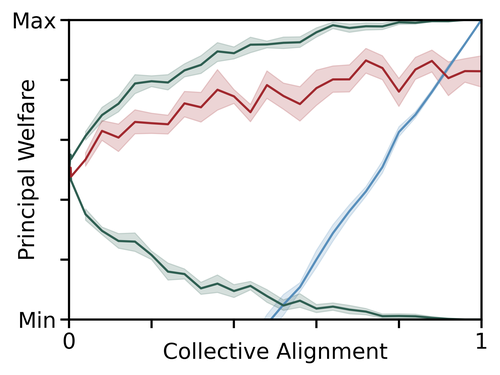}
    \end{subfigure}
    \begin{subfigure}{0.24\textwidth}
            \centering
            \includegraphics[width=0.95\textwidth]{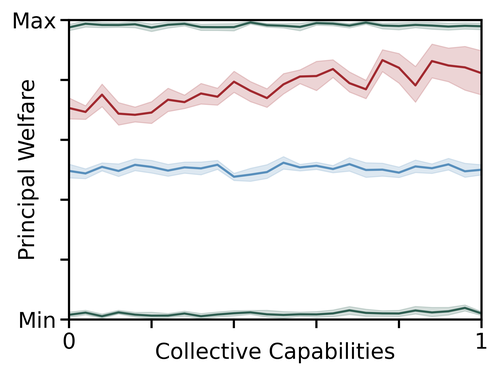}
    \end{subfigure}

    \begin{subfigure}{0.24\textwidth}
            \centering
            \includegraphics[width=0.95\textwidth]{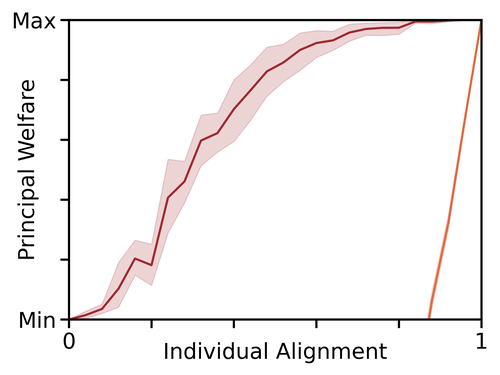}
    \end{subfigure}
    \begin{subfigure}{0.24\textwidth}
            \centering
            \includegraphics[width=0.95\textwidth]{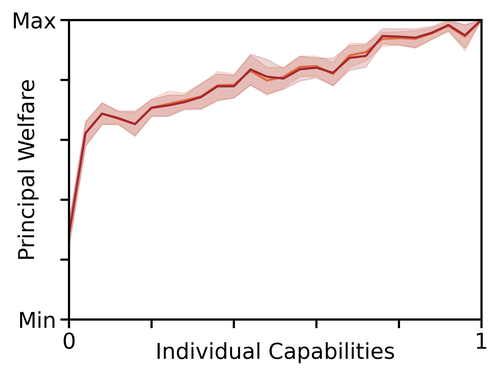}
    \end{subfigure}
    \begin{subfigure}{0.24\textwidth}
            \centering
            \includegraphics[width=0.95\textwidth]{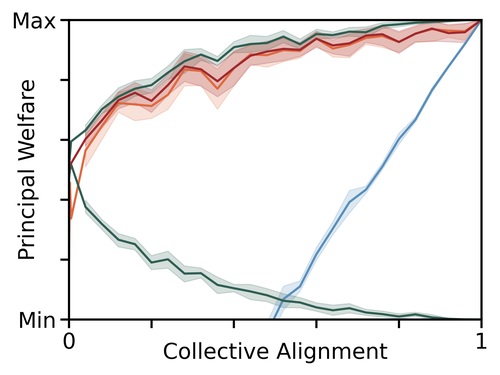}
    \end{subfigure}
    \begin{subfigure}{0.24\textwidth}
            \centering
            \includegraphics[width=0.95\textwidth]{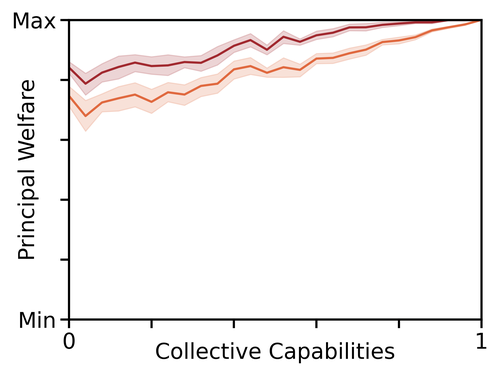}
    \end{subfigure}
    \caption{Each of these plots was generated from games with approximately $10^2$ outcomes. The key for this figure is identical to the related figure in the main body. Each row corresponds to setting (all of) the fixed variables to one of the five values $v \in \{0,0.25,0.5,0.75,1 \}$, respectively.}
    \label{fig:extra_plots_100}
\end{figure*}

\begin{figure*}[h]
    \centering
    \begin{subfigure}{0.24\textwidth}
            \centering
            \includegraphics[width=0.95\textwidth]{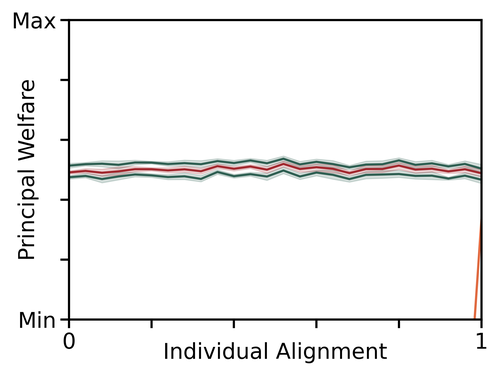}
    \end{subfigure}
    \begin{subfigure}{0.24\textwidth}
            \centering
            \includegraphics[width=0.95\textwidth]{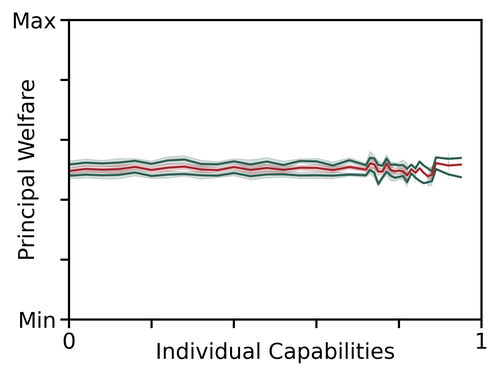}
    \end{subfigure}
    \begin{subfigure}{0.24\textwidth}
            \centering
            \includegraphics[width=0.95\textwidth]{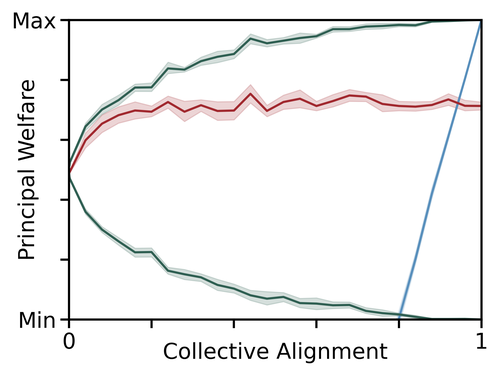}
    \end{subfigure}
    \begin{subfigure}{0.24\textwidth}
            \centering
            \includegraphics[width=0.95\textwidth]{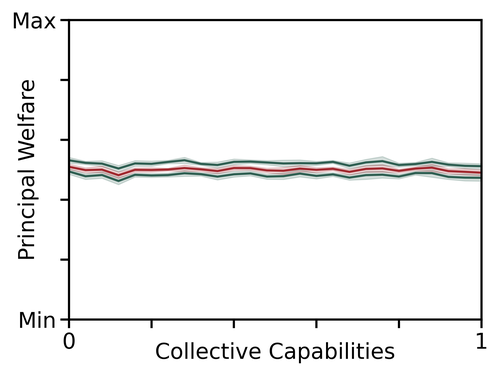}
    \end{subfigure}

    \begin{subfigure}{0.24\textwidth}
            \centering
            \includegraphics[width=0.95\textwidth]{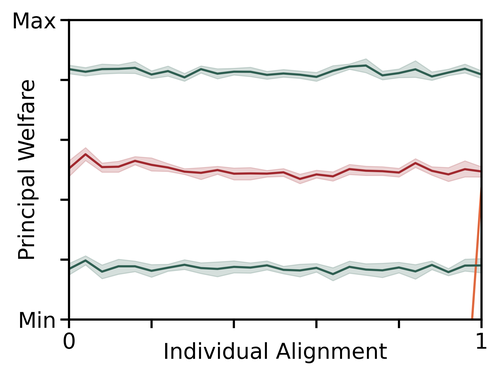}
    \end{subfigure}
    \begin{subfigure}{0.24\textwidth}
            \centering
            \includegraphics[width=0.95\textwidth]{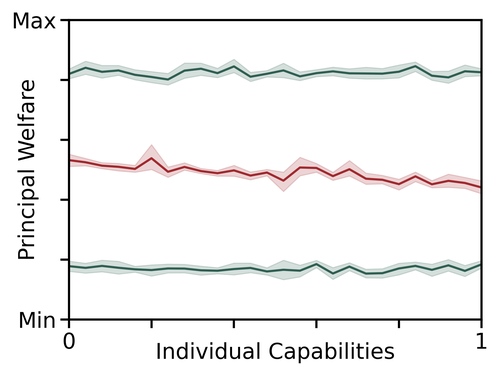}
    \end{subfigure}
    \begin{subfigure}{0.24\textwidth}
            \centering
            \includegraphics[width=0.95\textwidth]{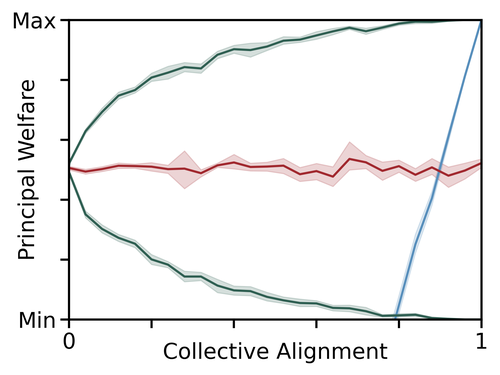}
    \end{subfigure}
    \begin{subfigure}{0.24\textwidth}
            \centering
            \includegraphics[width=0.95\textwidth]{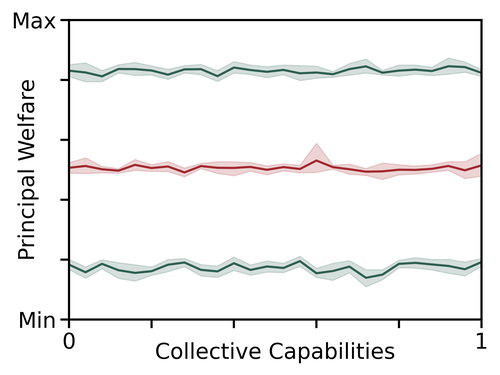}
    \end{subfigure}

    \begin{subfigure}{0.24\textwidth}
            \centering
            \includegraphics[width=0.95\textwidth]{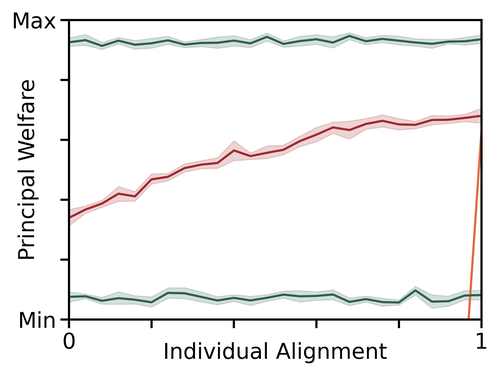}
    \end{subfigure}
    \begin{subfigure}{0.24\textwidth}
            \centering
            \includegraphics[width=0.95\textwidth]{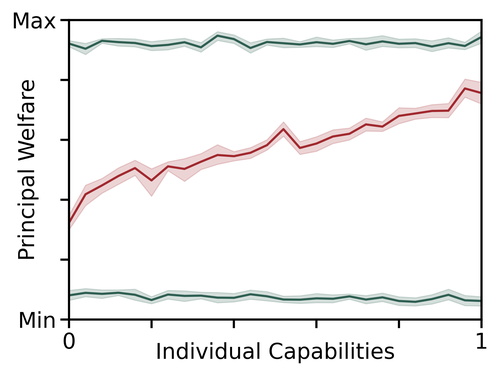}
    \end{subfigure}
    \begin{subfigure}{0.24\textwidth}
            \centering
            \includegraphics[width=0.95\textwidth]{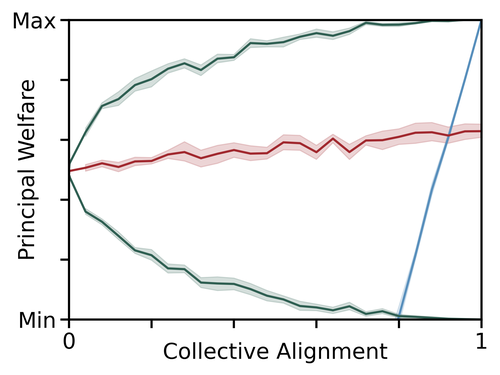}
    \end{subfigure}
    \begin{subfigure}{0.24\textwidth}
            \centering
            \includegraphics[width=0.95\textwidth]{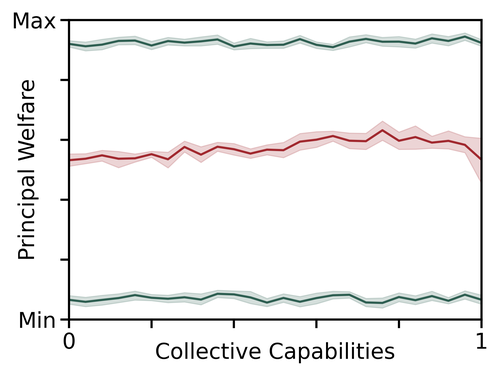}
    \end{subfigure}

    \begin{subfigure}{0.24\textwidth}
            \centering
            \includegraphics[width=0.95\textwidth]{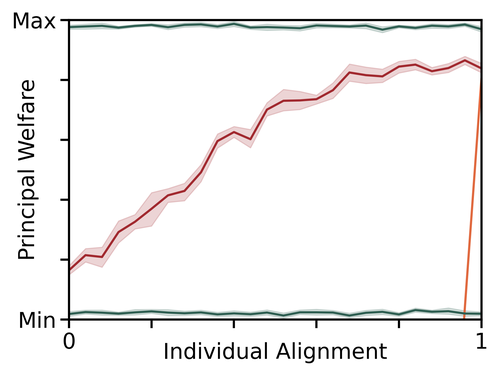}
    \end{subfigure}
    \begin{subfigure}{0.24\textwidth}
            \centering
            \includegraphics[width=0.95\textwidth]{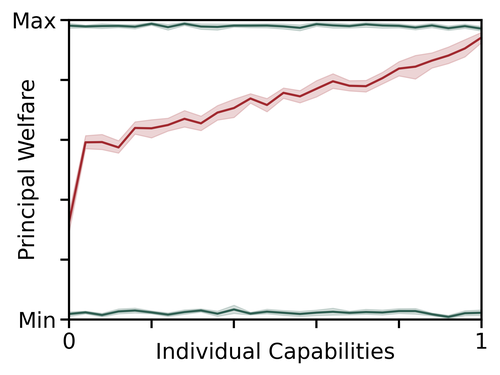}
    \end{subfigure}
    \begin{subfigure}{0.24\textwidth}
            \centering
            \includegraphics[width=0.95\textwidth]{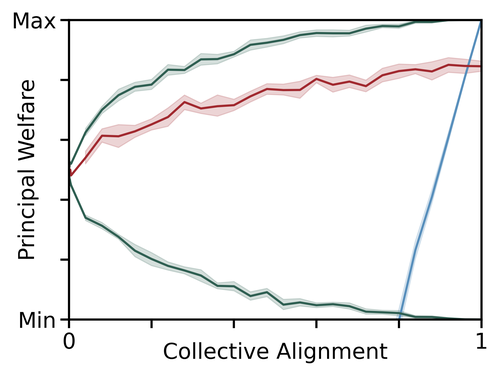}
    \end{subfigure}
    \begin{subfigure}{0.24\textwidth}
            \centering
            \includegraphics[width=0.95\textwidth]{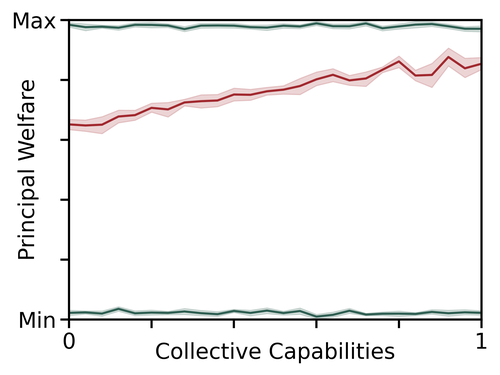}
    \end{subfigure}

    \begin{subfigure}{0.24\textwidth}
            \centering
            \includegraphics[width=0.95\textwidth]{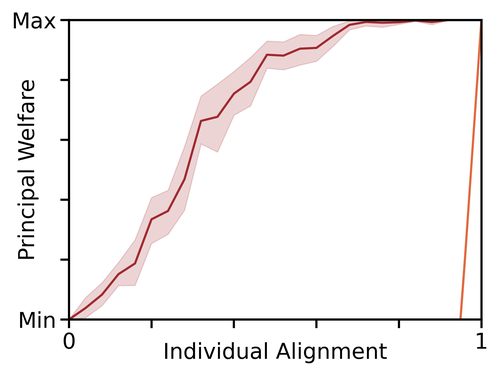}
    \end{subfigure}
    \begin{subfigure}{0.24\textwidth}
            \centering
            \includegraphics[width=0.95\textwidth]{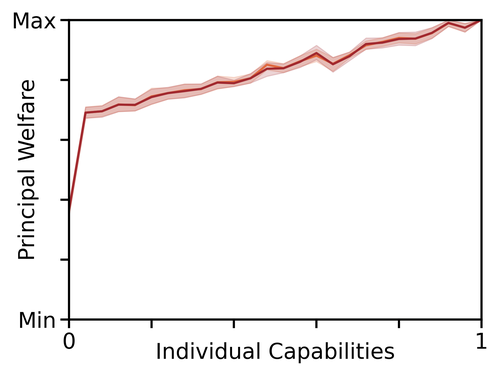}
    \end{subfigure}
    \begin{subfigure}{0.24\textwidth}
            \centering
            \includegraphics[width=0.95\textwidth]{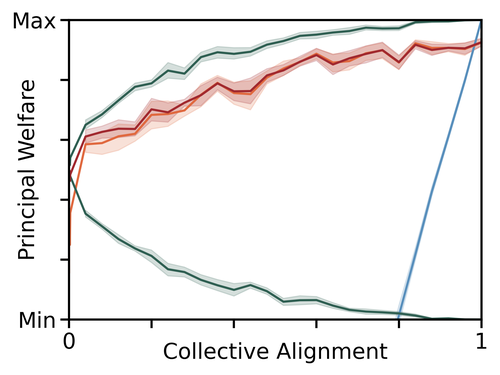}
    \end{subfigure}
    \begin{subfigure}{0.24\textwidth}
            \centering
            \includegraphics[width=0.95\textwidth]{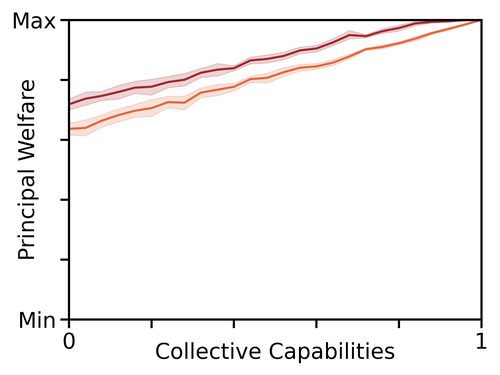}
    \end{subfigure}
    \caption{Each of these plots was generated from games with approximately $10^3$ outcomes. The key for this figure is identical to the related figure in the main body. Each row corresponds to setting (all of) the fixed variables to one of the five values $v \in \{0,0.25,0.5,0.75,1 \}$, respectively.}
    \label{fig:extra_plots_1000}
\end{figure*}

\end{document}